\renewcommand{\textcolor}[2]{#2}
\renewcommand{\color}[1]{}
\newcommand{\add}[1]{{\color{blue}#1}}
\theoremstyle{definition}
\newtheorem{prop}{Proposition}[section]
\newtheorem{thm}[prop]{Theorem}
\newtheorem{cor}[prop]{Corollary}
\newtheorem{lem}[prop]{Lemma}
\newtheorem{conj}[prop]{Conjecture}
\newtheorem{defn}{Definition}[section]
\newtheorem{exa}{Example}[section]
\newtheorem{rem}{Remark}
\DeclareMathOperator{\lcm}{lcm}
\begin{document}

\newcommand{\vA}{{\bf A}}
\newcommand{\vAtilde}{\widetilde{\bf A}}

\newcommand{\vB}{{\bf B}}
\newcommand{\vBtilde}{\widetilde{\bf B}}

\newcommand{\vC}{{\bf C}}
\newcommand{\vD}{{\bf D}}
\newcommand{\vH}{{\bf H}}
\newcommand{\vI}{{\bf I}}

\newcommand{\vY}{{\bf Y}}
\newcommand{\vZ}{{\bf Z}}

\newcommand{\vJ}{{\bf J}}

\newcommand{\vM}{{\bf M}}
\newcommand{\vN}{{\bf N}}
\newcommand{\vU}{{\bf U}}
\newcommand{\vV}{{\bf V}}
\newcommand{\vT}{{\bf T}}
\newcommand{\vR}{{\bf R}}
\newcommand{\vS}{{\bf S}}

\newcommand{\va}{{\bf a}}
\newcommand{\vb}{{\bf b}}
\newcommand{\vc}{{\bf c}}

\newcommand{\ve}{{\bf e}}
\newcommand{\vh}{{\bf h}}
\newcommand{\vp}{{\bf p}}

\newcommand{\vu}{{\bf u}}
\newcommand{\vv}{{\bf v}}
\newcommand{\vw}{{\bf w}}
\newcommand{\vx}{{\bf x}}
\newcommand{\vhx}{{\widehat{\bf x}}}
\newcommand{\vtx}{{\widetilde{\bf x}}}
\newcommand{\vhp}{{\widehat{\bf p}}}
\newcommand{\vy}{{\bf y}}
\newcommand{\vz}{{\bf z}}

\newcommand{\tx}{{\widetilde{x}}}

\newcommand{\vj}{{\bf j}}
\newcommand{\vzero}{{\bf 0}}
\newcommand{\vone}{{\bf 1}}
\newcommand{\vbeta}{{\boldsymbol \beta}}
\newcommand{\vchi}{{\boldsymbol \chi}}

\newcommand{\tA}{\textrm A}
\newcommand{\tB}{\textrm B}
\newcommand{\A}{\mathcal A}
\newcommand{\B}{\mathcal B}
\newcommand{\C}{\mathcal C}
\newcommand{\D}{\mathcal D}
\newcommand{\E}{\mathcal E}
\newcommand{\F}{\mathcal F}
\newcommand{\G}{\mathcal G}
\newcommand{\M}{\mathcal M}
\newcommand{\HH}{\mathcal H}
\newcommand{\PP}{\mathcal P}

\newcommand{\Q}{\mathcal Q}
\newcommand{\Qb}{\bar{\mathcal Q}}
\newcommand{\Db}{{\bar{\Delta}}}

\newcommand{\pQ}{{\bf p}\mathcal Q}
\newcommand{\pQb}{{\bf p}\bar{\mathcal Q}}

\newcommand{\LL}{\mathcal L}
\newcommand{\R}{\mathcal R}
\newcommand{\SSS}{\mathcal S}
\newcommand{\U}{\mathcal U}
\newcommand{\V}{\mathcal V}
\newcommand{\Y}{\mathcal Y}
\newcommand{\Z}{\mathcal Z}

\newcommand{\Pg}{{{\mathcal P}_{\rm gram}}}
\newcommand{\Pgint}{{{\mathcal P}^\circ_{\rm gram}}}
\newcommand{\Pgrc}{{{\mathcal P}_{\rm GRC}}}
\newcommand{\Pgrcint}{{{\mathcal P}^\circ_{\rm GRC}}}
\newcommand{\Pint}{{{\mathcal P}^\circ}}
\newcommand{\Ag}{{\bf A}_{\rm gram}}

\newcommand{\CC}{\mathbb C} 
\newcommand{\RR}{\mathbb R}
\newcommand{\ZZ}{\mathbb Z}
\newcommand{\FF}{\mathbb F}
\newcommand{\KK}{\mathbb K}

\newcommand{\Fnd}{\FF_q^{n^{\otimes d}}}
\newcommand{\Knd}{\KK^{n^{\otimes d}}}

\newcommand{\ceiling}[1]{\left\lceil{#1}\right\rceil}
\newcommand{\floor}[1]{\left\lfloor{#1}\right\rfloor}
\newcommand{\bbracket}[1]{\left\llbracket{#1}\right\rrbracket}

\newcommand{\inprod}[1]{\left\langle{#1}\right \rangle}


\newcommand{\beas}{\begin{eqnarray*}} 
\newcommand{\eeas}{\end{eqnarray*}} 

\newcommand{\bm}[1]{{\mbox{\boldmath $#1$}}} 

\newcommand{\sizeof}[1]{\left\lvert{#1}\right\rvert}
\newcommand{\wt}{{\rm wt}} 
\newcommand{\supp}{{\rm supp}} 
\newcommand{\dg}{d_{\rm gram}} 
\newcommand{\da}{d_{\rm asym}} 
\newcommand{\dist}{{\rm dist}} 
\newcommand{\ssyn}{s_{\rm syn}}
\newcommand{\sseq}{s_{\rm seq}}
\newcommand{\nullplus}{{\rm Null}_{>\vzero}}

\newcommand{\tworow}[2]{\genfrac{}{}{0pt}{}{#1}{#2}}
\newcommand{\qbinom}[2]{\left[ {#1}\atop{#2}\right]_q}

\newcommand{\Lovasz}{Lov\'{a}sz }
\newcommand{\etal}{\emph{et al.}}

\newcommand{\todo}{{\color{red} (TODO) }}


\title{Codes for DNA Sequence Profiles}

\author{Han Mao Kiah\IEEEauthorrefmark{1}, Gregory J.~Puleo\IEEEauthorrefmark{2}, and Olgica Milenkovic\IEEEauthorrefmark{2}\\
\thanks{This work was supported in part by the NSF STC Class 2010 CCF 0939370 grant and the Strategic Research Initiative (SRI) Grant conferred by the University of Illinois, Urbana-Champaign. Research of the second author was supported by the IC Postdoctoral Research Fellowship.
This work has been appeared in part in ITW 2015 \cite{Kiah.etal:2015ITW} and ISIT 2015 \cite{Kiah.etal:2015ISIT}.} 
\IEEEauthorblockA{\normalsize
	 \IEEEauthorrefmark{1}School of Physical and Mathematical Sciences, Nanyang Technological University, Singapore\\
	 \IEEEauthorrefmark{2}Coordinated Science Laboratory, University of Illinois, Urbana-Champaign, USA\\
	 Emails: hmkiah@ntu.edu.sg, puleo@illinois.edu, milenkov@illinois.edu}  
}


\maketitle

\begin{abstract}
We consider the problem of storing and retrieving information from synthetic DNA media.
The mathematical basis of the problem is the construction and design of sequences that may be discriminated 
based on {a collection of their substrings} observed through a noisy channel.
We explain the connection between the sequence reconstruction problem and the problem of DNA synthesis and sequencing, and introduce the notion of a DNA storage channel. We analyze the number of sequence equivalence classes under the channel mapping and propose new asymmetric coding techniques to combat the effects of synthesis and sequencing noise. In our analysis, we make use of restricted de Bruijn graphs and Ehrhart theory for rational polytopes.
\end{abstract}

%


\section{Introduction}

Reconstructing sequences based on partial information about their subsequences, substrings, or composition is an important problem arising in channel synchronization systems, phylogenomics, genomics, and proteomic sequencing~\cite{kannan2005more,acharya2010reconstructing,acharya2014quadratic}. With the recent development of archival DNA-based storage devices~\cite{Church.etal:2012,Goldman.etal:2013} and rewritable, random-access storage media~\cite{Ma.etal:pending}, a new family of reconstruction questions has emerged regarding how to \emph{design sequences} which can be easily and accurately reconstructed based on their substrings, in the presence of {write and read errors.} 
The write process in DNA-based storage systems is DNA synthesis, 
{a biochemical process of creating moderately long DNA strings via the use of columns or microarrays~\cite{yazdi2015dna}. Synthesis involves sequential inclusion of bases into a growing string, and is accompanied by chemical error correction. The read process in DNA-based storage is DNA sequencing, where classical decoding is replaced by a combination of assembly and error-control decoding. DNA sequencing operates by creating many copies of the same string and then fragmenting them into a collection of substrings (reads) of approximately the same length, $\ell$, so as to produce a large number of overlapping ``reads''. The larger the number of sequence replicas and reads, the larger the \emph{coverage} of the sequence -- the average number of times a symbol in the sequence is contained in a read. 
Assembly aims to reconstruct the original sequence by stitching the overlapping fragments together; the assembly procedure is NP-hard under most formulations~\cite{Medvedev.etal:2007}. Nevertheless, practical approximation algorithms based on Eulerian paths in de Bruijn graphs have shown to offer good reconstruction performance under high-coverage~\cite{Compeau.etal:2011}. Due to the high cost of synthesis, most current DNA storage systems do not use sequence lengths $n$ exceeding several thousands nucleotides (nts). Synthesis error rates range between $0.1$ and $3\%$ depending on the cost of the technology~\cite{wan2014error, yazdi2015dna}, and the errors are predominantly substitution errors. The read length $\ell$ ranges anywhere between $100$ to $1500$ nts. Substrings of short length may be sequenced with an error-rate not exceeding $1\%$; long substrings exhibit much higher sequencing error-rates, often as high as $15\%$~\cite{quail2012tale}. In the former case, the dominant error events are substitution errors~\cite{ross2013characterizing}. Furthermore, due to non-uniform fragmentation, some proper substrings are not available for reading, leaving what is known as coverage gaps in the original message.}

\textcolor{blue}{More formally, to store and retrieve information in DNA one starts with a desired information sequence encoded into a sequence $\vx \in \mathcal{D}=\{{A,T,G,C\}}^n$, where $\mathcal{D}$ denotes the nucleotide alphabet. The \emph{DNA storage channel}, shown in Fig.~\ref{fig:DNAstorage} and formally defined in Section~\ref{sec:profile}, models a physical process which takes as its input the sequence $\vx$ of length $n$, and synthesizes (writes) it physically into a macromolecule string, denoted by $\vtx$. Hence, the sequence both encodes information and serves as a storage media. Ideally, one would like to synthesize $\vx$ without errors, which is not possible in practice. Hence, the sequence $\vtx$ is a distorted version of $\vx$ in so far as it contains $\ssyn$ substitution errors, where $\ssyn$ is a suitably chosen integer value. When a user desires to retrieve the information, the process proceeds to amplify the string $\vtx$ and then fragments all copies of the string, resulting in a highly redundant mix of reads. This mix may contain multiple copies of the same substring, say 
$\vtx_1=\tx_1\cdots\tx_\ell$
as well as multiple copies of another substring 
$\vtx_k=\tx_k\cdots \tx_{k+\ell-1}$,  with $k\neq1$ identical to 
$\vtx_{1}$ (i.e., such that $\vtx_{1}=\vtx_{k}$).
Since the concentration of all (not necessarily) distinct substrings within the mix is usually assumed to be uniform, one may normalize the concentration of all subsequences by the concentration of the least abundant substring. As a result, one actually observes substring concentrations reflecting the frequency of the substrings in \emph{one copy} of $\vtx$. Hence, in the DNA storage channel we model the output of the fragmentation block as an \emph{unordered subset of substrings (reads)} of the sequence $\vtx$ of length $\ell$, with $\ell<n$, 
denoted by $\widetilde{\LL}(\vx)=\{\vtx_{i_1},\ldots,\vtx_{i_f}\},$ where $i_1 < i_2 <\ldots < i_f$, and where $f\leq n-\ell+1$ is the number of reads. 
As an example, both $\vtx_{1}$ and $\vtx_k$ may be observed and hence included in the unordered set of substrings, or only one or neither. In the latter two cases, we say that the substring(s) were not covered during fragmentation.} 

\textcolor{blue}{Each of the observed substrings is allowed to have additional substitution errors, due to the next step of sequencing or reading of the substrings. Substrings of short length may be sequenced with an error-rate not exceeding $1\%$; long substrings exhibit much higher sequencing error-rates, often as high as $15\%$~\cite{quail2012tale}. For simplicity, we assume that the total number of sequencing errors per substring equals $\sseq$. The set of substrings at the output of the DNA storage channel is denoted by the multiset
$\widehat{\LL}(\vx)=\{\vhx_{i_1},\ldots,\vhx_{i_f}\}$, 
and each $\vhx_{i}$ is a substitution-distorted version of $\vtx_{i}$. 
The information contained in $\widehat{\LL}(\vx)$ may be summarized by its multiplicity vector, 
also called \emph{output profile vector} $\vhp(\vx)$, which is also our channel output. 
The profile vector is of length $4^{\ell}$, and each entry in the vector corresponds to 
exactly one of the $\ell$-length strings over $\cal D$. The ordering of the $\ell$-strings is assumed to be lexicographical. Furthermore, the $j$th entry in  $\vhp(\vx)$ equals the number of times the $j$-th string in the lexicographical order was observed in $\widehat{\LL}(\vx)=\{\vhx_{i_1},\ldots,\vhx_{i_f}\}$. 
Hence, for each $1 \leq j \leq 4^\ell$, the $j$th entry in $\vhp(\vx)$ is a value between 0 and $n-\ell+1$.}


{The main contributions of the paper are as follows. The first contribution is to introduce the DNA storage channel and {\em model the read process (sequencing)} through the use of {\em profile vectors}.
A profile vector of a sequence enumerates all substrings of the sequence, and 
{profile vectors form a pseudometric space amenable for coding theoretic analysis\footnote{A pseudometric space is a generalization of a metric space in which one allows the distance between two distinct points to be zero.}.} The second contribution of the paper is to \emph{introduce a new family of codes} for three classes of errors arising in the DNA storage channel due to synthesis, lack of coverage and sequencing, and show that they may be characterized by {\em asymmetric errors} studied in classical coding theory. Our third contribution is a code design technique which makes use of
(a) codewords with different profile vectors or profile vectors at sufficiently large distance from each other; and 
(b) codewords with $\ell$-substrings of high biochemical stability which are also resilient to errors. For this purpose, we consider a number of {\em codeword constraints} known to influence the performance of both the synthesis and sequencing systems, one of which we termed the \emph{balanced content constraint}.}

{For the case when we allow arbitrary $\ell$-substrings, the problem of enumerating all valid profile vectors 
was previously addressed by Jacquet \etal{} \cite{Jacquet.etal:2012} in the context of ``Markov types''. 
However, the method of Jacquet \etal{} addressed Markov types which lead to substrings of length $\ell=2$ only. Furthermore, the Markov type approach does not extend to the case of enumeration of profiles with specific $\ell$-substring constraints or profiles at sufficiently large distance from each other, and hence the proof techniques used by the authors of~\cite{Jacquet.etal:2012} and those pursued in this work are substantially different.}

We cast our more general enumeration and code design question as 
a problem of {\em enumerating integer points in a rational polytope}
and use tools from {\em Ehrhart theory} to provide estimates of the sizes of the underlying codes.
We also describe two decoding procedures for sequence profiles that combine graph theoretical principles
and sequencing by hybridization methods.

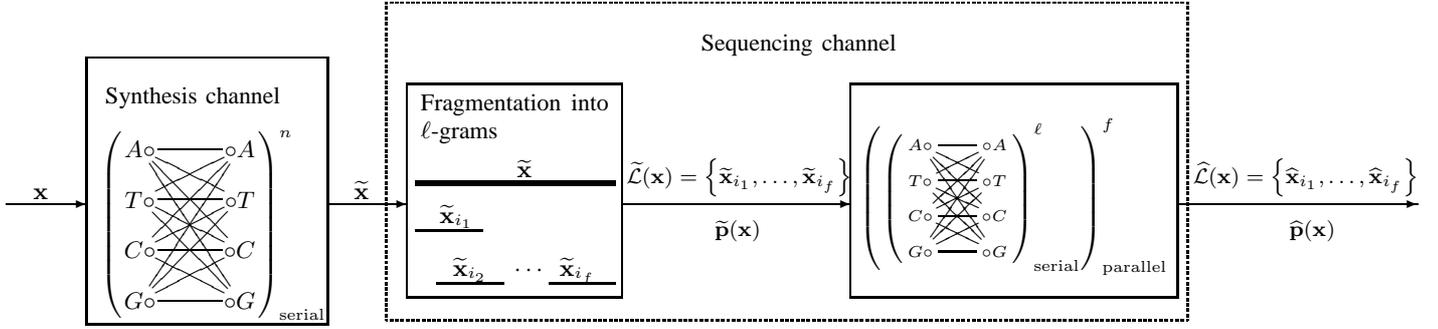
\begin{figure*}[t]
\footnotesize
\begin{center}
\begin{picture}(480,100)(0,0)

\linethickness{0.5pt}
\put(10,-15){\framebox(90,100){}}
\put(17,70){\txt{Synthesis channel}}
\put(12,20){\mbox{
$\left(
\vcenter{\xymatrix @R=2mm{
*=0++[o]{A\circ} \ar@{-}[r]\ar@{-}[dr]\ar@{-}[ddr]\ar@{-}[dddr]&*=0++[o]{\circ A}\\
*=0++[o]{T\circ} \ar@{-}[r]\ar@{-}[ur]\ar@{-}[dr]\ar@{-}[ddr]&*=0++[o]{\circ T}\\
*=0++[o]{C\circ} \ar@{-}[r]\ar@{-}[ur]\ar@{-}[uur]\ar@{-}[dr]&*=0++[o]{\circ C}\\
*=0++[o]{G\circ} \ar@{-}[r]\ar@{-}[ur]\ar@{-}[uur]\ar@{-}[uuur]&*=0++[o]{\circ G}}}
\right)^n_{\rm serial}$
}}

\put(130,-5){\framebox(80,80){}}
\put(135,60){\parbox[c][4em][c]{70px}{Fragmentation into $\ell$-grams}}
\linethickness{2pt}
\put(133,38){\line(1,0){75}}
\put(171,40){\mbox{$\vtx$}}
\linethickness{0.5pt}
\put(133,20){\line(1,0){25}}
\put(141,0){\line(1,0){25}}
\put(183,0){\line(1,0){25}}
\put(142,23){\mbox{$\vtx_{i_1}$}}
\put(147,3){\mbox{$\vtx_{i_2}$}}
\put(187,3){\mbox{$\vtx_{i_f}$}}
\put(170,3){\mbox{$\cdots$}}
\put(240,90){\txt{Sequencing channel}}
\put(296,-5){\framebox(122,80){}}
\put(122,-14){\dashbox(300,120){}}
\put(296,30){\mbox{
$\left(\left(
\def\objectstyle{\scriptstyle}
\vcenter{\xymatrix @R=0mm@C=5mm{
*=0++[o]{A\circ} \ar@{-}[r]\ar@{-}[dr]\ar@{-}[ddr]\ar@{-}[dddr]&*=0++[o]{\circ A}\\
*=0++[o]{T\circ} \ar@{-}[r]\ar@{-}[ur]\ar@{-}[dr]\ar@{-}[ddr]&*=0++[o]{\circ T}\\
*=0++[o]{C\circ} \ar@{-}[r]\ar@{-}[ur]\ar@{-}[uur]\ar@{-}[dr]&*=0++[o]{\circ C}\\
*=0++[o]{G\circ} \ar@{-}[r]\ar@{-}[ur]\ar@{-}[uur]\ar@{-}[uuur]&*=0++[o]{\circ G}}}
\right)^{\ell}_{\rm serial}\right)^{f}_{\rm parallel}$
}}
\put(-20,30){\vector(1,0){30}}
\put(-10,32){\mbox{$\vx$}}
\put(100,30){\vector(1,0){30}}
\put(110,32){\mbox{$\vtx$}}

\put(210,30){\vector(1,0){86}}
\put(212,38){\mbox{\scriptsize$\widetilde{\LL}(\vx)=\left\{\vtx_{i_1},\ldots, \vtx_{i_f}\right\}$}}
\put(245,18){\mbox{\scriptsize$\widetilde{\vp}(\vx)$}}

\put(418,30){\vector(1,0){90}}
\put(424,38){\mbox{\scriptsize$\widehat{\LL}(\vx)=\left\{\vhx_{i_1},\ldots, \vhx_{i_f}\right\}$}}
\put(460,18){\mbox{\scriptsize${\vhp}(\vx)$}}

%

%

\end{picture}
\end{center}
\caption{The DNA Storage Channel. Information is encoded in a DNA sequence $\vx$ which is synthesized with potential errors. The output of the synthesis process is $\vtx$. During readout, the sequence $\vtx$ is read through the sequencing channel, which fragments the sequence and possibly perturbs the fragments via substitution errors. The output of the channel is a set of DNA fragments, along with their frequency count, the multiplicity vector of $\widehat{\LL}(\vx)$.}
\label{fig:DNAstorage}
\end{figure*}
{\section{Profile Vectors and the DNA Storage Channel}}\label{sec:profile}

%

We start this section by defining the relevant terminology and the DNA storage channel.

Let $\llbracket q\rrbracket$ denote the set of integers $\{0,1,2,\ldots, q-1\}$ and 
consider a word $\vx$ of length $n$ over $\llbracket q\rrbracket$. 
Suppose that $\ell<n$. 
An {\em $\ell$-gram} or a {\em substring} of $\vx$ of length $\ell$ is 
a subsequence of $\vx$ with $\ell$ consecutive indices.
Let $\vp(\vx;q,\ell)$ denote the ($\ell$-gram) {\em profile vector} of length $q^\ell$, indexed by all words of $\llbracket q\rrbracket^{\ell}$ ordered lexicographically. We refer to the $j$-th word in this lexicographic order by $\vz(j)$. 
In the profile vector, an entry indexed by $\vz$ gives the number of occurrences of $\vz$ as an $\ell$-gram of $\vx$.
For example, $\vp(0000;2,2)=(3,0,0,0)$, while $\vp(0101;2,2)=(0,2,1,0)$. Observe that for any $\vx\in \llbracket q\rrbracket^n$, 
the sum of entries in $\vp(\vx;q,\ell)$ equals $(n-\ell+1)$.

{Before we proceed with a formal definition of the DNA storage channel, we introduce the system errors that characterize such a channel. To this end, suppose that the data of interest is encoded by a vector  $\vx\in \llbracket q\rrbracket^n$ and 
let $\vhp(\vx)$ be the output profile of the DNA channel, as indicated in Fig.~\ref{fig:DNAstorage}. The profile error vector, $\ve\triangleq\vp(\vx;q,\ell)-\vhp(\vx)$ arises due to the following error events.} 
\begin{enumerate}[(i)]
\item {\bf Substitution errors due to synthesis}. Here, certain symbols in the word $\vx$ may be changed as a result of erroneous synthesis. If one symbol is changed, in the perfect coverage case, $\ell$ $\ell$-grams will decrease their counts by one and $\ell$ $\ell$-grams will increase their counts by one. 
Hence, the error resulting from $\ssyn$ substitutions equals $\ve=\ve_--\ve_+$, 
where $\ve_+,\ve_-\ge \vzero$, and both vectors have weight $\ssyn \, \ell$.

{\item {\bf Coverage errors}. Such errors occur when not all $\ell$-grams are observed during fragmentation and subsequently sequenced.
For example, suppose that $\vx=00000$, and that $\vhp(\vx)$ is the channel output $3$-gram profile vector.
The coverage loss of one $3$-gram results in the count of $000$ in $\vhp(\vx)$ to be two instead of three.
Note that {imperfect coverage} of $t$ $\ell$-grams results in 
an asymmetric error $\ve \geq \vzero$ of weight $t$.}

\item {\bf ${\mathbf \ell}$-gram substitution errors due to sequencing}. Here, certain symbols in each fragment $\vtx_i$ may be changed during the sequencing process. Suppose the $\ell$-gram $\vtx_i$ is altered to $\vhx_i$ , $\vhx_i\ne\vtx_i$. Then the count for $\vtx_i$ will decrease by one while the count for $\vhx_i$ will increase by one.
Hence, the error resulting from $\sseq$ $\ell$-gram substitutions equals $\ve=\ve_--\ve_+$, 
where $\ve_+,\ve_-\ge \vzero$, and both vectors have weight $\sseq$.

\end{enumerate}

{\color{blue}
For $\vx$, $\vy \in\bbracket{q}^n$, define the usual {\em Hamming distance} 
between a pair of words to be the number of coordinates where the two words differ.
For $\vu$, $\vu \in\ZZ^N$, we define the {\em $L_1$-distance} between $\vu$ and $\vv$
to be the sum $\sum_{i=1}^N |u_i-v_i|$
and the {\em $L_1$-weight} of $\vu$ to be the $L_1$-distance between $\vu$ and $\vzero$.

\begin{defn} The DNA storage channel with parameters $(n,q,\ell;t,\ssyn,\sseq)$ is a channel which takes 
as its input a vector $\vx \in \llbracket q\rrbracket^{n}$ and outputs a vector $\vhp(\vx) \in \ZZ^{q^\ell}$ 
such that there exists a $\vtx \in \llbracket q\rrbracket^{n}$ and a vector $\widetilde{\vp}(\vx)\in \ZZ^{q^\ell}$ with
the following properties:
\begin{enumerate}[(i)]
\item the Hamming distance between $\vtx$ and $\vx$ is at most $\ssyn$;
\item all entries of $\vp(\vtx;q,\ell)-\widetilde{\vp}(\vx)$ are nonnegative and the $L_1$-weight of $\vp(\vtx;q,\ell)-\widetilde{\vp}(\vx)$  is at most $t$;
\item the $L_1$-distance between $\widetilde{\vp}(\vx)$ and $\vhp(\vx)$ is at most $\sseq$.
\end{enumerate}
\end{defn}}
{\color{blue}Here, properties (i)--(iii) correspond to the error types (i)--(iii) discussed before the definition.}

{\color{blue}
\begin{exa} For simplicity, let $q=2$, $\ell=2$, $t=1,\, \ssyn=1,\,\sseq=2$, and assume that one would want to store the sequence $\vx=0110100$. One synthesis error, the maximum allowed under the given parameter constraints, would render $\vx$ into a sequence $\vtx$, say $\vtx=\textcolor{red}{1}110100$. The multiset of $\ell$-grams belonging to $\vtx$ is given by  $\{11,11,10,01,10,00\}$, and some of these $\ell$-grams may be subjected to sequencing errors and possibly not observed due to coverage errors. 
Suppose that one copy of $10$ is lost due to coverage errors, so that $\widetilde{\LL}(\vx)=\{{11,11,10,01,00\}}$, and that the second and third $\ell$-grams are sequenced incorrectly, 
resulting in $\{{11,{\color{red}01},{\color{red}11},01,00\}}$. 
Hence, the DNA storage channel output would be the unordered set $\widehat{\LL}(\vx)=\{{11,01,11,01,00\}}$ which we summarize with the profile vector $\vhp(\vx)=(1,2,0,2)$. Note that none of the entries of $\vhp(\vx)$ exceeds $n-\ell+1=6$, and that the sum of the entries equals five rather than six due to one coverage error.
\end{exa}}

Consider further a subset $S\subseteq \bbracket{q}^\ell$.
For $\vx\in\bbracket{q}^n$, we similarly define $\vp(\vx;S)$ to be the vector indexed by $S$,
whose entry indexed by $\vz\in\bbracket{q}^\ell$ gives the number of occurrences of $\vz$ as an $\ell$-gram of $\vx$.
We are interested in vectors $\vx$ whose $\ell$-grams belong to $S$.
Once again, the sum of entries in $\vp(\vx;S)$ equals $n-\ell+1$.

The choice of $S$ is governed by certain considerations in DNA sequence design, including 
\begin{enumerate}[(i)]
\item {\bf Weight profiles of $\ell$-grams}. For the application at hand, one may want to choose $S$ to consist of $\ell$-grams with a fixed proportion of $C$ and $G$ bases, as 
this proportion -- known as the GC-content of the sequence -- influences the thermostability, folding processes and
overall coverage of the $\ell$-grams. From the perspective of sequencing, GC contents of roughly $50\%$ are desired\footnote{\textcolor{blue}{The reason behind the $GC$ constraint is based on the observation that in Watson-Crick pairings, $G$ and $C$ bond with three, while $A$ and $T$ bond with two hydrogen bonds. Hence, the bonds between $G$ and $C$ are stronger, and having large $GC$ content would make the DNA sequence more stable, but at the same time harder to fragment. It is known that $GC$ rich substrings of DNA suffer most of the coverage errors during sequencing. On the other hand, a large $AT$ content makes the DNA strand less stable and may cause protrusions in DNA double helices. Hence, it is desirable to have a balance of GC bases in the string~\cite{yakovchuk2006base}}.}.

To make this modeling assumption more precise and general, 
we assume sets $S$ of the form described below. Suppose that $0\le w_1< w_2\le\ell$ 
and $1\le q^*\le q-1$. Let $[w_1,w_2]$ denote the set of integers $\{w_1,w_1+1,\ldots, w_2\}$. 
For each $\vx\in\bbracket{q}^\ell$, 
let the {\em $q^*$-weight} of $\vx$ be the number of symbols in $\vx$ that belong to $[q-q^*,q-1]$, 
and denote the weight by $\wt(\vx;q^*)$. 
Let
\[
S(q,\ell; q^*,[w_1,w_2])\triangleq \left\{\vx\in\bbracket{q}^\ell : \wt(\vx;q^*)\in[w_1,w_2]\right\}
\] 
be the set of all sequences with $q^*$ weights restricted to
$[w_1,w_2]$.  
For example, 
\[S(2,4;1,[2,3]))=\{0011,0101,0110,0111,1001,1010,1011,1100, 1101,1110\}.\]
{We remark that if we represent $A,T,G,C$ by $0,1,2,3$,
respectively, and set $q=4$ and $q^*=2$, the choice $w_1=w_2=\ell/2$ for even $\ell$ and the choices $w_1=\lfloor \ell/2 \rfloor$ and $w_2=w_1+1$ for odd $\ell$ enforce the balanced $GC$
constraint. Also, note that $S(q,\ell;q^*,[0,\ell]) =
\bbracket{q}^\ell$, for any choice of $q^*$.}

\item {\bf Forbidden $\ell$-grams}. 
Studies have indicated that certain 
substrings in DNA sequences -- such as $GCG$, $CGC$ -- are likely to cause sequencing errors (see \cite{Nakamura.etal:2011}). Hence, one may also choose $S$ so as to avoid certain $\ell$-grams.
Treatment of specialized sets of forbidden $\ell$-grams is beyond the scope of this paper and 
is deferred to future work.
\end{enumerate}

Therefore, with an appropriate choice of $S$, 
we may lower the probability of substitution errors due to synthesis, lack of coverage and sequencing. 
Furthermore, as we show in our subsequent derivations, a carefully chosen set $S$ may improve the error-correcting capability 
by designing codewords to be at a sufficiently large ``distance'' from each other.
Next, we formally define the notion of sequence and profile distance as well as error-correcting codes for the corresponding DNA channel.

{\section{Error-Correcting Codes for the DNA Storage Channel}} \label{sec:ecc}

Fix $S\subseteq \bbracket{q}^\ell$.
Let $N$ be an integer which usually denotes the number of $\ell$-grams in the profile vector, i.e. $N=|S|$.
Let $\ZZ^{N}_{\ge 0}$ denote the set of vectors of length $N$ whose entries are nonnegative integers. For $\vu\in\ZZ^N_{\ge 0}$, we sometimes write $\vu\ge\vzero$.
For any pair of words $\vu,\vv\in\ZZ^{N}_{\ge 0}$, let $\Delta(\vu,\vv)\triangleq\sum_{i=1}^N \max(u_i-v_i,0)$ and define the {\em asymmetric distance} as $\da(\vu,\vv)=\max\left(\Delta(\vu,\vv), \Delta(\vv,\vu)\right)$.
A set $\C$ is called an $(N,d)$-{\em asymmetric error correcting code (AECC)} if $\C\subseteq \ZZ^{N}_{\ge 0}$ and $d=\min\{\da(\vx,\vy): \vx,\vy\in \C, \vx\ne\vy\}$. For any $\vx\in\C$, let 
$\ve\in\ZZ^N_{\ge 0}$ be such that $\vx-\ve\ge \vzero$. We say that an {\em asymmetric error $\ve$} occurred if the received word is $\vx-\ve$. We have the following theorem characterizing asymmetric error-correction codes
(see\cite[Thm 9.1]{Klove:1981}).

\begin{thm} An $(N,d+1)$-AECC corrects any asymmetric error of $L_1$-weight at most $d$.
\end{thm}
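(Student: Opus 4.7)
The plan is a standard contradiction argument based on the definitions of asymmetric distance and asymmetric error. I would suppose that $\C$ is an $(N,d+1)$-AECC and that some asymmetric error $\ve$ of $L_1$-weight at most $d$ cannot be corrected; concretely, that there exist distinct $\vx,\vy\in\C$ and nonnegative vectors $\ve,\ve'\in\ZZ^N_{\ge 0}$, each of $L_1$-weight at most $d$, such that $\vx-\ve=\vy-\ve'$. I would then derive a bound on $\da(\vx,\vy)$ that contradicts the hypothesis $\da(\vx,\vy)\ge d+1$.

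The key step is the coordinatewise analysis. From $\vx-\vy=\ve-\ve'$ I would examine the indices $i$ where $x_i>y_i$: at such $i$ we have $e_i-e'_i=x_i-y_i>0$, and since $e'_i\ge 0$, this forces $e_i\ge x_i-y_i$. Summing over exactly these indices yields
\[
\Delta(\vx,\vy)=\sum_{i\,:\,x_i>y_i}(x_i-y_i)\le\sum_{i\,:\,x_i>y_i}e_i\le\sum_{i=1}^N e_i\le d,
\]
where the last inequality uses that $\ve\ge\vzero$ so its $L_1$-weight equals $\sum_i e_i$. A symmetric computation with the roles of $\vx$ and $\vy$ swapped gives $\Delta(\vy,\vx)\le d$. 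Taking the maximum yields $\da(\vx,\vy)\le d$, contradicting the assumption that the minimum asymmetric distance of $\C$ equals $d+1$. Hence no such pair of codewords exists, and $\ve$ is uniquely correctable.

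There is no real obstacle in the argument; the only point requiring care is making sure the direction of the decoding statement is the correct one (existence of a unique codeword $\vx$ with $\vx-\ve=\vr$ and $\ve$ of weight at most $d$), which follows immediately from the contradiction above. I would also briefly note that the converse direction — that $d+1$ is tight in the sense that an $(N,d)$-AECC in general cannot correct all weight-$d$ asymmetric errors — can be seen by exhibiting two codewords at asymmetric distance exactly $d$, but this is not needed for the stated theorem.
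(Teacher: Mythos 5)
Your argument is correct: from $\vx-\ve=\vy-\ve'$ you get $\vx-\vy=\ve-\ve'$, the coordinatewise bound $e_i\ge x_i-y_i$ on the indices where $x_i>y_i$ gives $\Delta(\vx,\vy)\le d$, the symmetric bound gives $\Delta(\vy,\vx)\le d$, and hence $\da(\vx,\vy)\le d$, contradicting minimum distance $d+1$. The paper itself does not prove this statement but only cites it (Kl{\o}ve, Thm.~9.1), and your proof is exactly the standard argument behind that reference, so there is nothing to object to.
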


Next, we let $(\bbracket{q}^n;S)$ denote all $q$-ary words of length $n$ whose $\ell$-grams belong to $S$
and define the {\em $\ell$-gram distance} between two words $\vx,\vy\in(\bbracket{q}^n;S)$ as 
\[\dg(\vx,\vy;S)\triangleq \da(\vp(\vx;S),\vp(\vy;S)).\]
Note that $\dg$ is not a metric, as $\dg(\vx,\vy;S)=0$ does not imply that $\vx=\vy$. 
For example, we have $\dg(0010,1001;\bbracket{2}^2)=0$.
Nevertheless, $((\llbracket q\rrbracket^n;S),\dg)$ forms a pseudometric space. 
We convert this space into a metric space via an equivalence relation called metric identification.
Specifically, we say that $\vx\stackrel{\dg}{\sim}\vy$ if and only if $\dg(\vx,\vy;S)=0$. 
Then, by defining $\Q(n;S)\triangleq(\llbracket q\rrbracket^n;S)/\stackrel{\dg}{\sim}$, 
we can make $(\Q(n;S),\dg)$ into a metric space.
An element $X$ in $\Q(n;S)$ is an equivalence class, where 
$\vx,\vx'\in X$ implies that $\vp(\vx;S)=\vp(\vx';S)$. 
We specify the choice of {\em representative} for $X$ in Section \ref{sec:decoding}
and henceforth refer to elements in $\Q(n;S)$ by their representative words.
Let $\pQ(n;S)$ denote the set of profile vectors of words in $\Q(n;S)$. Hence, $|\pQ(n;S)|=|\Q(n;S)|$.

Let $\C\subseteq \Q(n;S)$. If $d=\min\{\dg(\vx,\vy;S): \vx,\vy\in \C, \vx\ne\vy\}$, then
$\C$ is called an $(n,d;S)$-{\em $\ell$-gram reconstruction code (GRC)}.
The following proposition demonstrates that an $\ell$-gram reconstruction code 
is able to correct synthesis and sequencing errors provided that its  $\ell$-gram distance is sufficiently large.
We observe that synthesis errors have effects that are $\ell$ times stronger 
since the error in some sense propagates through multiple $\ell$-grams.

\begin{prop}\label{prop:code}
An $(n,d;S)$-GRC can correct $\ssyn$ substitution errors due to synthesis, $\sseq$ substitution errors due to sequencing and $t$ 
{coverage errors} provided that $d>2\ssyn\ell+2\sseq+t$.
\end{prop}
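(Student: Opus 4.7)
The plan is to argue by contradiction. Suppose two distinct codewords $\vx, \vy \in \C$ both produce the same channel output $\vhp$; I will derive $\dg(\vx,\vy;S) \leq 2\ssyn\ell + 2\sseq + t$, contradicting the minimum-distance hypothesis $d > 2\ssyn\ell + 2\sseq + t$. The key tool is the asymmetric quasi-distance $\Delta(\vu,\vv) = \sum_i \max(u_i-v_i,0)$, for which $\da(\vu,\vv) = \max(\Delta(\vu,\vv),\Delta(\vv,\vu))$ and which satisfies the one-sided triangle inequality $\Delta(\vu,\vw) \leq \Delta(\vu,\vv)+\Delta(\vv,\vw)$ (immediate from $(a-c)_+ \leq (a-b)_+ + (b-c)_+$ coordinatewise).

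First I would catalogue the $\Delta$-cost of each edge of the DNA storage channel. From the analysis in item (i), a single synthesis substitution perturbs at most $\ell$ of the $\ell$-grams, so $\Delta(\vp(\vx;S), \vp(\vtx;S)) \leq \ssyn\ell$ in either direction. For coverage loss, the channel definition gives $\widetilde{\vp}(\vx) \leq \vp(\vtx;S)$ componentwise with $L_1$-gap at most $t$, hence $\Delta(\vp(\vtx;S), \widetilde{\vp}(\vx)) \leq t$ but $\Delta(\widetilde{\vp}(\vx), \vp(\vtx;S)) = 0$. For sequencing, $\Delta \leq L_1 \leq \sseq$ in either direction. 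Analogous bounds hold for the $\vy$-side intermediates $\widetilde{\vy}$ and $\widetilde{\vp}(\vy)$.

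Next I would chain $\Delta$ along the round trip
\[
\vp(\vx;S) \to \vp(\vtx;S) \to \widetilde{\vp}(\vx) \to \vhp \to \widetilde{\vp}(\vy) \to \vp(\widetilde{\vy};S) \to \vp(\vy;S)
\]
by six applications of the triangle inequality for $\Delta$. The six incurred costs are $\ssyn\ell,\; t,\; \sseq,\; \sseq,\; 0,\; \ssyn\ell$, where the $0$ is the \emph{reversed} coverage edge on the $\vy$-side, traversed in the direction in which $\widetilde{\vp}(\vy) \leq \vp(\widetilde{\vy};S)$ and hence $\Delta = 0$. Summing yields $\Delta(\vp(\vx;S), \vp(\vy;S)) \leq 2\ssyn\ell + 2\sseq + t$; swapping the roles of $\vx$ and $\vy$ bounds $\Delta(\vp(\vy;S), \vp(\vx;S))$ by the same quantity, so $\dg(\vx,\vy;S) \leq 2\ssyn\ell + 2\sseq + t$, the desired contradiction.

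The main subtlety -- and the reason one obtains $t$ rather than $2t$ in the final bound -- is to reason with $\Delta$ edge-by-edge rather than with $\da$, so that the one-sidedness of coverage loss can be exploited: one pays $t$ on the forward half of the round trip and $0$ on the reverse. Using $\da$ directly on each edge would double-count the coverage contribution. Everything else is careful bookkeeping of the intermediate profile vectors guaranteed by the channel definition.
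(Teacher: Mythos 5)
Your proof is correct and follows essentially the same argument as the paper's: both assume two codewords collide at the channel output, track the error contributions of each stage, and exploit the one-sidedness of coverage loss so that it contributes $t$ rather than $2t$ to the bound $2\ssyn\ell+2\sseq+t$ on the asymmetric distance. The only difference is presentational — you chain a one-sided triangle inequality for $\Delta$ through the intermediate profile vectors, whereas the paper rearranges a single vector equation into the form $\vp(\vx;S)-\ve=\vp(\vx';S)-\ve'$ with nonnegative $\ve,\ve'$ of $L_1$-weight at most $2\ssyn\ell+2\sseq+t$ and invokes the cited characterization of asymmetric-error-correcting codes.
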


\begin{proof}
{ 
Consider an $(n,d;S)$-GRC $\C$ and the set
$\vp(\C)=\{\vp(\vx;S):\vx\in\C\}$. By construction, 
$\vp(\C)$ is an $(N,d)$-AECC with $N=|S|$ that 
corrects all asymmetric errors of $L_1$-weight $ \leq 2\ssyn\ell+2\sseq+t$.

Suppose that, on the contrary, $\C$ cannot correct $\ssyn$ substitution errors due to synthesis, $\sseq$ substitution errors due to sequencing and $t$ {coverage errors.}
Then, there exist two distinct codewords  $\vx,\vx'\in C$ and error vectors $\ve_{{\rm syn},+},\ve_{{\rm syn},-},\ve_{{\rm seq},+},\ve_{{\rm seq},-},\ve_{t},\ve'_{{\rm syn},+},\ve'_{{\rm syn},-},\ve'_{{\rm seq},+},\ve'_{{\rm seq},-},\ve'_{t}$,
such that $\vhp(\vx) = \vhp(\vx')$, that is, such that
\[\vp(\vx;S)+\ve_{{\rm syn},+}-\ve_{{\rm syn},-}+\ve_{{\rm seq},+}-\ve_{{\rm seq},-}-\ve_{t}=
\vp(\vx';S)+\ve'_{{\rm syn},+}-\ve'_{{\rm syn},-}+\ve'_{{\rm seq},+}-\ve'_{{\rm seq},-}-\ve_{t}.\]
Here, $\ve_{{\rm syn},-}-\ve_{{\rm syn},+}$ and $\ve'_{{\rm syn},-}-\ve'_{{\rm syn},+}$ are the error vectors due to substitutions during synthesis in $\vx$ and $\vx'$, respectively; each of the vectors $\ve_{{\rm syn},-},\ve_{{\rm syn},+},\ve'_{{\rm syn},-},\ve'_{{\rm syn},+}$ has $L_1$-weight $\ssyn\ell$;
the vectors
$\ve_{{\rm seq},-}-\ve_{{\rm seq},+}$ and $\ve'_{{\rm seq},-}-\ve'_{{\rm seq},+}$ model substitution errors during sequencing in $\vx$ and $\vx'$, respectively; each of the vectors $\ve_{{\rm seq},-},\ve_{{\rm seq},+},\ve'_{{\rm seq},-},\ve'_{{\rm seq},+}$ has $L_1$-weight $\sseq$;
and $\ve_{t}$ and $\ve'_{t}$ are the {coverage error} vectors of $\vx$ and $\vx'$, respectively, and both $\ve_{t},\ve'_{t}$ have $L_1$-weight $t$.
Therefore,
\[\vp(\vx;S)-(\ve_{{\rm syn},-}+\ve_{{\rm seq},-}+\ve_{t}+\ve'_{{\rm syn},+}+\ve'_{{\rm seq},+})=
\vp(\vx';S)-(\ve'_{{\rm syn},-}+\ve'_{{\rm seq},-}+\ve'_{t}+\ve_{{\rm syn},+}+\ve_{{\rm seq},+}),\]
\noindent where
$\ve_{{\rm syn},-}+\ve_{{\rm seq},-}+\ve_{t}+\ve'_{{\rm syn},+}+\ve'_{{\rm seq},+}$ and
$\ve'_{{\rm syn},-}+\ve'_{{\rm seq},-}+\ve'_{t}+\ve_{{\rm syn},+}+\ve_{{\rm seq},+}$ are
nonnegative vectors of $L_1$-weight at most $2\ssyn\ell+2\sseq+t$. This
contradicts the fact that $\vp(x;S)$ and $\vp(\vx';S)$ belong to a code that
corrects asymmetric errors with $L_1$-weight at most $2\ssyn\ell+2\sseq+t$.
}
\end{proof}

Throughout the remainder of the paper, we consider the problem of
enumerating the profile vectors in $\pQ(n;S)$ and constructing
$(n,d;S)$-$\ell$-gram reconstruction codes for a general subset
$S\subseteq\bbracket{q}^\ell$.  Our solutions are characterized by
properties associated with a class of graphs defined on $S$, which we
introduce in Section \ref{sec:graph}. In the same section, we collect
enumeration results for $\Q(n;S)$.  Section \ref{sec:enumerate} is
devoted to the proof of the main enumeration result using Ehrhart
theory.  We further exploit Ehrhart theory and certain graph theoretic
concepts to construct codes in Section \ref{sec:lower} and summarize
numerical results for the special case where $S=S(q,\ell;
q^*,[w_1,w_2])$ in Section \ref{sec:numerical}.  Finally, we describe
practical decoding procedures in Section \ref{sec:decoding}.

\begin{rem}
\begin{enumerate}[(i)] \hfill
\item For the case $S=\bbracket{q}^\ell$, given a word $\vx\in\bbracket{q}^n$, Ukkonen made certain observations 
on the structure of certain words in the equivalence class of $\vx$, 
but was unable to completely characterize all words within the class~\cite{ukkonen1992approximate}.
Here, we focus on computing the {\em number} of equivalence classes for a general subset $S$.
\item For ease of exposition, we abuse notation by identifying words in $\Q(n;S)$ 
with their corresponding profile vectors in $\pQ(n;S)$
and refer to GRCs as being subsets of $\Q(n;S)$ or $\pQ(n;S)$ interchangeably.
\item Given $(n,d;S)$-GRC $\C$ and the set $\vp(\C)=\{\vp(\vx;S):\vx\in\C\}$, 
observe that all profile vectors in $\vp(\C)$ have $L_1$-weight $n-\ell+1$.
In this case, the asymmetric distance between two profile vectors $\vu$ and $\vv$ in $\vp(\C)$ 
is given by half of the $L_1$-weight of $(\vu-\vv)$. 
\end{enumerate}
\end{rem}

\section{Restricted De Bruijn Graphs and Enumeration of Profile Vectors}
\label{sec:graph}

We use standard concepts and terminology from graph theory, following Bollob\'as~\cite{Bollobas:1998}.

A {\em directed graph (digraph)} $D$ is a pair of sets $(V,E)$, where $V$ is the set of {\em nodes}
and $E$ is a set of ordered pairs of $V$, called {\em arcs}.
If $e=(v,v')$ is an arc, we call $v$ the {\em initial} node and $v'$ the {\em terminal} node.
We allow loops in our digraphs: in other words, we allow $v=v'$.
In some instances, we allow multiple arcs between nodes and we term these digraphs as {\em multigraphs}.

The {\em incidence matrix} of a digraph $D$ is a matrix $\vB(D)$ in $\{-1,0,1\}^{V\times E}$, where
\begin{equation*}
\vB(D)_{v,e}=
\begin{cases}
1 & \mbox{if $e$ is not a loop and $v$ is its terminal node},\\
-1 & \mbox{if $e$ is not a loop and $v$ is its initial node},\\
0 & \mbox{otherwise}.
\end{cases}
\end{equation*}
Observe that when a digraph $D$ has loops, its incidence matrix $\vB(D)$ has $\vzero$-columns indexed by these loops. When $D$ is connected, it is known that the rank of 
$\vB(D)$ equals $|V|-1$ (see \cite[\S II, Thm 9 and Ex. 38]{Bollobas:1998}).

A {\em walk} of length $n$ in a digraph is a sequence of nodes
$v_0v_1\cdots v_n$ such that $(v_i,v_{i+1})\in E$ for all
$i\in\bbracket{n}$. A walk is {\em closed} if $v_0=v_n$ and a {\em
  cycle} is a closed walk with distinct nodes, i.e., $v_i\ne v_j$, for
$0\le i<j< n$. We consider a loop to be a cycle of length one.  
Given a subset $C$ of the arc set, let $\vchi(C)\in \{0,1\}^E$ be its {\em
  incidence vector}, where $\vchi(C)_e$ is one if $e\in C$ and zero
otherwise.  In general, for any closed walk $C$ in $D$, we have
$\vB(D)\vchi(C)=\vzero$. 

A closed walk is {\em Eulerian} if it includes all arcs in $E$.
A cycle is {\em Hamiltonian} if it includes all nodes in $V$.
A digraph is \emph{strongly connected} if for
all $v, v' \in V$, there exists a walk from $v$ to $v'$ and vice versa.
{A necessary and sufficient condition for a strongly connected graph to have a closed Eulerian walk
is that the number of incoming arcs is equal to the number of outgoing arcs for each node.}


%

%

We are concerned with a special family of digraphs, 
namely, the de Bruijn graphs \cite{Bruijn:1946}.
Given $q$ and $\ell$, the standard {\em de Bruijn graph} is defined on the node set $\llbracket q\rrbracket^{\ell-1}$.
For $\vv,\vv'\in\bbracket{q}^{\ell-1}$, the ordered pair $(\vv,\vv')$ belongs to the arc set if and only if 
$v_i=v'_{i-1}$ for $2\le i\le\ell$. 
{\color{blue}
\begin{exa}\label{exa:1} Let $q=2, \, \ell=4$. Then the nodes $\vv=101$ and $\vv'=010$ are connected by the arc $1010$ which originates from $\vv$
and terminates in $\vv'$ as the suffix of $\vv$ of length $\ell-2=2$ equals $01$, which is also the prefix of length $\ell-2$ of $\vv'$.
\end{exa}}


The notion of restricted de Bruijn graphs was introduced by Ruskey \etal{} \cite{Ruskey.etal:2012} 
for the case of a binary alphabet. 
{For a fixed subset $S\subseteq\bbracket{q}^\ell$,  
we define the corresponding {\em restricted de Bruijn graph}, denoted by $D(S)$ as follows.
The nodes of $D(S)$, denoted by $V(S)$, are the $(\ell-1)$-grams appearing in the set $S$. 
The pair $(\vv,\vv')$ belongs to the arc set if and only if 
$v_i=v'_{i-1}$ for $2\le i\le\ell$ and $v_1v_2\cdots v_{\ell-1}v'_{\ell-1}\in S$. Note that the standard de Bruijn graph is simply $D(\bbracket{q}^\ell)$.
We refer the readers to Fig.~\ref{fig:debruijn} for an illustration of a de Bruijn and restricted de Bruijn graph with sets $\bbracket{2}^3$ and $S(2,4;1,[2,3])$, respectively.}

{\color{blue}
\begin{exa}\label{exa:2}
Continuing Example \ref{exa:1}, let $q=2$, $\ell=4$ and $S=S(2,4;1,[2,3])$. 
Since the word $1010$ belongs to $S$, 
the arc from $\vv=101$ and $\vv'=010$ belongs to $D(S)$.
We also observe that $1010$ is word of length $n=4$ and it can be represented by the walk of length $n-\ell+1=1$ from $\vv$ to $\vv'$. 

In general, a word of length $n$ whose $\ell$-grams belong to $S$
can be represented by a walk of length $n-\ell+1$ in $D(S)$. For example, the word $011001101011$ of length twelve corresponds to the walk 
\[\xymatrix@=7mm{
011 \ar[r]^{0110} & 
110 \ar[r]^{1100} &
100 \ar[r]^{1001} &
001 \ar[r]^{0011} &
011 \ar[r]^{0110} &
110 \ar[r]^{1101} &
101 \ar[r]^{1010} &
010 \ar[r]^{0101} &
101 \ar[r]^{1011} & 011
}\]
of length nine.  Conversely, given the above {\em walk} of length nine, 
it is not difficult to obtain the binary word of length twelve.
For each arc $\vz$ in $S$, we observe that the number of times $\vz$ is traversed by the walk 
gives the number of times of $\vz$ appears as a $4$-gram of the word.
Hence, if we label each arc $\vz$ by this number, we obtain a representation of the profile vector on $D(S)$.
We refer the readers to Fig.~\ref{fig:debruijn} for an illustration.
\end{exa}
}

In their paper, Ruskey \etal{} showed that $D(S)$ is Eulerian 
when $S=S(2,\ell;1,[w-1,w])$ for $w\in[\ell]$.
Nevertheless, the results of \cite{Ruskey.etal:2012} can be extended for general $q$, $q^*$ and more general range of weights.
As these extensions are needed for our subsequent derivation, we provide their technical proofs in Appendix \ref{app:restricteddb}.
For purposes of brevity, we write $D(S(q,\ell; q^*,[w_1,w_2]))$ and $D(\bbracket{q}^\ell)$
 as $D(q,\ell; q^*,[w_1,w_2])$ and $D(q,\ell)$, respectively.

\begin{prop}\label{debruijn}
Fix $q$ and $\ell$.
Let $1\le q^*\le q-1$ and $1\le w_1<w_2\le \ell$.
Then $D(q,\ell;q^*,[w_1,w_2])$ is Eulerian.
In addition, $D(q,\ell)$ is Hamiltonian.
\end{prop}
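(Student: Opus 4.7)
The plan is to prove the two assertions of Proposition~\ref{debruijn} separately, using the standard criterion that a finite digraph admits a closed Eulerian walk if and only if it is strongly connected and every node has equal in-degree and out-degree.

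The degree-balance half is essentially a bookkeeping step. Fix a node $\vv=v_1\cdots v_{\ell-1}\in V(S)$ and let $w=\wt(\vv;q^*)$. An outgoing arc corresponds to a symbol $c\in\bbracket{q}$ with $v_1\cdots v_{\ell-1}c\in S(q,\ell;q^*,[w_1,w_2])$, that is, with $w+\wt(c;q^*)\in[w_1,w_2]$; an incoming arc corresponds to a symbol $c'$ with $c'v_1\cdots v_{\ell-1}\in S$, that is, with $\wt(c';q^*)+w\in[w_1,w_2]$. The admissibility conditions on $c$ and $c'$ depend only on $w$ and on $\wt(\cdot;q^*)\in\{0,1\}$, so the in- and out-degrees at $\vv$ are equal.

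The main obstacle is strong connectivity. Here I would exploit the hypothesis $w_1<w_2$, which guarantees the existence of \emph{safe} nodes, i.e.\ $(\ell-1)$-grams $\vv$ with $\wt(\vv;q^*)\in\{w_1,\ldots,w_2-1\}$: from such a node, appending \emph{any} symbol produces an $\ell$-gram in $S$. The plan is to show, first, that from an arbitrary $\vv\in V(S)$ one can reach some safe node via a short manoeuvre, distinguishing the boundary cases $\wt(\vv;q^*)=w_1-1$ (where only weight-$1$ symbols may be appended) and $\wt(\vv;q^*)=w_2$ (where only weight-$0$ symbols may be appended); and, second, that given any two safe nodes $\vu,\vw$ one can construct an $(\ell-1)$-step walk from $\vu$ to $\vw$ by appending exactly the symbols $w_1,w_2,\ldots,w_{\ell-1}$ and verifying that each intermediate $(\ell-1)$-gram remains safe, scheduling the order in which weight-$0$ and weight-$1$ symbols enter so that the running $q^*$-weight stays inside $[w_1,w_2-1]$. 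The flexibility of a safe node to accept either a weight-$0$ or a weight-$1$ symbol is what makes such a schedule always exist; combining the two halves produces a walk between any ordered pair of nodes and hence strong connectivity.

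For the Hamiltonicity of $D(q,\ell)$, I would invoke the classical line-digraph identification $D(q,\ell)\cong L(D(q,\ell-1))$: the nodes of $D(q,\ell)$ are exactly the $\ell$-grams, which are precisely the arcs of $D(q,\ell-1)$, and adjacency corresponds to the head-of-one-equals-tail-of-the-other relation. Because $D(q,\ell-1)$ is strongly connected with every node having in- and out-degree $q$, it admits a closed Eulerian walk, and a standard result converts any such Eulerian circuit of a digraph $D$ into a Hamiltonian cycle of $L(D)$. The base case $\ell=1$ is a single node with $q$ loops and is trivially Hamiltonian.
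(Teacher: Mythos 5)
Your degree-balance argument and your line-digraph proof that $D(q,\ell)$ is Hamiltonian are both fine (the paper's appendix proves only the Eulerian claim, via the same observation that ${\rm Pref}(\vv)={\rm Suff}(\vv)$, and treats the Hamiltonicity of the unrestricted graph as classical). The gap is in the second half of your strong-connectivity argument. In a de Bruijn-type digraph, a walk of length $\ell-1$ from $\vu$ to $\vw$ is unique if it exists: the $i$-th step must append the $i$-th symbol of $\vw$, so there is no ``schedule'' to choose --- the symbols and their order are forced by the target node. The claim that any two safe nodes are joined by such a walk is false. Take $q=2$, $q^*=1$, $\ell=5$, $[w_1,w_2]=[2,3]$, $\vu=1100$ and $\vw=0011$; both have $q^*$-weight $2$ and are therefore safe. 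The unique length-$4$ walk from $\vu$ to $\vw$ appends $0,0,1,1$ and traverses the $5$-grams $11000,\ 10000,\ 00001,\ 00011$, of which $10000$ and $00001$ have weight $1\notin[2,3]$; the walk even passes through $0000$, which is not a node of $D(S)$ at all. So no valid $(\ell-1)$-step walk between these two safe nodes exists.

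What is needed --- and what the paper does --- is to move between nodes by changing one coordinate at a time: if $\vz,\vz'\in V(S)$ differ in a single position, then ${\rm Suff}(\vz)\cap{\rm Pref}(\vz')\ne\emptyset$, and for $s$ in this intersection the word $\vz\, s\, \vz'$ yields a valid walk of length $\ell$, since every $\ell$-window of it has weight equal to either $\wt(\vz s;q^*)$ or $\wt(s\vz';q^*)$. One then shows that any two nodes of $V(S)$ are linked by a chain of single-coordinate changes staying inside $V(S)$, which requires a short case analysis at the boundary weights $w_1-1$ and $w_2$ (sometimes a later coordinate must be adjusted first to make room). If you want to keep your safe-node framework, you would have to allow walks longer than $\ell-1$ --- for instance, spelling out a suitably chosen buffer of $\ell-1$ symbols before the symbols of $\vw$ --- and you would still owe an argument that a buffer keeping every window's weight in $[w_1,w_2]$ exists; as written, the scheduling freedom your proof relies on is simply not available.
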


Observe that when $q^*=q-1$, $w_1=0$, $w_2=\ell$, we recover the classical result that the de Bruijn graph $D(q,\ell)$ is Eulerian and Hamiltonian.

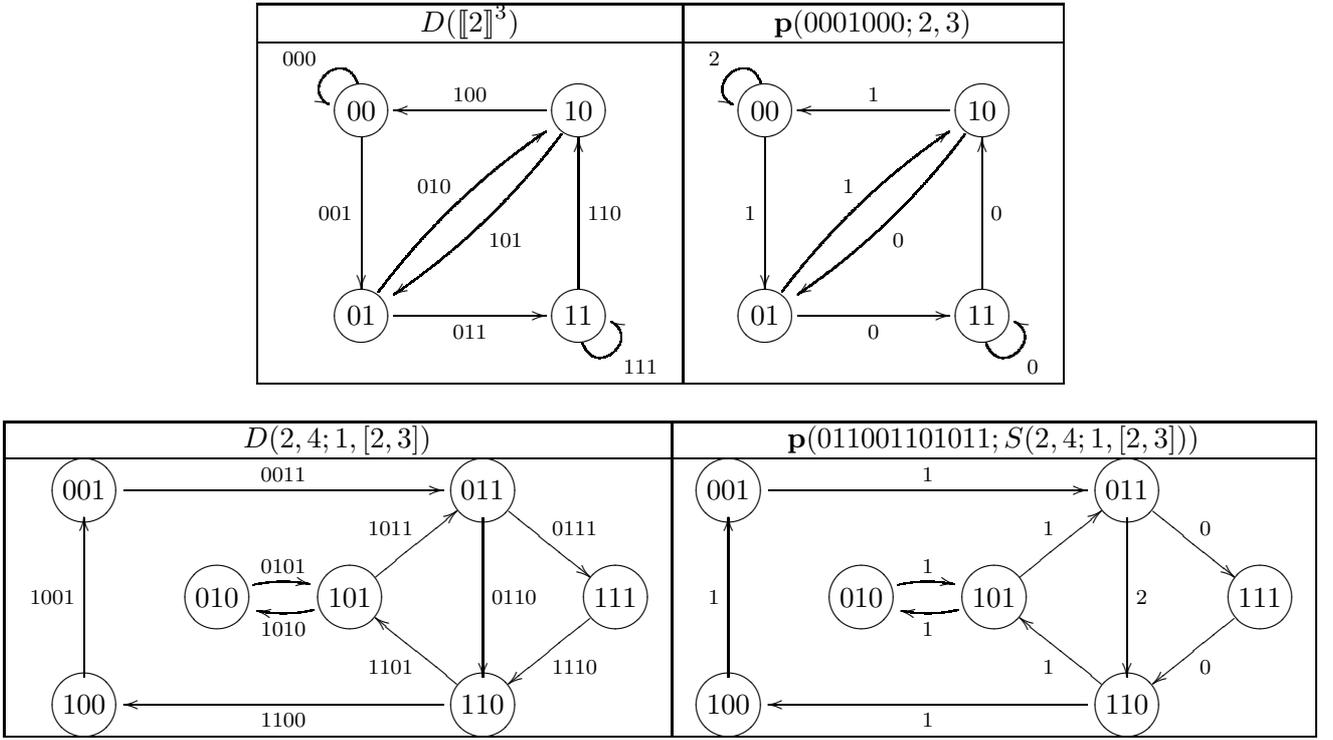
\begin{figure}
\centering
\begin{tabular}{|c|c|}
\hline
$D(\bbracket{2}^3)$&
$\vp(0001000;2,3)$\\
\hline
\xymatrix@=20mm{
{00} \ar[d]_{001}\ar@(u,l)[]_{000} & 10\ar[l]_{100}\ar@/^/[dl]^{101}\\
01 \ar[r]_{011}\ar@/^/[ur]^{010} & 11 \ar[u]_{110}\ar@(d,r)[]_{111}
}&
\xymatrix@=20mm{
00 \ar[d]_{1}\ar@(u,l)[]_{2} & 10\ar[l]_{1}\ar@/^/[dl]^{0}\\
01 \ar[r]_{0}\ar@/^/[ur]^{1} & 11 \ar[u]_{0}\ar@(d,r)[]_{0}
}\\ \hline
\end{tabular}

\vspace{5mm}

\begin{tabular}{|c|c|}
\hline
$D(2,4;1,[2,3])$&
$\vp(011001101011;S(2,4;1,[2,3]))$\\
\hline
\xymatrix@=7mm{
001 \ar[rrr]^{0011} & *\txt{} &*\txt{} & 
011 \ar[dd]^{0110}\ar[dr]^{0111}\\
*\txt{}& 010 \ar@/^/[r]^{0101}
& 101 \ar@/^/[l]^{1010} \ar[ur]^{1011}
&*\txt{}& 111 \ar[dl]^{1110}\\
100 \ar[uu]^{1001} &*\txt{}&*\txt{}&
110 \ar[ul]^{1101}\ar[lll]^{1100}
}
&
\xymatrix@=7mm{
001 \ar[rrr]^{1} & *\txt{} &*\txt{} & 
011 \ar[dd]^{2}\ar[dr]^{0}\\
*\txt{}& 010 \ar@/^/[r]^{1}
& 101 \ar@/^/[l]^{1} \ar[ur]^{1}
&*\txt{}& 111 \ar[dl]^{0}\\
100 \ar[uu]^{1} &*\txt{}&*\txt{}&
110 \ar[ul]^{1}\ar[lll]^{1}
}\\ \hline
\end{tabular}
\caption{\color{blue}
Examples of two de Bruijn and restricted de Bruijn graphs. 
The upper left corner shows a classical de Bruijn graph with $q=2$ and $\ell=3$. 
Note that the nodes of the graph are all binary tuples of length $\ell-1=2$, and arcs in the graph connect any pair of nodes for which the last symbol of the origin node equals the first symbol of the terminal node. 
The arcs are labeled by the ``overlap'' sequence of the node labels. In the right hand corner, the same graph is depicted with respect to a input sequence $\vx$ which induces weights on the arcs, indicating how many times the $\ell$-gram corresponding to the arc appeared in $\vx$. For example, in $\vx=0001000$, the $\ell=3$-gram appears twice, leading to the label $2$ for the self-loop around the node $00$. This example is extended for the case of a restricted de Bruijn graph defined on the set $S(2,4;1,[2,3])$ as depicted in the second row. Note that the graph in the lower left corner contains only arcs labeled by $\ell=4$-tuples of weight $2$ and $3$, as required by the definition of
$S(2,4;1,[2,3])$. The corresponding $4$-gram profile vector for $011001101011$ on the aforementioned restricted de Bruijn graph is shown in the lower right corner. As an example, observe that the sequence $\vx=011001101011$ has two substrings $0110$, and hence the arc from the node labeled by $011$ to the node labeled by $110$ has weight $2$.}
\label{fig:debruijn}
\end{figure}



{We provide next the main enumeration results for $\Q(n;S)$, or equivalently, for $\pQ(n;S)$.}
We first assume that $D(S)$ is strongly connected.
In addition, we consider closed walks in $D(S)$. 
Observe from Example \ref{exa:2} that a walk from node $\vv$ to node $\vv'$ in $D(S)$ is equivalent to 
a word whose $\ell$-grams belong to $S$ that starts with $\vv$ and ends with $\vv'$.
Therefore, we define {\em closed words} to be words that start and end with the same $(\ell-1)$-gram to 
correspond with closed walks in $D(S)$. 
We denote the set of closed words in $\Q(n;S)$ by $\Qb(n;S)$, and the corresponding set of profile vectors by $\pQb(n;S)$.

Suppose that $\vu$ belongs to $\pQb(n;S)$.
Then the following system of linear equations that we refer to as the {\em flow conservation equations},
hold true:
\begin{equation}\label{eq:flow}
\vB(D(S))\vu=\vzero.
\end{equation}

Let $\vone$ denote the all-ones vector. Since the number of $\ell$-grams in a word of length $n$ is $n-\ell+1$, 
we also have
\begin{equation}\label{eq:sum}
 \vone^T\vu=n-\ell+1.
\end{equation}

Let $\vA(S)$ be $\vB(D(S))$ augmented with a top row $\vone^T$; let $\vb$ be a vector of length $|V(S)|+1$ with a one as its first entry, and zeros elsewhere.
Equations \eqref{eq:flow} and \eqref{eq:sum} may then be rewritten as $\vA(S)\vu=(n-\ell+1)\vb$.

Consider the following two sets of integer points
{
\begin{align}
\F(n;S) & \triangleq\{\vu\in \ZZ^{|S|}: \vA(S)\vu=(n-\ell+1)\vb,\ \vu\ge\vzero\}, \label{sol_polytope}\\
\E(n;S) &\triangleq\{\vu\in \ZZ^{|S|}: \vA(S)\vu=(n-\ell+1)\vb,\ \vu>\vzero \} \label{sol_interior}.
\end{align}
} The preceding discussion asserts that the profile vector of any
closed word must lie in $\F(n;S)$.  Conversely, the next lemma shows
that any vector in $\E(n;S)$ is a profile vector of some word in
$\Qb(n;S)$.

\begin{lem}\label{lem:euler}
Suppose that $D(S)$ is strongly connected.
If $\vu\in\E(n;S)$, then there exists a word $\vx\in\Qb(n;S)$ such that $\vp(\vx;S)=\vu$. That is, $\E(n;S) \subseteq \pQb(n;S)$.
\end{lem}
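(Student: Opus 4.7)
The plan is to translate the algebraic condition $\vu \in \E(n;S)$ into a combinatorial statement about walks in a multigraph, and then invoke the classical characterization of Eulerian circuits quoted just before Proposition \ref{debruijn}. First I would build an auxiliary directed multigraph $D'$ on the node set $V(S)$ by taking, for each arc $\vz \in S$, exactly $u_\vz$ parallel copies of $\vz$. Because $\vu > \vzero$, every arc of $D(S)$ appears in $D'$ with positive multiplicity, so $D'$ inherits the strong connectivity hypothesised for $D(S)$; the total arc count is $\vone^T\vu = n-\ell+1$.

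Next I would read off the Eulerian balance condition from the flow-conservation equation \eqref{eq:flow}. At each node $\vv \in V(S)$, the relation $\vB(D(S))\vu = \vzero$ asserts that the weighted in-degree equals the weighted out-degree, which in $D'$ is exactly in-degree equal to out-degree at every node. Loops of $D(S)$ appear as $\vzero$-columns of $\vB(D(S))$ and contribute equally to both sides, so they create no difficulty. The standard Eulerian criterion then furnishes a closed Eulerian walk $\vv_0\vv_1\cdots\vv_{n-\ell+1}$ in $D'$ with $\vv_0 = \vv_{n-\ell+1}$.

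The final step is to reinterpret this Eulerian walk as a word, following the correspondence illustrated in Example \ref{exa:2}. I would take $\vx$ to be the word whose first $\ell-1$ symbols are $\vv_0$ and whose subsequent symbol at position $\ell-1+i$ is the last coordinate of $\vv_{i+1}$ for $i = 0,1,\ldots,n-\ell$. The word so produced has length $n$, its $i$th $\ell$-gram is precisely the arc of $D(S)$ traversed at the $i$th step of the walk, and its consecutive $\ell$-grams thus all lie in $S$. Because each arc $\vz$ is used exactly $u_\vz$ times, $\vp(\vx;S) = \vu$, and because the walk is closed, the first and last $(\ell-1)$-grams of $\vx$ agree, so $\vx \in \Qb(n;S)$.

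The main obstacle I anticipate is the verification of the connectivity hypothesis that the Eulerian criterion demands: one needs \emph{strong} connectivity of $D'$, not merely the weaker in/out-degree balance, and this is precisely where the strict positivity $\vu > \vzero$ built into the definition \eqref{sol_interior} of $\E(n;S)$ is essential. If one only assumed $\vu \geq \vzero$ (as in $\F(n;S)$), the support of $\vu$ could cut the graph into pieces and the construction would break down; the whole point of working with the interior set $\E(n;S)$ is to rule this out.
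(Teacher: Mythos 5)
Your proposal is correct and follows essentially the same route as the paper's own proof: construct the multidigraph with $u_\vz$ parallel copies of each arc, use $\vu>\vzero$ together with strong connectivity of $D(S)$ to get strong connectivity of the multigraph, invoke the degree-balance Eulerian criterion via the flow conservation equations, and read the Eulerian circuit back as a closed word of length $n$. Your added remarks on loops contributing $\vzero$-columns and on why strict positivity (rather than mere nonnegativity) is needed are correct elaborations of steps the paper leaves implicit.
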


\begin{proof}
Construct a multidigraph $D_\vu$ on the node set $V(S)$
such that there are $u_\vz$ copies of the arc $\vz$ for all $\vz\in S$.
Since each $u_\vz$ is positive and $D(S)$ is strongly connected,  $D_\vu$ is also strongly connected.
Since $\vu\in\E(n;S)$, $\vu$ also satisfies the flow conservation equations and $D_\vu$ is consequently Eulerian. 
Also, as $D_\vu$ has $n-\ell+1$ arcs, an Eulerian walk on $D_\vu$ yields one such desired word $\vx$.
\end{proof}


Therefore, we have the following relation,
\begin{equation}\label{inequalities}
\E(n;S)\subseteq \pQb(n;S)\subseteq \F(n;S) .
\end{equation}

We first state our main enumeration result and defer its proof to
Section \ref{sec:enumerate}.  Specifically, under the assumption that
$D(S)$ is strongly connected, we show that both $|\E(n;S)|$ and
$|\F(n;S)|$ are quasipolynomials in $n$ whose coefficients are
periodic in $n$.  Following Beck and Robins~\cite{Beck.Robins:2007},
we define a {\em quasipolynomial} $f$ as a function in $n$ of the form
$c_D(n)n^{D-1}+c_{D-1}(n)n^{D-1}+\cdots+c_0(n)$, where
$c_D,c_{D-1},\ldots, c_0$ are periodic functions in $n$.  If $c_D$ is
not identically equal to zero, $f$ is said to be of {\em degree} $D$.
The {\em period} of $f$ is given by the lowest common multiple of the
periods of $c_D,c_{D-1},\ldots, c_0$.

In order to state our asymptotic results, we adapt the standard
$\Omega$ and $\Theta$ symbols.  We use $f(n)=\Omega'(g(n))$ to state
that for a fixed value of $\ell$, there exists an integer $\lambda$
and a positive constant $c$ so that $f(n)\ge cg(n)$ for sufficiently
large $n$ with $\lambda|(n-\ell+1)$.  In other words, $f(n)\ge cg(n)$
whenever $n$ is sufficiently large and is congruent to $\ell-1$ modulo
$\lambda$.  We write $f(n)=\Theta'(g(n))$ if $f(n)=O(g(n))$ and
$f(n)=\Omega'(g(n))$.

\begin{thm}\label{thm:closed}
Suppose $D(S)$ is strongly connected and 
let $\lambda$ be the least common multiple of the lengths of all cycles in $D(S)$.
Then $|\E(n;S)|=|\F(n;S)|\Theta'\left(n^{|S|-|V(S)|}\right)$.
In particular, $|\pQb(n;S)|=\Theta'\left(n^{|S|-|V(S)|}\right)$.
\end{thm}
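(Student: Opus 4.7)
The approach is to realize $\F(n;S)$ and $\E(n;S)$ as the Ehrhart counting functions of a rational polytope and its relative interior, and then to invoke Ehrhart's theorem together with Ehrhart--Macdonald reciprocity. Concretely, define
\[
\PP(S) = \{\vu \in \RR_{\geq 0}^{|S|} : \vA(S)\vu = \vb\},
\]
so that $\F(n;S) = (n-\ell+1)\PP(S)\cap\ZZ^{|S|}$ and $\E(n;S) = (n-\ell+1)\PP(S)^\circ \cap \ZZ^{|S|}$, where $\PP(S)^\circ$ is the relative interior of $\PP(S)$.

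The first task is to determine $\dim \PP(S)$. By strong connectivity of $D(S)$, $\rank\vB(D(S))=|V(S)|-1$. The row $\vone^T$ augmenting $\vB(D(S))$ to form $\vA(S)$ is not in the row space of $\vB(D(S))$: any relation expressing $\vone^T$ as $\sum_v c_v(\vb_v)^T$ would force $c_v-c_u=1$ on every arc $(u,v)\in S$, which is impossible whenever $D(S)$ contains a directed cycle. Thus $\rank\vA(S)=|V(S)|$ and $\dim\PP(S)=|S|-|V(S)|$. Ehrhart's theorem for rational polytopes then guarantees that $L_{\PP(S)}(k)=|k\PP(S)\cap\ZZ^{|S|}|$ is a quasipolynomial of degree at most $|S|-|V(S)|$, and by Ehrhart--Macdonald reciprocity $L_{\PP(S)^\circ}(k)=(-1)^{|S|-|V(S)|}L_{\PP(S)}(-k)$, so $|\E(n;S)|$ is also a quasipolynomial of the same degree; the two share the common leading coefficient $\mathrm{relvol}(\PP(S))$ on every residue class along which they attain full degree. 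This already delivers the upper bound $|\F(n;S)|,|\E(n;S)|=O(n^{|S|-|V(S)|})$.

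For the matching $\Omega'$ bound I need to certify $\mathrm{relvol}(\PP(S))>0$ and produce at least one residue class along which both quasipolynomials attain full degree. Strong connectivity ensures that every arc $e\in S$ lies on a directed cycle, so I would pick cycles $C_1,\dots,C_r$ of lengths $\lambda_1,\dots,\lambda_r$ whose union covers $S$ and, using connecting paths supplied by strong connectivity, glue them into a single closed walk $W$ traversing each $C_i$ exactly $\lambda/\lambda_i$ times, where $\lambda=\lcm(\lambda_1,\dots,\lambda_r)$ divides the $\lambda$ in the theorem statement. The arc-incidence vector $\vchi(W)$ is then a strictly positive integer point of $L\PP(S)^\circ$ for some $L$ with $\lambda\mid L$, certifying $\mathrm{relvol}(\PP(S))>0$; concatenating $m$ copies of $W$ for $m=1,2,\dots$ places interior lattice points in $(mL)\PP(S)^\circ$, so the Ehrhart constituent polynomials along the residue class $k\equiv 0\pmod L$ are not identically zero and hence attain full degree $|S|-|V(S)|$. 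Combining everything gives $|\E(n;S)|,|\F(n;S)|=\Theta'(n^{|S|-|V(S)|})$ on the corresponding progression in $n$, and the sandwich $\E(n;S)\subseteq\pQb(n;S)\subseteq\F(n;S)$ from~\eqref{inequalities} transfers the same asymptotics to $|\pQb(n;S)|$. The main technical obstacle is the cycle-gluing step: producing a single closed walk (not merely a formal multiset of cycles) whose incidence vector is strictly positive on every arc of $S$ is precisely where strong connectivity is used essentially.
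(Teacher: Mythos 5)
Your overall architecture coincides with the paper's: the same polytope $\PP(S)=\{\vu:\vA(S)\vu=\vb,\ \vu\ge\vzero\}$, the same rank computation showing $\rank\vA(S)=|V(S)|$ (your ``summing the relation around a directed cycle'' argument is equivalent to the paper's observation that $\vB(D(S))\vchi(C)=\vzero$ while $\vone^T\vchi(C)=|C|\neq 0$), the same identification $|\F(n;S)|=L_{\PP(S)}(n-\ell+1)$ and $|\E(n;S)|=L_{\Pint(S)}(n-\ell+1)$, Ehrhart plus Ehrhart--Macdonald reciprocity for the upper bound, and the sandwich $\E\subseteq\pQb\subseteq\F$ at the end. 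Where you diverge is the lower bound: the paper characterizes the vertices of $\PP(S)$ as $\vchi(C)/|C|$ over cycles $C$, concludes that the period divides $\lambda_S=\lcm\{|C|\}$, and then dilates by $\lambda_S$ to obtain an \emph{integer} polytope whose Ehrhart counting function is an honest polynomial of degree $|S|-|V(S)|$. You instead build a covering closed walk $W$ and use its incidence vector as an interior lattice point of $L\PP(S)$.

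The gap is in the last inference of that divergent step: from ``$(mL)\Pint(S)$ contains a lattice point for every $m$'' you conclude that the constituent polynomial along $k\equiv 0\pmod L$ ``is not identically zero and hence attains full degree $|S|-|V(S)|$.'' Non-vanishing does not imply full degree: a nonzero constituent could a priori be constant, and one interior lattice point per dilate only certifies $L_{\Pint(S)}(mL)\ge 1$, i.e.\ $\Omega(1)$, not $\Omega(m^{|S|-|V(S)|})$. Since $\PP(S)$ is \emph{not} full-dimensional in $\RR^{|S|}$, the leading coefficient of its Ehrhart quasipolynomial is genuinely a periodic function (it vanishes on residue classes where the affine hull of the dilate misses $\ZZ^{|S|}$), so you cannot simply read off positivity of the top coefficient from Ehrhart's theorem as stated; earlier you also implicitly assume this when you say the two counts ``share the common leading coefficient $\mathrm{relvol}(\PP(S))$ on every residue class along which they attain full degree,'' which is exactly what needs proving for the class you exhibit. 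Two repairs are available. (a) The paper's: show every vertex has the form $\vchi(C)/|C|$, so $\lambda_S\PP(S)$ is integral and $L_{\lambda_S\PP(S)}$, $L_{\lambda_S\Pint(S)}$ are degree-$(|S|-|V(S)|)$ polynomials. (b) A direct count compatible with your walk: $\vchi(W)$ lies at positive distance $\delta$ from the relative boundary of $L\PP(S)$, so $m\vchi(W)$ lies at distance $m\delta$ from the relative boundary of $mL\PP(S)$; translating it by the $\Omega(m^{|S|-|V(S)|})$ points of the rank-$(|S|-|V(S)|)$ lattice $\ker\vA(S)\cap\ZZ^{|S|}$ of norm at most $\epsilon m$ keeps you in $(mL)\Pint(S)$, which gives the required $\Omega'$ bound. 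As a smaller remark, your dimension claim $\dim\PP(S)=|S|-|V(S)|$ also needs a strictly positive feasible point (the rank computation alone only bounds the dimension from above); your walk $\vchi(W)/|W|$ supplies it, but it should be invoked at that stage, as the paper does.
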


Before we end this section, we look at certain implications of Theorem \ref{thm:closed}.
First, we show that the estimate on $|\pQb(n;S)|$ extends to $|\pQ(n;S)|$ when $D(S)$ is strongly connected.

\begin{cor}\label{cor:notclosed}
Suppose $D(S)$ is strongly connected. For any $\vz,\vz'\in V(S)$,
 consider the set of words in $\Q(n;S)$
that begin with $\vz$ and end with $\vz'$ and let $\pQ(n;S,\vz\to\vz')$ be the corresponding set of profile vectors.
Similarly, let  $\pQ(n;S,\vz\to *)$ and $\pQ(n;S,*\to\vz')$ denote the set of profile vectors of words beginning with $\vz$ and words ending with $\vz'$, respectively.
Then 
\[
|\pQ(n;S)|=\Theta'(|\pQ(n;S,\vz\to\vz')|)=\Theta'(|\pQ(n;S,*\to\vz')|)=\Theta'(|\pQ(n;S,\vz\to*)|)=\Theta'\left(n^{|S|-|V(S)|}\right).
\]
\end{cor}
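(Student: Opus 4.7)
My plan is to deduce the corollary from Theorem \ref{thm:closed} by (i) sandwiching every one of the four quantities between $|\pQ(n;S,\vz\to\vz')|$ and $|\pQ(n;S)|$, and (ii) pinning down $|\pQ(n;S,\vz\to\vz')|$ at the target rate $\Theta'(n^{|S|-|V(S)|})$ for each fixed pair $\vz,\vz'\in V(S)$. The reduction in (i) is immediate: $\pQ(n;S,\vz\to\vz')\subseteq \pQ(n;S,*\to\vz')\subseteq \pQ(n;S)$ and similarly $\pQ(n;S,\vz\to\vz')\subseteq \pQ(n;S,\vz\to*)\subseteq \pQ(n;S)$, while $|\pQ(n;S)|\le\sum_{\vv,\vv'\in V(S)}|\pQ(n;S,\vv\to\vv')|$ is a sum of at most $|V(S)|^2=O(1)$ terms; hence, once (ii) is established, the three surrounding quantities are squeezed to the same $\Theta'$ rate.

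For the upper bound in (ii), I would observe that each profile $\vu\in\pQ(n;S,\vz\to\vz')$ is a nonnegative integer solution of the same affine system defining $\F(n;S)$, except that the right-hand-side entries at the rows indexed by $\vz$ and $\vz'$ are shifted by $-1$ and $+1$ respectively (and nothing shifts when $\vz=\vz'$). Thus $\pQ(n;S,\vz\to\vz')$ sits inside the integer points of a translate of the polytope treated in Theorem \ref{thm:closed}, and the Ehrhart-theoretic argument used there applies verbatim to yield $O(n^{|S|-|V(S)|})$ lattice points. For the matching lower bound I would invoke concatenation: since $\pQb(n;S)=\bigcup_{\vv\in V(S)}\pQ(n;S,\vv\to\vv)$ is a union of $|V(S)|=O(1)$ sets and Theorem \ref{thm:closed} gives $|\pQb(n;S)|=\Omega'(n^{|S|-|V(S)|})$, pigeonhole produces a single node $\vz^*\in V(S)$ with $|\pQ(n;S,\vz^*\to\vz^*)|=\Omega'(n^{|S|-|V(S)|})$ (after passing to an infinite sub-progression of $n$ if needed). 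Strong connectivity of $D(S)$ then supplies walks $W_1:\vz\to\vz^*$ and $W_2:\vz^*\to\vz'$ of combined length $k$ and combined profile $\vu_{12}$; for every closed walk $C$ from $\vz^*$, the concatenation $W_1CW_2$ is a walk from $\vz$ to $\vz'$ whose profile is $\vu_C+\vu_{12}$. The assignment $\vu_C\mapsto\vu_C+\vu_{12}$ is plainly injective on profile vectors, hence $\pQ(n;S,\vz^*\to\vz^*)\hookrightarrow \pQ(n+k;S,\vz\to\vz')$, and the desired lower bound follows after absorbing the constant shift $k$.

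The main technical obstacle is bookkeeping the arithmetic-progression conditions implicit in $\Omega'$: the pigeonhole extraction of a single $\vz^*$ may force restricting to a sub-progression, and the length shift by $k$ translates the valid residue class of $n$. Verifying that a common period $\lambda'$ can be chosen uniformly in the pair $\vz,\vz'$ rests on the observation that all walks between two fixed nodes of $D(S)$ have lengths in a single coset of $\lambda\ZZ$, where $\lambda$ is the lcm of cycle lengths (so that $\lambda\ZZ$ contains every difference of walk lengths with identical endpoints); with this observation in hand, the shifts line up consistently and the $\Omega'$ bound transfers cleanly.
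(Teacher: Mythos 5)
Your overall architecture---reducing everything to $|\pQ(n;S,\vz\to\vz')|$ via the union bound over the $O(1)$ endpoint pairs, and then handling each pair by concatenating connecting walks---is essentially the paper's proof. Your upper bound is workable, though not quite as stated: the system $\vA(S)\vu=(n-\ell+1)\vb+\vc$ with $\vc=(0,\vchi_{\{\vz'\}}-\vchi_{\{\vz\}})$ is not the dilation of a fixed polytope, so Ehrhart's theorem does not apply ``verbatim''; you either need the elementary observation that the solution set lies in a coset of the $(|S|-|V(S)|)$-dimensional null space of $\vA(S)$ (so projecting onto a suitable set of $|S|-|V(S)|$ coordinates, each bounded by $n-\ell+1$, gives $O(n^{|S|-|V(S)|})$), or, as the paper does, you close the walk up by appending a fixed path from $\vz'$ back to $\vz$ so as to land injectively inside $\pQb(n+L;S)\subseteq\F(n+L;S)$ for a constant $L$.

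The genuine gap is in your lower bound. Pigeonhole over $\pQb(n;S)=\bigcup_{\vv}\pQ(n;S,\vv\to\vv)$ gives, for each admissible $n$, \emph{some} node $\vv_n$ (depending on $n$) with a large count, hence a single $\vz^*$ that works for infinitely many admissible $n$; but an infinite subset of an arithmetic progression need not contain an arithmetic progression, so this does not yield $|\pQ(n;S,\vz^*\to\vz^*)|=\Omega'(n^{|S|-|V(S)|})$ in the sense the paper defines $\Omega'$ (a bound for \emph{all} sufficiently large $n$ with $\lambda\mid(n-\ell+1)$). Your parenthetical ``passing to an infinite sub-progression'' is precisely the step pigeonhole cannot supply. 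The repair is already contained in Lemma \ref{lem:euler}: every $\vu\in\E(m;S)$ has strictly positive entries, so the Eulerian multigraph $D_\vu$ contains \emph{all} of $V(S)$ and its Eulerian circuit may be started at any prescribed node; hence $\E(m;S)\subseteq\pQ(m;S,\vz\to\vz)$ for \emph{every} $\vz$ simultaneously, with no node selection needed. Prepending the word realizing a shortest walk from $\vz$ to $\vz'$ then gives $|\pQ(n;S,\vz\to\vz')|\ge|\E(n-L(\vz,\vz');S)|=\Omega'(n^{|S|-|V(S)|})$ by Theorem \ref{thm:closed}, which is exactly the paper's route. (The residue-class shift by the constant $L(\vz,\vz')$ that you flag at the end is a shared bookkeeping issue with the paper's own argument and is benign; the fatal step is the node extraction, not the shift.)
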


\begin{proof}
\add{
Let $\vz,\vz'\in V(S)$. 
Since $D(S)$ is strongly connected, we consider the shortest path from $\vz$ to $\vz'$ in $D(S)$.
Let $\vw=\vz\vw'$ be the corresponding $q$-ary word and $L(\vz,\vz')$ be the length of the path, 
or equivalently, the length of the word $\vw'$.
Consider $\vu(\vz\to \vz')=\vp(\vw;S)$ the profile vector of $\vw$ and
observe that both the length $L(\vz,\vz')$ and the vector  $\vu(\vz\to \vz')$
are independent of $n$.

We demonstrate the following inequality:
\begin{equation}\label{eq:notclosed} 
|\E(n-L(\vz,\vz');S)|\le |\pQ(n;S,\vz\to\vz')|\le |\pQb(n+L(\vz',\vz);S)|.
\end{equation}

To demonstrate the first inequality, we construct an injective map 
$\phi_1:\E(n-L(\vz,\vz');S)\to\pQ(n;S,\vz\to\vz')$ defined by $\phi_1(\vu)=\vu+\vu(\vz\to \vz')$.
Now, since $\vu\in\E(n-p(\vz,\vz');S)$, we obtain from Lemma \ref{lem:euler} a word of length $n-L(\vz,\vz')$ 
whose profile vector is $\vu$.
Without loss of generality, we let this word be $\vx$ and assume that it starts and ends with $\vz$. 
Then $\vx\vw$ is a word of length $n$ whose profile vector is $\vu+\vu(\vz\to \vz')$.
Therefore, $\phi_1(\vu)$ lies in $\pQ(n;S,\vz\to\vz')$ and $\phi_1$ is a well-defined map. 
Suppose $\vu$ and $\vu'$ are vectors in $\E(n-L(\vz,\vz');S)$ such that $\phi_1(\vu)=\phi_1(\vu')$.
Since $\vu=\phi_1(\vu)-\vu(\vz\to \vz')=\phi_1(\vu')-\vu(\vz\to \vz')=\vu'$, we conclude $\phi_1$ is injective and
hence, the first inequality follows.

Similarly, for the other inequality, we consider another map 
$\phi_2:\pQ(n;S,\vz\to\vz')\to\pQb(n+L(\vz',\vz);S)$ where $\phi_2(\vu)=\vu+\vu(\vz'\to \vz)$. 
As before, 
let $\vu$ be the profile vector of a word $\vx$ 
of length $n$ that starts with $\vz$ and ends with $\vz'$. 
Let  $\vw=\vz'\vw'$  be the $q$-ary word corresponding to the shortest path from $\vz'$ to $\vz$ in $D(S)$.
Concatenating $\vx$ with $\vw'$ yields $\vx\vw'$, which is a word of length $n+L(\vz',\vz)$ and starts and ends with $\vz$. 
Hence, its profile vector $\vu+\vu(\vz'\to \vz)$ lies in $\pQb(n+L(\vz',\vz);S)$.
As with $\phi_1$, the map $\phi_2$ is a well-defined and can be shown to be injective.

Combining \eqref{eq:notclosed} with Theorem \ref{thm:closed}
yields the result 
$|\pQ(n;S,\vz,\vz')|=\Theta'\left(n^{|S|-|V(S)|}\right)$.

Next, we demonstrate that $|\pQ(n;S)|=\Theta'\left(n^{|S|-|V(S)|}\right)$, and observe that the other asymptotic equalities may be 
derived similarly.

Let $P\triangleq\max\{ L(\vz,\vz'): { \vz,\vz'\in V(S)}\}$ be the diameter of the digraph $D(S)$.
Then, 
\begin{align*}
|\pQ(n;S)|  =\sum_{\vz,\vz'\in V(S)} |\Q(n;S,\vz,\vz')|
& \le \sum_{\vz,\vz'\in V(S)}|\Qb(n+L(\vz',\vz);S)|\\
& \le |V(S)|^2 |\Qb(n+P;S)|=O\left(n^{|S|-|V(S)|}\right).
\end{align*}
Since $|\Q(n;S)|\ge|\Qb(n;S)|=\Omega'\left(n^{|S|-|V(S)|}\right)$, the corollary follows.
}
\end{proof}

In the special case where $S=\bbracket{q}^\ell$,
Jacquet \etal{} demonstrated a stronger version of Theorem \ref{thm:closed} for the special case $\ell=2$ using analytic combinatorics.
In addition, using a careful analysis similar to the proof of Corollary \ref{cor:notclosed},
Jacquet \etal{} also provided a tighter bound for $|\pQ(n;\bbracket{q}^\ell)|$ for the case $\ell=2$.
Note that $f(n)\sim g(n)$ stands for $\lim_{n\to\infty}f(n)/g(n)=1$.

\begin{thm}[Jacquet \etal{} \cite{Jacquet.etal:2012}]
\label{jacquet}
Fix $q,\ell$. Let $\E(n;\bbracket{q}^\ell)$, $\F(n;\bbracket{q}^\ell)$, $\pQ(n;\bbracket{q}^\ell)$ and $\pQb(n;\bbracket{q}^\ell)$ be defined as above.
Then
\begin{equation}\label{eq:asym}
|\E(n;\bbracket{q}^\ell)|\sim |\F(n;\bbracket{q}^\ell)|\sim |\pQb(n;\bbracket{q}^\ell)|\sim c(q,\ell)n^{q^{\ell}-q^{\ell-1}},
\end{equation}
where $c(q,\ell)$ is a constant. Furthermore, when $\ell=2$, 
we have $|\pQ(n;\bbracket{q}^\ell)|=(q^2-q+1)|\pQb(n;q,2)|(1-O(n^{-2q}))$.
\end{thm}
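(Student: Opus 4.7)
The plan is to refine the $\Theta'$ bound from Theorem~\ref{thm:closed} into a genuine asymptotic equivalence with an explicit leading constant, and then, in the $\ell=2$ case, to decompose $\pQ$ into endpoint-indexed pieces. My approach proceeds in three stages.

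\emph{Stage 1 (identifying $c(q,\ell)$).} I would re-express $|\F(n;\bbracket{q}^\ell)|$ as the Ehrhart quasipolynomial, evaluated at $n-\ell+1$, of the rational flow polytope
\[P = \{\vu \in \RR_{\ge 0}^{q^\ell} : \vA(\bbracket{q}^\ell)\vu = \vb\}.\]
By Ehrhart's theorem, the leading coefficient of this quasipolynomial (in the dilation parameter) is a genuine, non-periodic constant equal to the relative volume of $P$ with respect to the affine lattice cut out by the flow conservation equations; I would take this relative volume as the constant $c(q,\ell)$. By Ehrhart--Macdonald reciprocity, the counting function $|\E(n;\bbracket{q}^\ell)|$ of interior lattice points in the same dilated polytope shares the same leading coefficient as $|\F(n;\bbracket{q}^\ell)|$. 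The sandwich inclusion~\eqref{inequalities} then pins down $|\pQb(n;\bbracket{q}^\ell)| \sim c(q,\ell)\,n^{q^\ell - q^{\ell-1}}$.

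\emph{Stage 2 ($\ell=2$ endpoint decomposition).} I would partition profile vectors of open words by their starting and ending nodes. Since the net flow imbalance of a profile vector is $+1$ at the start, $-1$ at the end, and $0$ elsewhere, the sets $\pQ(n;\bbracket{q}^2, v \to w)$ for distinct ordered pairs $(v,w)$ are pairwise disjoint from one another and from $\pQb(n;\bbracket{q}^2)$, giving
\[|\pQ(n;\bbracket{q}^2)| = |\pQb(n;\bbracket{q}^2)| + \sum_{v \ne w} |\pQ(n;\bbracket{q}^2, v \to w)|.\]
The graph $D(\bbracket{q}^2)$ has diameter $1$, so the injective maps $\phi_1,\phi_2$ built in the proof of Corollary~\ref{cor:notclosed} yield $|\E(n-1;\bbracket{q}^2)| \le |\pQ(n;\bbracket{q}^2, v \to w)| \le |\pQb(n+1;\bbracket{q}^2)|$. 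Both sandwiching quantities are $\sim c(q,2)\,n^{q^2-q}$ by Stage~1, so each off-diagonal term is $\sim |\pQb(n;\bbracket{q}^2)|$; the $q^2-q$ off-diagonal contributions plus the closed contribution produce the overall factor $q^2-q+1$.

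\emph{Stage 3 (the sharp error term).} The hard part will be the remarkably tight error estimate $(1 - O(n^{-2q}))$: a crude Ehrhart expansion only yields a relative error of $O(n^{-1})$ between $|\pQb(n+1)|$ and $|\pQb(n)|$, which is far too weak. To obtain $O(n^{-2q})$ one must exploit the specific algebraic structure of the Eulerian tour count. Jacquet et al.'s approach factors this count, via the BEST/matrix-tree theorem, into a product of local multinomial factors and a spanning-arborescence determinant. A saddle-point analysis of the resulting multivariate generating function produces cancellations between the diagonal and off-diagonal terms that are invisible to the Ehrhart-theoretic sandwich, and these are what sharpen the relative error all the way down to $O(n^{-2q})$. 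Since these analytic-combinatorial tools lie outside the present paper's scope, this refinement is simply quoted from~\cite{Jacquet.etal:2012}.
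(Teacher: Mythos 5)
Your proposal is sound, and it is worth noting at the outset that the paper does not actually supply a proof of this theorem: it is quoted from \cite{Jacquet.etal:2012}, with the paper only remarking that its own machinery (Corollary~\ref{cor:aperiodic}) recovers the asymptotic equivalence~\eqref{eq:asym}. For that first part your route and the paper's coincide in outline --- Ehrhart's theorem, Ehrhart--Macdonald reciprocity, and the sandwich~\eqref{inequalities} --- but they differ in how the leading coefficient is shown to be a genuine constant rather than a periodic function of $n$. You invoke the standard fact that the leading Ehrhart coefficient of a rational polytope equals its relative volume; the paper deliberately avoids quoting that fact and instead proves constancy by a monotonicity argument (Appendix~\ref{app:aperiodic}): since $D(\bbracket{q}^\ell)$ contains a loop $\vz$, the map $\vu\mapsto\vu+\vchi(\vz)$ injects $\F(n;\bbracket{q}^\ell)$ into $\F(n+1;\bbracket{q}^\ell)$, and a monotone quasipolynomial of degree $D$ is then shown to have constant leading coefficient. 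Your volume argument is cleaner and equally valid, and it buys generality (no loop hypothesis needed), while the paper's argument stays entirely within the toolkit it has already stated. Your Stage~2 decomposition by flow imbalance at the endpoints is precisely the ``careful analysis similar to the proof of Corollary~\ref{cor:notclosed}'' that the paper attributes to Jacquet \emph{et al.}, and it correctly yields $|\pQ(n;\bbracket{q}^2)|=(q^2-q+1)|\pQb(n;q,2)|(1+o(1))$. You are also right that the sharp relative error $O(n^{-2q})$ is invisible to the Ehrhart sandwich, whose two sides differ by $\Theta(n^{-1})$ in relative terms, and must be imported from the analytic-combinatorial treatment of \cite{Jacquet.etal:2012}; the paper does exactly the same.
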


Next, we extend Theorem \ref{thm:closed} to provide estimates on $\Qb(n;S)$ and $\Q(n;S)$ for general $S$,
where $D(S)$ is not necessarily strongly connected. 

Given $D(S)$, let $V_1,V_2,\ldots,V_I$ be a partition of $V(S)$ such that 
the induced subgraph $(V_i,S_i)$ is a maximal strongly connected component for all $1\le i\le I$.
Define $\delta_i\triangleq |S_i|-|V_i|$. Then by Theorem \ref{thm:closed}, 
there are $\Theta'(n^{\delta_i})$ closed words belonging to $\Qb(n;S_i)$ and therefore, $\Qb(n;S)$.
Suppose $\Db=\max\{\delta_i: i\in I\}$. 
Then $\Qb(n;S)=\Omega'(n^\Db)$.

On the other hand, any closed word $\vx$ in $\Qb(n;S)$ corresponds to a closed walk in $D(S)$
and a closed walk in $D(S)$ must belong to some strongly connected component $(V_i,S_i)$.
In other words, $\vx$ must belong to $\Qb(n;S_i)$ for some $1\le i\le I$. Hence, we have  $|\Qb(n;S)|=O(n^\Db)$. 

\begin{cor}\label{cor:general:closed}
Given $D(S)$, let $V_1,V_2,\ldots,V_I$ be a partition of $V(S)$ such that 
the induced subgraph $(V_i,S_i)$ is strongly connected for all $1\le i\le I$.
Define $\Db\triangleq \max\{|S_i|-|V_i|: 1\le i\le I\}$.
Then $|\Qb(n;S)|=\Theta'(n^{\Db})$.
\end{cor}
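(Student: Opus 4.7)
The plan is to derive both the upper and lower bound of $|\Qb(n;S)| = \Theta'(n^\Db)$ by reducing to Theorem~\ref{thm:closed} applied separately to each strongly connected component $(V_i,S_i)$. By that theorem, since $(V_i,S_i)$ is strongly connected, we have $|\pQb(n;S_i)| = \Theta'(n^{\delta_i})$ with $\delta_i = |S_i|-|V_i|$. So the task reduces to relating $\Qb(n;S)$ to the disjoint union $\bigcup_i \Qb(n;S_i)$.

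For the lower bound, I would let $i^{*}$ achieve $\Db = \delta_{i^*}$. Any closed word in $\Qb(n;S_{i^*})$ is a word over $\bbracket{q}^n$ whose $\ell$-grams all lie in $S_{i^*}\subseteq S$, and so it is in $\Qb(n;S)$. Hence $|\Qb(n;S)| \ge |\Qb(n;S_{i^*})| = \Omega'(n^\Db)$ directly from Theorem~\ref{thm:closed}.

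For the upper bound, the key structural observation is that any closed walk in $D(S)$ lies entirely within one strongly connected component $(V_i,S_i)$. This is because the strongly connected components of $D(S)$, when contracted to single nodes, form a directed acyclic graph (the condensation); once a walk leaves a component via an inter-component arc, no walk can re-enter that component, contradicting closedness. This shows $\Qb(n;S) = \bigcup_{i=1}^{I} \Qb(n;S_i)$, so
\[
|\Qb(n;S)| \le \sum_{i=1}^{I} |\Qb(n;S_i)| = \sum_{i=1}^{I} O(n^{\delta_i}) \le I\cdot O(n^\Db) = O(n^\Db),
\]
where each $O$-bound on $|\Qb(n;S_i)|$ is supplied by Theorem~\ref{thm:closed}. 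Combining with the lower bound yields $|\Qb(n;S)| = \Theta'(n^\Db)$.

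There is no real obstacle: the only nontrivial input is Theorem~\ref{thm:closed}, which we are entitled to use, and the condensation/DAG fact about closed walks in a digraph, which is a standard piece of directed graph theory. The bulk of the argument has in fact already been sketched in the paragraph preceding the corollary; the proof simply needs to record these two bounds carefully and invoke Theorem~\ref{thm:closed} componentwise.
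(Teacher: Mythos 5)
Your proposal is correct and follows essentially the same route as the paper: the lower bound comes from applying Theorem~\ref{thm:closed} to a component attaining $\Db$, and the upper bound from the observation that a closed walk must stay inside a single (maximal) strongly connected component, so $\Qb(n;S)=\bigcup_i\Qb(n;S_i)$. The condensation/DAG justification you give for that containment is exactly the implicit content of the paper's argument.
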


\begin{exa}
\add{
Let $S=\{00,01,10,12,23,32,33\}$ with $q=4$ and $\ell=2$. Then $D(S)$ is as shown below. 
\vspace{3mm}

\centerline{
\xymatrix@=15mm{
0 \ar@(u,l)[]_{00} \ar@/^/[r]^{01} & 
1 \ar[r]^{12}\ar@/^/[l]^{10} & 
2 \ar@/^/[r]^{23} & 
3 \ar@/^/[l]^{32}\ar@(u,r)[]^{33}
}}

\vspace{3mm}
We have two strongly connected components, namely, $V_1=\{0,1\}$ and $V_2=\{2,3\}$.
So, $(V_1,S_1=\{00,01,10\})$ and $(V_2,S_2=\{23,32,33\})$ are both strongly connected digraphs with
$|\pQb(n;S_1)|=|\pQb(n;S_2)|=\ceiling{n/2}=\Theta'(n)$.
Hence, $|\pQb(n;S)|=|\pQb(n;S_1)|+|\pQb(n;S_2)|=\Theta'(n)$, in agreement with Corollary \ref{cor:general:closed}.

On the other hand, let us enumerate the elements of $\Q(n;S)$ or $\pQ(n;S)$. Let $\vu\in\pQ(n;S)$.
If $u_{12}=0$, then $\vu$ belongs to $\pQ(n;S_1)$ or $\pQ(n;S_2)$.
Otherwise, $u_{12}=1$ and  we have $\vu=\vu_1+\vchi(12)+\vu_2$ 
with $\vu_1\in \pQ(n_1+1;S_1, *\to 1)$, $\vu_2\in \pQ(n_2+1;S_2, 2\to *)$ and $n_1+n_2+1=n-1$. 
Now, $|\pQ(n;S_1)|=|\pQ(n;S_2)|=n+\floor{n/2}$ and $|\pQb(n;S_1, *\to 1)|=|\pQb(n;S_2, 2\to *)|=n-1$ for $n\ge 2$.
Hence,

\[|\pQ(n;S)|=2\left(n+\floor{\frac{n}{2}}\right)+2(n-2)+\sum_{n_1=1}^{n-3} n_1(n-2-n_1)=\Theta'(n^3).\]

Therefore, when $D(S)$ is not strongly connected, it is not necessarily true that $|\pQb(n;S)|$ and $|\pQ(n;S)|$ differ only by a constant factor. Furthermore, we can extend the methods in this example to obtain $|\pQ(n;S)|$ for general digraphs.
}
 \end{exa}

To determine $|\pQ(n;S)|$, we construct an auxiliary weighted digraph with nodes $v_1,v_2,\ldots,v_I,v_{\rm source}$ and $v_{\rm sink}$.
If there exists an arc from the component $V_i$ to component $V_j$ for $1\le i,j\le I$, 
we add an arc from $v_i$ to $v_j$.
Further, we add an arc from $v_{\rm source}$ to $v_i$ and  from $v_i$ to $v_{\rm sink}$ for all $1\le i\le I$. 
The arcs leaving $v_{\rm source}$ have zero weight. For all  $1\le i\le I$, the arcs leaving $v_i$ have weight $\delta_i=|S_i|-|V_i|$ if their terminal node is $v_{\rm sink}$, and weight $\delta_i+1$ otherwise.
 (see Fig. \ref{fig:auxiliary} for the transformation).

\begin{figure}
\begin{center}
\begin{tabular}{ >{\centering\arraybackslash}m{1in} >{\centering\arraybackslash}m{1in} >{\centering\arraybackslash}m{1in}}
\xymatrix@=7mm{
*\txt{} &V_2 \ar[dr]\ar[dd] & *\txt{} \\
V_1 \ar[ur]\ar[dr] & *\txt{} & V_3 \\
*\txt{} &V_4 & *\txt{} 
}
&
$\longrightarrow$
&
\xymatrix@R=8mm@C=25mm{
*\txt{} &
v_1 \ar[d]|{\delta_1+1}\ar@/^2.5pc/[ddd]|(0.23){\delta_1+1}\ar[dr]|{\delta_1} & 
*\txt{} \\
v_{\rm source}\ar[ur]|{0}\ar[r]|{0}\ar[dr]|{0}\ar[ddr]|{0} &
v_2 \ar[d]|{\delta_2+1}\ar@/_2.5pc/[dd]|(0.25){\delta_2+1}\ar[r]|{\delta_2} & 
v_{\rm sink} \\
*\txt{} &
v_3 \ar[ur]|{\delta_3} & 
*\txt{} \\
*\txt{} &
v_4 \ar[uur]|{\delta_4} & 
*\txt{} \\
}
\end{tabular}
\end{center}

\caption{Constructing a weighted digraph from the connected components of $D(S)$.}
\label{fig:auxiliary}
\end{figure}
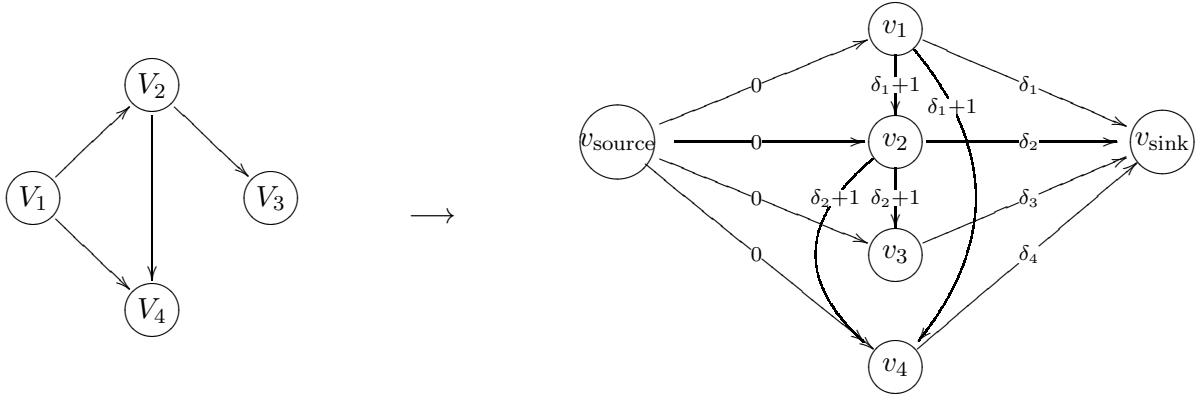

Let $D'$ be the resulting digraph and observe that $D'$ is acyclic.
Hence, we can find the longest weighted path from $v_{\rm source}$ to $v_{\rm sink}$ in linear time
(see Ahuja \etal{} \cite[Ch. 4]{Ahuja.etal:1993}). Furthermore, suppose that $\Delta$ is the weight of the longest path.
Then the next corollary states that $|\pQ(n;S)|=\Theta'(n^\Delta)$.

\begin{cor}\label{cor:generalnotclosed}
Given $D(S)$, let $V_1,V_2,\ldots,V_I$ be a partition of $V(S)$ such that 
the induced subgraph $(V_i,S_i)$ is strongly connected for all $1\le i\le I$.
Construct $D'$ as above (see Fig. \ref{fig:auxiliary}) and let $\Delta$ be the weight of the longest weighted path 
from  $v_{\rm source}$ to $v_{\rm sink}$.
Then $|\pQ(n;S)|=\Theta'(n^{\Delta})$.
\end{cor}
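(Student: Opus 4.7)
The plan is to classify walks in $D(S)$ by the sequence of maximal strongly connected components they traverse, count profile vectors within each component using Corollary \ref{cor:notclosed}, and then match the resulting exponent to the weight of a path in $D'$.

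First, I would observe that any walk in $D(S)$ visits a sequence of components $V_{i_1},V_{i_2},\ldots,V_{i_k}$ with the $i_j$'s pairwise distinct, since once the walk leaves a maximal strongly connected component via an inter-component arc, there is no directed path back to it. Thus this visit sequence, together with the separating inter-component arcs, corresponds to a directed path $v_{\rm source}\to v_{i_1}\to\cdots\to v_{i_k}\to v_{\rm sink}$ in $D'$, and there are only finitely many such paths because $D'$ is acyclic on $\{v_1,\ldots,v_I\}$. Moreover, the arc sets $S_{i_1},\ldots,S_{i_k}$ are pairwise disjoint and disjoint from the set of inter-component arcs, so a profile vector on $S$ decomposes uniquely as the sub-walk profile vectors on each $S_{i_j}$ together with unit contributions on the inter-component arcs used.

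Next, I would fix a visit sequence together with a specific choice of inter-component arcs $a_1,\ldots,a_{k-1}$ and endpoints, and count the compatible profile vectors. The walk decomposes into sub-walks of lengths $n_1,\ldots,n_k$ in $V_{i_1},\ldots,V_{i_k}$ respectively, where $n_1+\cdots+n_k=n-\ell+2-k$, with the entry and exit nodes of each sub-walk pinned down (up to a constant number of free choices at the two ends). By Corollary \ref{cor:notclosed} applied to each strongly connected graph $D(S_{i_j})$, the number of valid sub-walk profile vectors of length $n_j$ is $\Theta'(n_j^{\delta_{i_j}})$. Summing over the length distributions via the standard multinomial estimate
\[
\sum_{n_1+\cdots+n_k=M}\ \prod_{j=1}^{k} n_j^{\delta_{i_j}} \;=\; \Theta\!\left(M^{\delta_{i_1}+\cdots+\delta_{i_k}+(k-1)}\right),
\]
which one proves by restricting to tuples with $n_j\ge M/(2k)$ for the lower bound, yields $\Theta'(n^{\delta_{i_1}+\cdots+\delta_{i_k}+(k-1)})$ profile vectors for this visit sequence. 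The exponent matches the weight of the corresponding path in $D'$, since each interior arc contributes $\delta_{i_j}+1$ and the terminal arc into $v_{\rm sink}$ contributes $\delta_{i_k}$. Summing over the finitely many visit sequences, the total is dominated by the longest weighted path, giving $|\pQ(n;S)|=\Theta'(n^{\Delta})$.

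I expect the main obstacle to be the careful bookkeeping needed to reconcile the $\Theta'$ notation across components: each application of Corollary \ref{cor:notclosed} carries its own period, so the overall lower bound requires restricting $n$ to a residue class modulo the least common multiple of these periods (after also absorbing the $k-1$ inter-component arcs). Verifying that the decomposition is a genuine bijection between $\pQ(n;S)$ and the union over visit sequences of disjoint combinations of sub-walk profile vectors -- and in particular that no inter-component arc is used more than once by any walk -- is otherwise routine once the acyclicity of $D'$ is exploited.
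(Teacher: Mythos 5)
Your proposal follows essentially the same route as the paper's proof: decompose each profile vector according to the sequence of maximal strongly connected components visited (equivalently, a path in $D'$), count within each component via Corollary~\ref{cor:notclosed}, and sum over the length distributions using exactly the paper's device of restricting to tuples with $n_j \ge M/(2k)$ for the lower bound. The only detail the paper adds that you gloss over is the special handling of components that are isolated nodes (where $S_i=\emptyset$, $\delta_i=-1$, and $n_j$ is forced to be $0$, so the factor $n_j^{\delta_{i_j}}$ must be replaced by a constant), which is a minor bookkeeping point.
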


\begin{proof}
  \add{ Let $K \subset \{1, \ldots, I\}$ be the set of indices $k$
    such that $S_k=\emptyset$.  In other words, the induced subgraph $(V_k,S_k)$ is an isolated node. 
    Define  $\epsilon_{j}$ to be $0$ if $j \in K$ and $\delta_j$ otherwise.

For each $\vu\in\pQ(n;S)$, we have a set of indices $\{i_1, i_2,\ldots, i_t\}\subseteq \{1,2,\ldots, I\}$, 
a set of vectors $\vu_1,\vu_2,\ldots,\vu_{t}$, $\ve_{1},\ve_{2},\ldots, \ve_{t-1}$, and integers 
$n_1,n_2,\ldots,n_t$
 such that the following hold:
 \begin{enumerate}[(i)]
 \item $\vu=\vu_1+\ve_1+\vu_2+\ve_2+\cdots+\ve_{t-1}+\vu_t$;
 \item for $1\le j\le t-1$, $\ve_i$ is the incidence vector of some arc $(\vz_j,\vz'_{j+1})$ in $D(S)$ such that
  $\vz_j\in V_{i_j}$ and $\vz'_{j+1}\in V_{i_{j+1}}$;
\item for $1\le j\le t$, the vector $\vu_j$ belongs to $\pQ(n_j;S_{i_j})$;
 \item $(t-1)+\sum_{j=1}^t n_j=n-\ell+1$; 
 \item $v_{\rm source}v_{i_1}v_{i_2}\cdots v_{i_t}v_{\rm sink}$ is a path in $D'$.
 \end{enumerate}
 Note that Condition~(iii) implies that $n_j = 0$ whenever $i_j \in
 K$. Note that if $\vu, \vu'$ are vectors in $\pQ(n;S)$ having the same
 set of indices $\{i_1, \ldots, i_t\}$ and the same vectors $\vu_1,
 \ldots, \vu_t$, then $\vu = \vu'$. Thus, we may obtain an upper bound
 on $\sizeof{\pQ(n;s)}$ by bounding the number of ways to produce such
 index sets and vectors.

 For a fixed subset $\{i_1, i_2,\ldots, i_t\}\subseteq [I]$, let $k =
 \sizeof{\{i_1, \ldots, i_t\} \cap K}$. Let $T$ be the set of tuples
 $(n_1, \ldots, n_t)$ such that $\sum_{j=1}^t n_j = (n - \ell + 1) -
 (t-1)$ and such that $n_j = 0$ whenever $i_j \in K$. If $k < t$, then
 $\sizeof{T} \leq n^{t-1-k}$, so we have
 \[ \sum_{(n_1, \ldots, n_t) \in T} \prod_{j=1}^t \sizeof{\pQ(n_j; S_{i_j})}
 = \sizeof{T} O(n^{\epsilon_{i_1} + \cdots + \epsilon_{i_t}})
 = O(n^{t - 1 - k}) O(n^{\delta_{i_1} + \cdots + \delta_{i_t} + k}) = O(n^{\delta_{i_1} + \cdots + \delta_{i_t}+ (t-1)}) = O(n^{\Delta}). \]

 Here, the first inequality follows from
 Corollary~\ref{cor:notclosed}, while the last inequality follows from
 the fact that $(t-1)+\sum_{j=1}^t \delta_{i_j}$ measures the weight
 of $v_{\rm source}v_{i_1}v_{i_2}\cdots v_{i_t}v_{\rm sink}$ and this
 value is upper bounded by $\Delta$. On the other hand, if $k=t$, that
 is, if $\{i_1, \ldots, i_t\} \subset K$, then $\sizeof{T}=0$ if $t-1
 < n-\ell+1$ and $\sizeof{T} = 1$ otherwise.  Hence in this case we
 also have $\sum_{(n_1, \ldots, n_t) \in
   T}\prod_{j=1}^T\sizeof{\pQ(n_j; S_{i_j})} = O(n^\Delta)$. Since the
 number of subsets of $\{1,2,\ldots I\}$ is independent of $n$, and
 since each subset corresponds to at most $O(n^{\Delta})$ vectors in
 $\pQ(n;S)$, we have $|\pQ(n;S)|=O(n^\Delta)$.
 
 Conversely, suppose $v_{\rm source}v_{i_1}v_{i_2}\cdots v_{i_t}v_{\rm
   sink}$ is a path in $D'$ of maximum weight $\Delta$. With $T$
 defined as before relative to $\{i_1, \ldots, i_t\}$, we then have
\[ |\pQ(n;S)| \ge \sum_{(n_1, \ldots, n_t) \in T}\  \prod_{j=1}^{t}|\pQ(n_j;S_{i_j})|
 \ge C_1\sum_{(n_1,\ldots,n_t) \in T}n_1^{\epsilon_{i_1}}n_2^{\epsilon_{i_1}}\cdots n_t^{\epsilon_{i_t}}\]
\noindent for some  positive constant $C_1$, by Corollary \ref{cor:notclosed}.
Let $k = \sizeof{K \cap \{i_1, \ldots, i_t\}}$ as before, and let $T' \subset T$ be the set defined by
\[ T' = \left\{(n_1, \ldots, n_t) \in T : \text{$n_j \geq \frac{n}{2t}$ whenever $i_j \notin K$}\right\}. \]
Observe that there is a positive constant $C_2$ such that for $n$ sufficiently large,
$\sizeof{T'} \geq C_2n^{(t-1)-k}$. Now we have
 \begin{align*}
   \sum_{(n_1, \ldots, n_t) \in T'}n_1^{\epsilon_{i_1}}\cdots n_t^{\epsilon_{i_t}} &\geq (2t)^{-t}\sum_{(n_1, \ldots, n_t) \in T'}n^{\epsilon_{i_1} + \cdots + \epsilon_{i_t}} \\
   &\geq (2t)^{-t}C_2 n^{\delta_{i_1} + \cdots + \delta_{i_t} + (t-1)}
   = C_3n^{\Delta}.\qedhere
 \end{align*} }
\end{proof}

\section{Ehrhart Theory and Proof of Theorem \ref{thm:closed}}\label{sec:enumerate}

We assume $D(S)$ to be strongly connected and provide a detailed proof
of Theorem \ref{thm:closed}.  For this purpose, in the next
subsection, we introduce some fundamental results from Ehrhart theory.
Ehrhart theory is a natural framework for enumerating profile vectors
and one may simplify the techniques of \cite{Jacquet.etal:2012}
significantly and obtain similar results for a more general family of
digraphs.  Furthermore, Ehrhart theory also allows us to extend the
enumeration procedure to profiles at a prescribed distance.

\subsection{Ehrhart Theory}

{As suggested by \eqref{sol_polytope} and \eqref{sol_interior}, in order to enumerate codewords of interest, we need to enumerate certain sets of integer points or 
lattice points in polytopes.}
The first general treatment of the theory of enumerating lattice points in polytopes was described by Ehrhart~\cite{Ehrhart:1962},
and later developed by Stanley from a commutative-algebraic point of view
(see \cite[Ch. 4]{Stanley:2011}).
Here, we follow the combinatorial treatment of Beck and Robins~\cite{Beck.Robins:2007}.
Recall that $\vv\ge \vzero$ means that all entries in $\vv$ are nonnegative. 
We extend the notation so that
$\vv\ge\vu$ denotes $\vv-\vu\ge\vzero$.


Consider the set $\PP$ of points given by 
\[\PP\triangleq\{\vu\in\RR^n: \vA\vu\le\vb\},\]
for some integer matrix $\vA$ and some integer vector $\vb$. 
We then call this set $\PP$ a {\em rational polytope}.
A rational polytope is {\em integer} if all of its vertices (see Definition \ref{def:vertex}) have integer coordinates.
The {\em lattice point enumerator} $L_\PP(t)$ of $\PP$ is given by 
\[L_\PP(t)\triangleq |\ZZ^n\cap t\PP|,\mbox{ for all postive integers } t.\]

Ehrhart~\cite{Ehrhart:1962} introduced the lattice point enumerator for rational polytopes and 
showed that $L_\PP(t)$ is a quasipolynomial of degree $D$, 
where $D$ is given by the dimension of the polytope $\PP$.
Here, we define the {\em dimension} of a polytope to be the dimension of the affine space 
spanned by points in $\PP$.
A formal statement of Ehrhart's theorem is provided below.

\begin{thm}[{Ehrhart's theorem for polytopes \cite[Thm 3.8 and 3.23]{Beck.Robins:2007}}]
\label{thm:ehrhart}
If $\PP$ is a rational convex polytope of dimension $D$,
then $L_\PP(t)$ is a quasipolynomial of degree $D$.
Its period divides the least common multiple of 
the denominators of the coordinates of the vertices of $\PP$.
Furthermore, if $\PP$ is integer,
then $L_\PP(t)$ is a polynomial of degree $D$.
\end{thm}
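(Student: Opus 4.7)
The plan is to reduce to the case of a rational simplex via triangulation, then establish the quasipolynomial structure of the simplex case by analyzing the multivariate integer-point generating function of the cone over the simplex. First, I would triangulate $\PP$ into a finite collection of closed $D$-dimensional rational simplices $\Delta_1, \ldots, \Delta_m$ whose vertices are among those of $\PP$; all coordinates thus have denominators dividing the least common multiple $p$ of the denominators of $\PP$'s vertices. Inclusion-exclusion over the $\Delta_i$ and their lower-dimensional intersections (each again a rational polytope with vertex denominators dividing $p$) then reduces the problem to proving the theorem for a single rational $D$-simplex, after which one must check that quasipolynomials of period dividing $p$ combine through signed sums to give another such quasipolynomial of degree exactly $D$. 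The degree is preserved because intersections of distinct full-dimensional simplices are lower-dimensional and therefore contribute strictly lower degrees, so the leading terms from the full-dimensional pieces cannot all cancel.

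For the simplex case, given vertices $v_1, \ldots, v_{D+1}$ of $\Delta \subseteq \RR^n$, I would pass to the cone
\[
K = \left\{ \sum_{i=1}^{D+1} \lambda_i (v_i, 1) : \lambda_i \geq 0 \right\} \subseteq \RR^{n+1},
\]
and note that $L_\Delta(t)$ equals the number of lattice points of $K$ whose last coordinate equals $t$, since such points are exactly $(\vu, t)$ with $\vu \in \ZZ^n \cap t\Delta$. The key structural fact is that every lattice point of $K$ decomposes uniquely as a lattice point in the half-open fundamental parallelepiped $\Pi = \{\sum_i \lambda_i (v_i, 1) : 0 \leq \lambda_i < 1\}$ plus a nonnegative integer combination of the generators $(v_i, 1)$. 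Since $\Pi$ is bounded, the integer-point transform of $K$ becomes the rational function
\[
\sigma_K(\vz) = \frac{\sum_{\vu \in \ZZ^{n+1} \cap \Pi} \vz^{\vu}}{\prod_{i=1}^{D+1} \bigl(1 - \vz^{(v_i, 1)}\bigr)}.
\]
Specializing $z_1 = \cdots = z_n = 1$, $z_{n+1} = z$ and using $p v_i \in \ZZ^n$, each denominator factor becomes a polynomial dividing $1 - z^p$, so $\sigma_K(z)$ is rational with denominator dividing $(1 - z^p)^{D+1}$. A partial-fractions expansion identifies $L_\Delta(t) = [z^t]\,\sigma_K(z)$ as a quasipolynomial in $t$ whose period divides $p$. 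The pole at $z = 1$ has order exactly $D+1$ (because $\Delta$ is $D$-dimensional, so the generators $(v_i, 1)$ are linearly independent in $\RR^{n+1}$), contributing a polynomial leading term of degree $D$, while the remaining $p$-th roots of unity contribute lower-degree periodic corrections. When $\PP$ is integer, $p = 1$ and only the pole at $z = 1$ survives, yielding an honest polynomial of degree $D$; this property is preserved under the triangulation sum.

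The main obstacle I anticipate is the bookkeeping in the triangulation step: one must verify that inclusion-exclusion preserves the degree $D$ (by comparing the leading coefficients of the full-dimensional pieces, which are essentially normalized volumes, against those of the strictly lower-dimensional intersections) and that the period bound $p$ is preserved under signed sums of quasipolynomials of period dividing $p$. The cone analysis itself is mostly formal once the uniqueness of the fundamental-parallelepiped decomposition is in hand, which is a direct consequence of the linear independence of the generators $(v_i, 1)$ in $\RR^{n+1}$.
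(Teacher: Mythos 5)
This theorem is not proved in the paper at all: it is imported verbatim from Beck and Robins \cite{Beck.Robins:2007} (their Theorems 3.8 and 3.23), so there is no ``paper proof'' to compare against. Your sketch is essentially the proof given in that reference --- triangulate into simplices using only the vertices of $\PP$, cone over each simplex, tile the cone by translates of a half-open fundamental parallelepiped, and read off the Ehrhart counting function from the resulting rational generating function --- so the overall architecture is sound and is the standard one.

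Two points need repair. First, the tiling statement and the displayed identity for $\sigma_K$ are only valid when the cone generators are \emph{lattice} vectors. For a rational, non-integral simplex the translates $\Pi + \sum_i k_i (v_i,1)$ with $k_i \in \ZZ_{\geq 0}$ are shifts by non-lattice vectors, so $\ZZ^{n+1} \cap (\Pi + \vw) \neq (\ZZ^{n+1}\cap\Pi) + \vw$ and the formula $\sigma_K(\vz) = \sigma_\Pi(\vz)\prod_i (1-\vz^{(v_i,1)})^{-1}$ is false as written. The correct move (which your remark about $p v_i \in \ZZ^n$ gestures at but does not carry out) is to generate the cone by the integral vectors $(p v_i, p)$, take $\Pi$ to be the half-open parallelepiped they span, and obtain denominator factors that specialize to $1-z^p$; this is exactly where the period bound by $p$ enters. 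Second, your claim that the poles at $p$-th roots of unity $\zeta \neq 1$ contribute only ``lower-degree periodic corrections'' is not correct in general: those poles can also have order $D+1$, in which case they contribute terms $c_\zeta\,\zeta^{-t} t^{D}$ and the coefficient of $t^D$ is genuinely periodic. (This matters in this very paper: $\PP(S)$ is not full-dimensional in its ambient space, and the paper needs a separate argument, Proposition \ref{prop:leading}, to show the leading coefficient is constant when $D(S)$ has a loop.) The theorem as stated survives because the $\zeta=1$ contribution makes the \emph{average} of the degree-$D$ coefficient over a period equal to a positive constant, so $c_D$ is not identically zero; but that is the argument you should give, rather than asserting the other poles are harmless.
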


Motivated by \eqref{sol_interior}, we consider the {\em relative interior} of $\PP$. 
For the case where $\PP$ is convex, the relative interior, or interior, is given by
\[\Pint \triangleq\{\vu\in\PP: \mbox{ for all $\vu'\in\PP$, there exists an $\epsilon>0$ such that $\vu+\epsilon(\vu-\vu')\in\PP$}\}.\]

For a positive integer $t$, we consider the quantity 
\[L_\Pint(t)=|\ZZ^n\cap t\Pint|.\]
Ehrhart conjectured the following relation between $L_\PP(t)$ and $L_\Pint(t)$, 
proved by Macdonald~\cite{Macdonald:1971}.

\begin{thm}[{Ehrhart-Macdonald reciprocity \cite[Thm 4.1]{Beck.Robins:2007}}]
\label{thm:ehrhar-macdonald}
If $\PP$ is a rational convex polytope of dimension $D$,
then the evaluation of $L_\PP(t)$ at negative integers satisfies
\[L_\PP(-t)=(-1)^D L_\Pint(t).\]
\end{thm}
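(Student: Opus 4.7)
The plan is to prove the reciprocity via the generating-function machinery for rational polyhedral cones, following the standard route surveyed in Beck--Robins. First I would encode the family $\{t\PP\}_{t \geq 0}$ inside a single pointed cone: lift $\PP \subset \RR^n$ to $\widetilde{\PP} = \{1\} \times \PP \subset \RR^{n+1}$ and let $\C = \text{cone}(\widetilde{\PP})$ be the cone over it. Then lattice points of $\C$ at ``height'' $t$ (i.e.\ whose first coordinate equals $t$) are in bijection with $\ZZ^n \cap t\PP$, so the Ehrhart series
\[
\mathrm{Ehr}_\PP(z) \;=\; \sum_{t \geq 0} L_\PP(t)\, z^t
\]
is exactly the first-coordinate-graded generating function of $\C \cap \ZZ^{n+1}$, and similarly $\mathrm{Ehr}_{\Pint}(z) = \sum_{t \geq 1} L_{\Pint}(t)\, z^t$ enumerates lattice points in the relative interior $\C^\circ$.

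Next I would establish the reciprocity for a rational simplex $\PP = \Delta$. After clearing denominators, the cone $\C$ over $\Delta$ is a simplicial cone generated by integer vectors $\vw_0, \ldots, \vw_D$, and the half-open fundamental parallelepiped $\Pi = \{\sum_i \lambda_i \vw_i : 0 \leq \lambda_i < 1\}$ tiles $\C$ under translation by the semigroup $\ZZ_{\geq 0}\{\vw_0, \ldots, \vw_D\}$. This yields the rational-function identity
\[
\mathrm{Ehr}_\Delta(z) \;=\; \frac{\sum_{\vp \in \Pi \cap \ZZ^{n+1}} z^{p_1}}{\prod_{i=0}^D \bigl(1 - z^{(\vw_i)_1}\bigr)}.
\]
The same tiling argument applied to $\C^\circ$ (using the ``opposite'' half-open parallelepiped $\Pi^\circ = \{\sum_i \lambda_i \vw_i : 0 < \lambda_i \leq 1\}$) gives an analogous expression for $\mathrm{Ehr}_{\Delta^\circ}(z)$. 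The involution $\vp \mapsto \sum_i \vw_i - \vp$ identifies $\Pi \cap \ZZ^{n+1}$ with $\Pi^\circ \cap \ZZ^{n+1}$, and combined with the elementary identity $1/(1-z^{-a}) = -z^a/(1-z^a)$ this delivers the rational-function reciprocity $\mathrm{Ehr}_\Delta(1/z) = (-1)^{D+1}\, \mathrm{Ehr}_{\Delta^\circ}(z)$. Comparing coefficients of the resulting quasipolynomials (justified by Theorem \ref{thm:ehrhart}) translates this into the desired $L_\Delta(-t) = (-1)^D L_{\Delta^\circ}(t)$.

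To pass from simplices to a general rational convex polytope $\PP$, I would triangulate $\PP$ using only its own vertices into simplices $\Delta_1, \ldots, \Delta_m$, and write $L_\PP$ as an inclusion--exclusion sum of the $L_{\Delta_J^\circ}$ over the relatively open faces $\Delta_J^\circ$ of the triangulation. Applying the simplex reciprocity to each face and tracking dimensions, the alternating signs from the $(-1)^{\dim \Delta_J}$ factors collapse (by the Euler relations satisfied by the triangulation of $\PP$) into the single sign $(-1)^D$ multiplying the inclusion--exclusion sum that assembles $L_{\Pint}(t)$ from interior-face contributions.

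The main obstacle I anticipate is the bookkeeping in this last step: one must show that the boundary faces of the simplices not lying on $\partial \PP$ contribute cancelling pairs when summed, so that only genuine interior points of $\PP$ survive with the uniform sign $(-1)^D$. This is where a careful choice of triangulation (and the use of half-open decompositions of $\PP$, so that every lattice point is counted exactly once) is essential; the simplex case itself reduces to the comparatively transparent involution on the fundamental parallelepiped.
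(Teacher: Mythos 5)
This theorem is not proved in the paper at all: it is quoted verbatim as a known result (Theorem 4.1 of Beck--Robins), so there is no internal proof to compare against. Your outline is, in substance, the standard proof from that very source: cone over $\PP$ at height one, compute the Ehrhart series of a rational simplex via the half-open fundamental parallelepiped of the (integrally generated, after clearing denominators) simplicial cone, apply the involution $\vp\mapsto\sum_i\vw_i-\vp$ together with $1/(1-z^{-a})=-z^{a}/(1-z^{a})$ to get the rational-function reciprocity $\mathrm{Ehr}_\Delta(1/z)=(-1)^{D+1}\mathrm{Ehr}_{\Delta^\circ}(z)$, and then convert to the quasipolynomial statement. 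All of these steps are correct; the conversion step should invoke the standard lemma that for a quasipolynomial $f$ with generating function $F(z)=\sum_{t\ge0}f(t)z^t$ one has $F(1/z)=-\sum_{t\ge1}f(-t)z^t$ as rational functions, rather than a bare ``comparison of coefficients,'' but that is a routine point.

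The one place where your plan is genuinely incomplete is the passage from simplices to general rational polytopes, and you correctly identify it as such. The inclusion--exclusion over faces of a triangulation, with cancellation governed by Euler relations, can be made to work but is exactly the bookkeeping that tends to go wrong. The cleaner route---and the one Beck--Robins ultimately take---is to prove Stanley reciprocity for an arbitrary pointed rational cone first, by triangulating the \emph{cone} into simplicial cones and using a half-open (or irrationally shifted) decomposition so that every lattice point of the cone lies in exactly one piece; this eliminates the lower-dimensional inclusion--exclusion entirely, and the general polytope case then follows by applying the cone reciprocity to $\mathrm{cone}(\{1\}\times\PP)$. If you adopt that ordering of the argument, the ``main obstacle'' you anticipate disappears.
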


\subsection{Proof of Theorem \ref{thm:closed}}

Recall the definitions of $\vA(S)$ and $\vb$ in \eqref{sol_polytope}, and consider the polytope
\begin{equation}
\PP(S) \triangleq\{\vu\in \RR^{|S|}: \vA(S)\vu=\vb,\vu\ge \vzero\}, \label{polytope}\\
\end{equation}

Using lattice point enumerators, we may write $|\F(n;S)|=L_{\PP(S)}(n-\ell+1)$. 
Therefore, in view of Ehrhart's theorem, we need to determine the dimension of the 
polytope $\PP(S)$ and characterize the interior and the vertices of this polytope.

\begin{lem}\label{dim:ps}
Suppose that $D(S)$ is strongly connected. 
Then the dimension of $\PP(S)$ is $|S|-|V(S)|$. 
\end{lem}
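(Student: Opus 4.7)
The plan is to compute the dimension of $\PP(S)$ in two steps: first, determine the rank of the constraint matrix $\vA(S)$ (which gives an upper bound on $\dim \PP(S)$ via the enveloping affine subspace), and then exhibit a strictly positive feasible point in $\PP(S)$ (which shows this upper bound is achieved).

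For the rank calculation, the key observation is that since $D(S)$ is strongly connected, the underlying undirected graph is connected, and thus by the standard result cited in the introduction (\cite[\S II, Thm 9]{Bollobas:1998}), $\rank \vB(D(S)) = |V(S)| - 1$. It remains to show that appending the all-ones row $\vone^T$ increases the rank by exactly one, i.e., $\vone^T$ is not in the row span of $\vB(D(S))$. Suppose for contradiction $\vone^T = \vc^T \vB(D(S))$ for some $\vc \in \RR^{|V(S)|}$. Reading off the entry corresponding to an arc $e = (v, v')$, this forces $c_{v'} - c_v = 1$ for every non-loop arc, and $0 = 1$ for every loop, so loops are immediately excluded. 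Because $D(S)$ is strongly connected, it contains at least one directed cycle $v_0 \to v_1 \to \cdots \to v_k = v_0$; summing the relations along this cycle yields $0 = \sum_{i=0}^{k-1}(c_{v_{i+1}} - c_{v_i}) = k \geq 1$, a contradiction. Hence $\rank \vA(S) = |V(S)|$ and the affine subspace $L \triangleq \{\vu \in \RR^{|S|} : \vA(S)\vu = \vb\}$ has dimension $|S| - |V(S)|$.

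For the second step, I will construct a point $\vu^* \in \PP(S)$ with all coordinates strictly positive. Using strong connectivity, for each arc $e \in S$ I pick a closed walk $W_e$ that traverses $e$ (start at the tail of $e$, cross $e$, then take a path back to the tail using a $\vv' \to \vv$ walk guaranteed by strong connectivity). The concatenation $W$ of all the $W_e$'s is a closed walk using every arc at least once. Its incidence vector $\vchi(W) \in \ZZ_{>0}^{|S|}$ is strictly positive, satisfies $\vB(D(S))\vchi(W) = \vzero$ by the flow conservation identity mentioned in Section~\ref{sec:graph}, and has $\vone^T \vchi(W) = |W|$, the length of $W$. Setting $\vu^* \triangleq \vchi(W)/|W|$, we obtain a strictly positive element of $\PP(S)$.

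To conclude, since $\vu^* > \vzero$, a whole relative neighborhood of $\vu^*$ inside $L$ remains strictly positive and therefore lies in $\PP(S)$; in other words, $\PP(S)$ contains an open set of $L$. This means the affine hull of $\PP(S)$ equals $L$, so $\dim \PP(S) = \dim L = |S| - |V(S)|$, as required. The only genuinely subtle point is the rank-independence argument for $\vone^T$; the feasibility argument is routine once strong connectivity is invoked, and no difficulty arises from the presence or absence of loops beyond what is handled by the cycle-length contradiction.
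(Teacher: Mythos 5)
Your proposal is correct and follows essentially the same route as the paper: establish $\rank\vA(S)=|V(S)|$ by showing $\vone^T$ lies outside the row space of $\vB(D(S))$ via a cycle argument, produce a strictly positive point of $\PP(S)$ from a closed walk covering every arc, and conclude by perturbing within the null space of $\vA(S)$. The only cosmetic differences are that you rule out $\vone^T$ by telescoping a potential along a cycle rather than by the orthogonality relation $\vB(D(S))\vchi(C)=\vzero$ versus $\vone^T\vchi(C)=|C|$, and you normalize the covering-walk vector directly instead of combining it with an arbitrary solution of $\vA(S)\vu'=\vb$; both variants are sound.
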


\begin{proof}
  We first establish that the rank of $\vA(S)$ is $|V(S)|$.  Since
  $D(S)$ is connected, the rank of $\vB(D(S))$ is $|V(S)|-1$.  We next
  show that $\vone^T$ does not belong to the row space of $\vB(D(S))$.
  As $D(S)$ is strongly connected, $D(S)$ contains a cycle, say $C$.
  Since $\vB(D(S))\vchi(C) = 0$ but $\vone\vchi(C)=|C|\ne 0$, $\vone$
  does not belong to the row space of $\vB(D(S))$, so augmenting the
  matrix with the all-one row increases its rank by one.  Therefore,
  the nullity of $\vA(S)$ is $|S|-|V(S)|$. Hence, the dimension of $\PP(S)$ is at most $|S|-|V(S)|$.

Next, we show that there exists a $\vu>\vzero$ such that $\vA(S)\vu=\vb$.
Since the nullity of $\vB(D(S))$ is positive, there exists a $\vu'$ such that  $\vA(S)\vu'=\vb$.
Since $D(S)$ is strongly connected, there exists a closed walk on $D(S)$ that visits all arcs at least once.
In other words, there exists a vector $\vv>\vzero$ such that $\vA(S)\vv=\mu\vb$ for $\mu>0$.
Choose $\mu'$ sufficiently large so that $\vu'+\mu'\vv>\vzero$ and set $\vu=(\vu'+\mu'\vv)/(1+\mu'\mu)$. 
One can easily verify that $\vA(S)\vu=\vb$.

To complete the proof, we exhibit a set of $|S|-|V(S)|+1$ affinely independent points in $\PP(S)$.
Let $\vu_1,\vu_2,\ldots$, $\vu_{|S|-|V(S)|}$ be linearly independent vectors
that span the null space of $\vA(S)$.
Since $\vu$ has strictly positive entries, we can find $\epsilon$ small enough so that 
$\vu+\epsilon\vu_i$ belongs to $\PP(S)$ for all $ i\in [|S|-|V(S)|]$. 
Therefore $\{\vu,\vu+\epsilon\vu_1,\vu+\epsilon\vu_2,\ldots, \vu+\epsilon\vu_{|S|-|V(S)|}\}$ 
is the desired set of $|S|-|V(S)|+1$ affinely independent points in $\PP(S)$.
\end{proof}

\begin{lem}\label{int:ps}
Suppose $D(S)$ is strongly connected. 
Then $\Pint(S)=\{\vu\in\RR^{|S|}: \vA(S)\vu=\vb, \vu>\vzero\}$. Therefore, $|\E(n;S)|=L_{\Pint(S)}(n-\ell+1)$.
\end{lem}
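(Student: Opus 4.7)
The plan is to prove the set equality by establishing both inclusions, after which the enumeration identity $|\E(n;S)| = L_{\Pint(S)}(n-\ell+1)$ follows directly from scaling.

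For the inclusion $\{\vu : \vA(S)\vu = \vb,\ \vu > \vzero\} \subseteq \Pint(S)$, I would take any $\vu > \vzero$ with $\vA(S)\vu = \vb$ and any $\vu' \in \PP(S)$. The candidate perturbation is $\vu + \epsilon(\vu - \vu') = (1+\epsilon)\vu - \epsilon\vu'$. The affine equality is preserved automatically since both $\vu$ and $\vu'$ satisfy $\vA(S) \cdot = \vb$. For the nonnegativity constraint, I would observe that in coordinate $i$ we need $(1+\epsilon)u_i - \epsilon u'_i \geq 0$, and since $u_i > 0$ and the entries of $\vu'$ are bounded, this holds for all sufficiently small $\epsilon > 0$. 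This verifies the interior condition.

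For the reverse inclusion $\Pint(S) \subseteq \{\vu : \vA(S)\vu = \vb,\ \vu > \vzero\}$, I would argue by contraposition. Suppose $\vu \in \PP(S)$ but $u_i = 0$ for some coordinate $i$. The proof of Lemma~\ref{dim:ps} already exhibited a strictly positive point $\vu^* \in \PP(S)$ (constructed using the strongly connected assumption via a closed walk traversing every arc). Taking $\vu' = \vu^*$, the $i$-th coordinate of $\vu + \epsilon(\vu - \vu^*)$ equals $-\epsilon u^*_i < 0$ for every $\epsilon > 0$, so $\vu \notin \Pint(S)$.

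Finally, I would deduce the counting identity by noting that $(n-\ell+1)\Pint(S) = \{\vv \in \RR^{|S|} : \vA(S)\vv = (n-\ell+1)\vb,\ \vv > \vzero\}$, so intersecting with $\ZZ^{|S|}$ recovers exactly the set $\E(n;S)$ as defined in \eqref{sol_interior}. The main (minor) subtlety is making sure the perturbation parameter $\epsilon$ can be chosen uniformly small for the first inclusion; this is immediate from the compactness of $\PP(S)$ (or just the boundedness of coordinates of $\vu'$), so there is no real obstacle. The role of strong connectedness enters only through the guarantee, borrowed from Lemma~\ref{dim:ps}, that a strictly positive feasible point exists to serve as the witness in the contrapositive argument.
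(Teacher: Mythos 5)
Your proposal is correct and follows essentially the same route as the paper: both inclusions are established via the perturbation $\vu+\epsilon(\vu-\vu')$, with the strictly positive feasible point from Lemma~\ref{dim:ps} serving as the witness in the converse direction, and the counting identity follows by dilation. (Your worry about choosing $\epsilon$ uniformly is moot, since the paper's definition of the interior only requires an $\epsilon$ for each fixed $\vu'$.)
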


\begin{proof}
Let $\vu>\vzero$ be such that $\vA(S)\vu=\vb$.
For any $\vu'\in\PP(S)$, we have $\vA(S)\vu'=\vb$ and hence, $\vA(S)(\vu-\vu')=\vzero$.
Since $\vu$ has strictly positive entries, we choose $\epsilon$ small enough so that $\vu+\epsilon(\vu-\vu')\ge\vzero$.
Therefore, $\vu+\epsilon(\vu-\vu')$ belongs to $\PP(S)$ and $\vu$ belongs to the interior of $\PP(S)$. 

Conversely, let $\vu\in\PP(S)$, with $u_\vz=0$ for some $\vz\in S$.
Since $D(S)$ is strongly connected, from the proof of Lemma \ref{dim:ps}, there exists a $\vu'\in\PP(S)$ with $\vu'>\vzero$. Hence, for all $\epsilon>0$, the $\vz$-coordinate of $\vu+\epsilon(\vu-\vu')$ 
is given by $-\epsilon u_\vz'$, which is always negative. In other words, $\vu$ does not belong to $\Pint(S)$.
\end{proof}

Therefore, using Ehrhart's theorem and Ehrhart-Macdonald reciprocity along with Lemmas \ref{dim:ps} and \ref{int:ps}, we arrive at the fact that $|\E(n;S)|$ and $|\F(n;S)|$ are quasipolynomials in $n$ whose coefficients are periodic in $n$.

In order to determine the period of the quasipolynomials, we characterize the vertex set of $\PP(S)$.

\begin{defn}\label{def:vertex}
A point $\vv$ in a polytope is a {\em vertex} if 
$\vv$ cannot be expressed as a convex combination of the other points.
\end{defn}

\begin{lem} \label{lem:vertex-ps}
The vertex set of $\PP(S)$ is given by $\{\vchi(C)/|C|: C \mbox{ is a cycle in }D(S)\}$.
\end{lem}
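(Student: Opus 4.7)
The plan is to establish the claimed equality in two directions: first that every point of the form $\vchi(C)/|C|$ is a vertex of $\PP(S)$, and second that every vertex has this form. Before either direction, I would verify that each $\vchi(C)/|C|$ actually lies in $\PP(S)$: for a directed cycle $C$ we have $\vB(D(S))\vchi(C)=\vzero$ (as noted earlier in the paper for any closed walk) and $\vone^T\vchi(C)=|C|$, so scaling by $1/|C|$ gives a point satisfying $\vA(S)\vu=\vb$ with $\vu\ge\vzero$.

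To show that each $\vchi(C)/|C|$ is a vertex, I would examine the support of any would-be convex decomposition. Suppose $\vchi(C)/|C|=\alpha\vu+(1-\alpha)\vw$ with $\vu,\vw\in\PP(S)$ and $\alpha\in(0,1)$. Nonnegativity of $\vu$ and $\vw$ forces both of their supports to lie inside the arc set of $C$. Since $C$ is a simple directed cycle, each node of $C$ has exactly one incoming and one outgoing arc in $C$, so the equation $\vB(D(S))\vu=\vzero$ propagates a single scalar value around the cycle, yielding $\vu=c\,\vchi(C)$ for some $c\ge 0$. The normalization $\vone^T\vu=1$ then forces $c=1/|C|$, and the same argument applies to $\vw$. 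Hence $\vu=\vw=\vchi(C)/|C|$, contradicting a nontrivial decomposition. The loop case $|C|=1$ is handled identically: the loop column of $\vB(D(S))$ is zero, and the sum constraint alone pins $\vu$ down.

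For the converse direction, I would invoke the flow-decomposition theorem for nonnegative circulations: any $\vv\ge\vzero$ satisfying $\vB(D(S))\vv=\vzero$ admits a representation $\vv=\sum_i\alpha_i\vchi(C_i)$ with $\alpha_i\ge 0$ and each $C_i$ a directed cycle in $D(S)$. I would prove this by induction on the size of $\mathrm{supp}(\vv)$: starting from any arc with positive flow, conservation at each successively visited node guarantees that positive incoming flow yields an outgoing arc of positive flow, so one can extend a walk until a node is revisited, producing a directed cycle $C$; subtracting $(\min_{\vz\in C}v_\vz)\,\vchi(C)$ strictly reduces the support and preserves both nonnegativity and the circulation property, closing the induction. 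Combining this decomposition with $\vone^T\vv=1$ gives $\sum_i\alpha_i|C_i|=1$, so $\vv=\sum_i(\alpha_i|C_i|)\bigl(\vchi(C_i)/|C_i|\bigr)$ is a convex combination of the claimed points. Thus every vertex must equal some $\vchi(C)/|C|$.

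The main obstacle is executing the flow-decomposition step cleanly; although this is a classical result in combinatorial optimization, it requires maintaining the invariant that each subtraction strictly shrinks the support, and handling loops as a base case. An alternative route would be to characterize vertices as basic feasible solutions and compute the rank of $\vA(S)$ restricted to the support of a vertex, but this forces a more delicate case analysis of weakly connected components and circuit ranks of the support subgraph, and is considerably less transparent than the constructive argument above.
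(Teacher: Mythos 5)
Your proof is correct, and both directions land on the same underlying fact the paper uses -- that nonnegative circulations decompose into cycles -- but the execution differs in both halves. For the direction ``every vertex is some $\vchi(C)/|C|$,'' the paper first notes that a vertex of a rational polytope has rational coordinates, scales by an integer $\mu$ to build a multigraph with $\mu v_\vz$ parallel copies of each arc, observes its components are Eulerian, and decomposes the arc multiset into disjoint cycles; your flow-decomposition induction on the support achieves the same decomposition without needing rationality, at the cost of having to argue the strict support shrinkage yourself (which you do correctly, including loops as a base case). For the direction ``each $\vchi(C)/|C|$ is a vertex,'' your argument is actually more self-contained than the paper's: you take an arbitrary convex combination $\alpha\vu+(1-\alpha)\vw$ with $\vu,\vw\in\PP(S)$, use nonnegativity to confine the supports to the arcs of $C$, and let flow conservation propagate a single scalar around the simple cycle to force $\vu=\vw=\vchi(C)/|C|$. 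The paper instead assumes the competing points are themselves cycle-incidence vectors $\vchi(C_i)$ (implicitly leaning on the first direction plus the representation of polytope points via vertices) and derives $\alpha_j=0$ by evaluating at an arc $e_j\in C_j\setminus C$. Your version avoids that implicit dependency between the two halves, which is a genuine, if small, improvement in transparency.
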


\begin{proof}
First, observe that $\vchi(C)/|C|$ belongs to $\PP(S)$ for any cycle $C$ in $D(S)$. 
 
Let $\vv\in\PP(S)$ and suppose $\vv$ is a vertex.
Since $\PP(S)$ is rational, its vertex $\vv$ has rational coordinates (see \cite[Section 2.8, Appendix A]{Beck.Robins:2007})
Choose $\mu > 0$ so that $\mu\vv$ has integer entries. 
Construct the multigraph $D'$ on $V(S)$ 
by adding $\mu v_\vz$ copies of the arc $\vz$ for all $\vz\in S$.
Since $\vv\in\PP(S)$, $\vB(S)\mu\vv=\vzero$ and hence, each of the connected components of $D'$ are Eulerian. Therefore, the arc set of $D'$ can be decomposed into disjoint cycles.
Since $\vv$ is a vertex, there can only be one cycle and hence, $\vv=\vchi(C)/|C|$ for some cycle $C$.

Conversely, we show that for any cycle $C$ in $D(S)$,
$\vchi(C)/|C|$ cannot be expressed as a convex combination of other points in $\PP(S)$. 
Suppose otherwise. Then there exist cycles $C_1, C_2,\ldots, C_t$ distinct from $C$
and nonnegative scalars $\alpha_1,\alpha_2,\ldots, \alpha_t$
such that $\vchi(C)=\sum_{i=1}^t\alpha_i\vchi(C_i)$. For each $j$, let $e_j$ be an arc that belongs to $C_j$
but not $C$.
Then \[0=\vchi(C)_{e_j}=\sum_{1\le i\le t}\alpha_i\vchi(C_i)_{e_j}\ge \alpha_j\vchi(C_j)_{e_j}=\alpha_j.\] 
Hence, $\alpha_j=0$ for all $j$. Therefore, $\vchi(C)=\vzero$, a contradiction.
\end{proof}

Let $\lambda_S=\lcm\{|C|: C\mbox{ is a cycle in }D(S)\}$, where $\lcm$ denotes the lowest common multiple.
Then the period of the quasipolynomial $L_{\PP(S)}(n-\ell+1)$ divides $\lambda_S$ by Ehrhart's theorem.

Let us dilate the polytope $\PP(S)$ by $\lambda_S$ and consider the polytope $\lambda_S\PP(S)$ 
and $L_{\lambda_S\PP(S)}(t)$.
Since $\lambda_S\PP$ is integer, both 
$L_{\lambda_S\PP(S)}(t)$ and $L_{\lambda_S\Pint(S)}(t)$ are polynomials of degree $|S|-|V(S)|$.
Hence,
\[ |\Qb(n;S)|\ge L_{\lambda_S\Pint(S)}(t)=\Omega\left(t^{|S|-|V(S)|}\right), 
\mbox{ whenever $n-\ell+1=\lambda_St$ or $\lambda_S|(n-\ell+1)$,} \]
\noindent and therefore, 
$|\Qb(n;S)|=\Theta'\left(n^{|S|-|V(S)|}\right)$. This completes the proof of Theorem \ref{thm:closed}.

In the special case where $D(S)$ contains a loop, we can show further that the leading coefficients of the quasipolynomials 
$|\E(n;\bbracket{q}^\ell)|$ and $|\F(n;\bbracket{q}^\ell)|$ are the same and constant. 
This result is a direct consequence of Ehrhart-Macdonald reciprocity and the fact that $|\E(n;\bbracket{q}^\ell)|$ is monotonically increasing. We demonstrate the latter claim in Appendix \ref{app:aperiodic}.

Note that when $S=\bbracket{q}^\ell$, Corollary \ref{cor:aperiodic} yields \eqref{eq:asym}, a result of Jacquet \etal{} \cite{Jacquet.etal:2012}.

\begin{cor}\label{cor:aperiodic}
Suppose $D(S)$ is strongly connected. If $D(S)$ contains a loop, then
\begin{equation}
|\E(n;S)|\sim|\Qb(n;S)|\sim|\F(n;S)|\sim c(S)n^{|S|-|V(S)|}+O(n^{|S|-|V(S)|-1}), \mbox{ for some constant }c(S).
\end{equation}
\end{cor}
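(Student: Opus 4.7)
The plan is to exploit Ehrhart--Macdonald reciprocity together with a monotonicity argument afforded by the presence of a loop in $D(S)$. By Lemmas \ref{dim:ps} and \ref{int:ps}, $|\F(n;S)|=L_{\PP(S)}(n-\ell+1)$ and $|\E(n;S)|=L_{\Pint(S)}(n-\ell+1)$ are quasipolynomials of degree $D=|S|-|V(S)|$ in $n$. Writing $L_{\PP(S)}(t)=c_D(t)\,t^D+O(t^{D-1})$ with $c_D$ periodic, Theorem \ref{thm:ehrhar-macdonald} gives
\[
L_{\Pint(S)}(t)=(-1)^D L_{\PP(S)}(-t)=c_D(-t)\,t^D-c_{D-1}(-t)\,t^{D-1}+\cdots,
\]
so the leading coefficient of $|\E(n;S)|$, viewed as a quasipolynomial in $n$, is simply the reflection of $c_D$. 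Hence it suffices to show that $c_D$ is constant, since then $|\F(n;S)|$ and $|\E(n;S)|$ will share the same leading constant $c(S)$ and, by the sandwich \eqref{inequalities}, so will $|\Qb(n;S)|$.

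The next step is to establish that $|\E(n;S)|$ is monotonically non-decreasing in $n$. Let $\vz_0\in S$ be the label of the assumed loop; since the loop's initial and terminal vertex coincide, adding one copy of $\vz_0$ changes neither the in-degree nor the out-degree at any vertex. Define
\[
\phi:\E(n;S)\to\E(n+1;S),\qquad \phi(\vu)=\vu+\ve_{\vz_0},
\]
where $\ve_{\vz_0}$ is the standard basis vector indexed by $\vz_0$. Then $\phi(\vu)$ still satisfies the flow conservation equations \eqref{eq:flow}, its $L_1$-weight increases from $n-\ell+1$ to $n-\ell+2$ as required by \eqref{eq:sum}, and all its entries remain strictly positive. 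Since $\phi$ is clearly injective, $|\E(n;S)|\le|\E(n+1;S)|$ for every $n$.

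Finally, I would use this monotonicity to rule out any genuine periodic oscillation in the leading coefficient. Suppose the leading coefficient of $|\E(n;S)|$ as a quasipolynomial in $n$ has period $p$ and takes values $a_0,\ldots,a_{p-1}$ on the residue classes modulo $p$. For $n$ sufficiently large,
\[
|\E(n;S)|=a_{n\bmod p}\,n^D+O(n^{D-1}).
\]
If $a_j>a_{j+1}$ for some $j$, then for $n\equiv j\pmod p$ with $n$ large we would have $|\E(n;S)|>|\E(n+1;S)|$, contradicting monotonicity; hence $a_0\le a_1\le\cdots\le a_{p-1}\le a_0$, so all $a_i$ coincide with a common value $c(S)$. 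Theorem \ref{thm:closed} yields $|\E(n;S)|=\Omega'(n^D)$, which forces $c(S)>0$. Reflecting back via reciprocity shows that $c_D$ itself is the same constant $c(S)$, so $|\F(n;S)|=c(S)n^D+O(n^{D-1})$, and \eqref{inequalities} propagates this asymptotic to $|\Qb(n;S)|$.

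The main subtlety is the last step: one must check that monotonicity at step size one (rather than at step size $\lambda_S$) is genuinely strong enough to collapse the periodic leading coefficient, which is exactly where the loop is essential --- it lets us increment $n$ by a single unit rather than by the least common multiple of cycle lengths.
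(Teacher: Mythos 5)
Your proposal is correct and follows essentially the same route as the paper: use the loop to build an injection that shows the lattice-point count is monotone in $n$, deduce that the periodic leading coefficient of the degree-$(|S|-|V(S)|)$ quasipolynomial must in fact be constant, and transfer the common asymptotic among $|\E|$, $|\Qb|$, $|\F|$ via Ehrhart--Macdonald reciprocity and the sandwich \eqref{inequalities}. The only (immaterial) difference is that you run the monotonicity argument on $\E(n;S)$ while the paper runs it on $\F(n;S)$, and your consecutive-$n$ comparison is a slightly cleaner way to derive the contradiction than the paper's comparison of $L_{\PP(S)}(t_1+\tau)$ with $L_{\PP(S)}(t_2)$.
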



\section{Constructive Lower Bounds}\label{sec:lower}

Fix $S\subseteq \bbracket{q}^\ell$ and 
recall that $\pQ(n;S)$ denotes the set of all $\ell$-gram profile vectors of words in $\Q(n;S)$.
For ease of exposition, we henceforth identify words in $\Q(n;S)$ with their corresponding profile vectors in $\pQ(n;S)$.
In Section \ref{sec:decoding}, we provide an efficient method to 
map a profile vector in $\pQ(n;S)$ back to a $q$-ary codeword in $\Q(n;S)$,
Therefore, in this section, we construct GRCs as sets of profile vectors $\pQ(n;S)$ which we may map back to corresponding $q$-ary codewords in $\Q(n;S)$.

Suppose that $\C$ is an $(N,d)$-AECC. We construct GRCs from $\C$ via the following methods:

\begin{enumerate}
\item When $N=|S|$, we intersect $\C$ with $\pQ(n;S)$ to 
obtain an $\ell$-gram reconstruction code. In other words,
we pick out the codewords in $\C$ that are also profile vectors.
Specifically, $\C\cap\pQ(n;S)$ is an $(n,d;S)$-GRC.
However, the size $|\C\cap\pQ(n;S)|$ is usually smaller than $|\C|$ and so,
we provide estimates to $|\C\cap\pQ(n;S)|$ for a classical family of AECCs 
in Section \ref{sec:intersect}.

\item When $N<|S|$, we extend each codeword in $\C$ to 
a profile vector of length $|S|$ in $\pQ(n;q,\ell)$. 
In contrast to the previous construction, we may in principle obtain an $(n,d;q,\ell)$-GRC with the same cardinality as $\C$.
However, one may not always be able to extend an arbitrary word to a profile vector. Section \ref{sec:sys} describes one method of  
mapping words in $\bbracket{m}^{N}$ to $\pQ(n;q,\ell)$ that preserves the code size for a suitable choice of 
the parameters $m$ and $N$. In addition, this mapping also preserves the distance of the orginal code $\C$.
\end{enumerate}

\subsection{Intersection with $\pQ(n;S)$}\label{sec:intersect}
In this section, we assume $N=|S|$ and we estimate $|\C\cap\pQ(n;S)|$ when $\C$ belongs to a
classical family of AECCs proposed by Varshamov
\cite{Varshamov:1973}. Fix $d$ and let $p$ be a prime such that $p>d$
and $p>N$. 
{Choose $N$ distinct nonzero elements} $\alpha_1,\alpha_2,\ldots,\alpha_N$ in $\ZZ/p\ZZ$ and consider the
matrix\footnote{The value of $p$ may be determined in time polynomial in $N$ 
since there always exists a prime number between $N$ and $2N$ by Bertrand's postulate \cite{Chebychev:1850} and the running time of a primality test is polynomial in $\log N$ \cite{Agrawal:2004}.
The construction $\vH$ can be completed in time polynomial in $N$,
since multiplication in the field $\FF_p$ has time complexity polynomial in $\log N$
and there are $dN$ entries to fill in $\vH$.}
\[
\vH\triangleq
\left(\begin{array}{cccc}
\alpha_1 & \alpha_2 & \cdots & \alpha_N \\
\alpha_1^2 & \alpha_2^2 & \cdots & \alpha_N^2 \\
\vdots &\vdots& \ddots &\vdots\\
\alpha_1^d & \alpha_2^d & \cdots & \alpha_N^d 
\end{array}\right).
\]
Pick any vector $\vbeta\in(\ZZ/p\ZZ)^d$ and define the code
{
\begin{equation} \label{eq:varshamov}
\C(\vH,\vbeta)\triangleq\{\vu:\vH\vu\equiv\vbeta \bmod p\}.
\end{equation}}
Then, $\C(\vH,\vbeta)$ is an $(N,d+1)$-AECC \cite{Varshamov:1973}.
Hence, $\C(\vH,\vbeta)\cap \pQ(n;S)$ is an $(n,d+1;S)$-GRC for all $\vbeta\in(\ZZ/p\ZZ)^d$.
Therefore, by the pigeonhole principle, there exists a $\vbeta$ such that $|\C(\vH,\vbeta)\cap \pQ(n;S)|$
is at least $|\pQ(n;S)|/p^d$.
However, the choice of $\vbeta$ that guarantees this lower bound is not known.

In the rest of this section, we fix a certain choice of $\vH$ and $\vbeta$ and 
provide lower bounds on the size of $\C(\vH,\vbeta)\cap \pQ(n;S)$ 
as a function of $n$. 
As before, instead of looking at $\pQ(n;S)$ directly, 
we consider the set of closed words  $\Qb(n;S)$ and 
the corresponding set of profile vectors  $\pQb(n;S)$.

Let $\vbeta=\vzero$ and choose $\vH$ and $p$ based on the restricted de Bruijn digraph $D(S)$.
For an arbitrary matrix $\vM$, let $\nullplus\vM$ denote the set of vectors in the null space of $\vM$
that have positive entries.
We assume $D(S)$ to be strongly connected so that $\nullplus\vB(D(S))$ is nonempty.
Hence, we choose $\vH$ and $p$ such that $\C(\vH,\vzero)\cap \nullplus\vB(D(S))$ is nonempty.

Define the $(|V(S)|+1+d)\times (|S|+d)$-matrix
\[
\vA(\vH,S)\triangleq
\left(\begin{array}{c|c}
\vA(S) & \vzero\\ \hline
\vH &-p\vI_d
\end{array}\right),
\]
\noindent where $\vA(S)$ is as described in Section \ref{sec:graph}.
Let $\vb$ be a vector of length $|V(S)|+1+d$ that has 
$1$ as the first entry and zeros elsewhere, and define the polytope
\begin{equation}\label{eq:pgrc}
\Pgrc(\vH,S)\triangleq\{\vu\in\RR^{|S|+d}: \vA(\vH,S)\vu=\vb, \vu\ge\vzero\}
\end{equation}

Since $\E(n;S)\subseteq \pQb(n;S)\subseteq \pQ(n;S)$,  
$|\C(\vH,\vzero)\cap \E(n;S)|$ is a lower bound for  $|\C(\vH,\vzero)\cap \pQ(n;S)|$.
The following proposition demonstrates that $|\C(\vH,\vzero)\cap \E(n;S)|$
is given by the number of lattice points in the interior of a dilation of $\Pgrc(\vH,S)$.

\begin{prop} \label{codepolytope}
Let $\C(\vH,\vzero)$ and $\Pgrc(\vH,S)$ be defined as above.
If $D(S)$ is strongly connected and $\C(\vH,\vzero)\cap \nullplus\vB(D(S))$ is nonempty,
then $|\C(\vH,\vzero)\cap \E(n;S)|=\left\lvert\ZZ^{N+d}\cap (n-\ell+1)\Pgrcint(\vH,S)\right\rvert$.
\end{prop}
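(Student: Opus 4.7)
The plan is to set up an explicit bijection between $\C(\vH,\vzero)\cap\E(n;S)$ and $\ZZ^{N+d}\cap(n-\ell+1)\Pgrcint(\vH,S)$ via the ``lift''
\[\Phi(\vu) = \bigl(\vu,\, \tfrac{1}{p}\vH\vu\bigr).\]
The auxiliary coordinate block $\vk=\vH\vu/p$ is precisely the integer witness to the congruence $\vH\vu\equiv\vzero\pmod p$ that defines $\C(\vH,\vzero)$, and it is exactly what the bottom block $[\vH\mid -p\vI_d]$ of $\vA(\vH,S)$ is designed to record.

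First I would establish that $\Pgrcint(\vH,S)$ consists of those $\vw\in\Pgrc(\vH,S)$ with $\vw>\vzero$. The argument is identical to that of Lemma~\ref{int:ps}, provided that some strictly positive feasible point exists. To produce one, pick any $\vu_0\in\C(\vH,\vzero)\cap\nullplus\vB(D(S))$ (nonempty by hypothesis), rescale so that $\vone^T\vu_0=1$, and append $\vk_0:=\vH\vu_0/p$. Since $\alpha_1,\ldots,\alpha_N$ can be taken as positive integer representatives in $\{1,\ldots,p-1\}$, the matrix $\vH$ has strictly positive entries, so $\vk_0>\vzero$ and $(\vu_0,\vk_0)$ is the desired strictly positive point of $\Pgrc(\vH,S)$. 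After dilation by $n-\ell+1$,
\[(n-\ell+1)\Pgrcint(\vH,S)=\{\vw\in\RR^{N+d}:\vA(\vH,S)\vw=(n-\ell+1)\vb,\ \vw>\vzero\}.\]

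Next I would verify that $\Phi$ is a well-defined bijection onto the integer points of this set. For $\vu\in\C(\vH,\vzero)\cap\E(n;S)$, the entries of $\vu$ are positive integers, so each entry of $\vH\vu$ is a positive integer divisible by $p$ and hence at least $p$, making $\vk:=\vH\vu/p$ a strictly positive integer vector. The block structure of $\vA(\vH,S)$ then yields $\vA(\vH,S)(\vu,\vk)=(n-\ell+1)\vb$, so $\Phi(\vu)$ lies in the target set. Conversely, for any $(\vu,\vk)$ in the target set, the bottom block of the defining equation reads $\vH\vu=p\vk$, which simultaneously identifies $\vk$ uniquely from $\vu$ and certifies $\vu\in\C(\vH,\vzero)$; the top block, paired with $\vu>\vzero$, places $\vu$ in $\E(n;S)$. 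Hence $(\vu,\vk)\mapsto\vu$ is a two-sided inverse to $\Phi$.

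The proof is largely bookkeeping once the interior characterization is in place. The only real subtlety, which I would emphasize, is the positivity of the lifted coordinate $\vk$ in both directions: this relies crucially on the combination of strict positivity of the entries of $\vH$ (from choosing positive representatives of the nonzero $\alpha_i\in\ZZ/p\ZZ$) and strict positivity of $\vu\in\E(n;S)$, which together force $\vH\vu$ to be a vector of positive multiples of $p$.
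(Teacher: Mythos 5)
Your proposal is correct and follows essentially the same route as the paper: the same lift $\vu\mapsto(\vu,\tfrac{1}{p}\vH\vu)$ with the projection as its inverse, together with the interior characterization $\Pgrcint(\vH,S)=\{\vw:\vA(\vH,S)\vw=\vb,\ \vw>\vzero\}$ (which the paper proves in its appendix using the same strictly positive feasible point built from $\C(\vH,\vzero)\cap\nullplus\vB(D(S))$). Your explicit justification that the lifted block $\vk=\vH\vu/p$ is strictly positive -- via positive representatives of the $\alpha_i$ -- is a detail the paper leaves implicit, and it is a welcome addition.
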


\begin{proof}
\add{
Similar to Lemma \ref{int:ps}, we have that $\Pgrcint(\vH,S)=\{\vu\in\RR^{|S|+d}: \vA(\vH,S)\vu=\vb, \vu>\vzero\}$,
and we defer the proof of this claim to Appendix \ref{app:pgrc}.

To prove the desired sets have the same cardinality, we construct a bijection between the two maps.
Let $\vu>\vzero$ be such that $\vA(\vH,S)\vu=(n-\ell+1)\vb$.  Let $\vu=(\vu_0,\vbeta')$, 
where the vector $\vu_0$ is the vector $\vu$ restricted to the first $N$ coordinates and 
$\vbeta'$ is the vector $\vu$ restricted to the last $d$ coordinates.
Then $\vA(S)\vu_0=(n-\ell+1)\vb_0$, where $\vb_0$ is a vector of length $|V(S)|+1$ with one in its first coordinate and zeros elsewhere. Hence, $\vu_0\in\E(n;S)$.
On the other hand, $\vH\vu_0=p\vbeta'$ and so, $\vH\vu_0\equiv\vzero \bmod p$, 
implying that $\vu_0\in\C(\vH,\vzero)$.
Therefore, $\phi(\vu)=\vu_0$ is well-defined map from 
$\{\vu\in \ZZ^{N+d}: \vA(\vH,S)\vu=(n-\ell+1)\vb\mbox{ and }\vu>\vzero\}$ to $\C(\vH,\vzero)\cap \E(n;S)$. 

Next, consider $\vu_0\in\C(\vH,\vzero)\cap \E(n;S)$. Then $\vA(S)\vu_0=(n-\ell+1)\vb_0$.
Also, $\vH\vu_0\equiv\vzero \bmod p$ and hence, $\frac1p \vH\vu_0$ has integer coordinates. 
Then $\psi(\vu_0)=(\vu_0, \frac1p \vH\vu_0)$ is a well-defined map from  $\C(\vH,\vzero)\cap \E(n;S)$ to $\{\vu\in \ZZ^{N+d}: \vA(\vH,S)\vu=(n-\ell+1)\vb\mbox{ and }\vu>\vzero\}$.

Finally, to demonstrate that both $\phi$ and $\psi$ are bijections, 
we verify that $\psi\circ\phi$ and $\phi\circ\psi$ are both identify maps on  $\{\vu\in \ZZ^{N+d}: \vA(\vH,S)\vu=(n-\ell+1)\vb\mbox{ and }\vu>\vzero\}$ and  $\C(\vH,\vzero)\cap \E(n;S)$, respectively.
Indeed,
\begin{align*}
\psi\circ\phi((\vu_0,\vbeta')) &=\psi(\vu_0)=(\vu_0,\frac1p \vH\vu_0)=(\vu_0,\vbeta'),\\
\phi\circ\psi(\vu_0) &=\psi((\vu_0,\frac1p \vH\vu_0))=\vu_0.
\end{align*}
Hence, the two sets have the same cardinality.
}
%
\end{proof}

As before, we compute the dimension of $\Pgrc(\vH,S)$ and characterize its vertex set.
Since the proofs are similar to the ones in Section \ref{sec:enumerate},
the reader is referred to Appendix \ref{app:pgrc} for a detailed analysis.

\begin{lem}\label{lem:pgrc}
Let $\C(\vH,\vzero)$ and $\Pgrc(\vH,S)$ be defined as above.
Suppose further that $D(S)$ is strongly connected and $\C(\vH,\vzero)\cap \nullplus\vB(D(S))$ is nonempty.
The dimension of $\Pgrc(\vH,S)$ is $|S|-|V(S)|$, while its vertex set is given by
\[\left\{\left(\frac{\vchi(C)}{|C|},\frac{\vH\vchi(C)}{p|C|}\right): C \mbox{ is a cycle in }D(S)\right\}.\]
\end{lem}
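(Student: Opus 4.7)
My plan is to follow the template of Lemmas \ref{dim:ps} and \ref{lem:vertex-ps}, exploiting the block structure of $\vA(\vH,S)$ and the fact that each entry of $\vH$ can be identified with a strictly positive integer in $\{1,2,\ldots,p-1\}$ (since every $\alpha_i^j$ is a nonzero element of $\ZZ/p\ZZ$, so it admits a positive integer lift).

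For the dimension claim, I would first determine $\operatorname{rank}\vA(\vH,S)$. Because the lower-right block $-p\vI_d$ is invertible, applying column operations that subtract $\vH_{ij}/(-p)$ times column $|S|+i$ from column $j$ eliminates the $\vH$ block, reducing $\vA(\vH,S)$ to the block-diagonal form with blocks $\vA(S)$ and $-p\vI_d$. Combining this with the rank computation of $\vA(S)$ from Lemma \ref{dim:ps} yields $\operatorname{rank}\vA(\vH,S)=|V(S)|+d$, whence the nullity is $|S|-|V(S)|$ and $\dim\Pgrc(\vH,S)\le|S|-|V(S)|$. To prove equality, I would pick $\vu_0\in\C(\vH,\vzero)\cap\nullplus\vB(D(S))$ guaranteed by hypothesis, rescale so $\vone^T\vu_0=1$, and set $\vbeta_0=\vH\vu_0/p$. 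Because every entry of the lifted $\vH$ is a positive integer and $\vu_0>\vzero$, each coordinate of $\vH\vu_0$ is a positive integer divisible by $p$, so $\vbeta_0>\vzero$. The point $(\vu_0,\vbeta_0)$ therefore has all $|S|+d$ coordinates strictly positive; perturbing it by small multiples of a basis of $\ker\vA(\vH,S)$ then yields the required $|S|-|V(S)|+1$ affinely independent points, exactly as in Lemma \ref{dim:ps}.

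For the vertex set I would first check containment: each candidate $(\vchi(C)/|C|,\vH\vchi(C)/(p|C|))$ lies in $\Pgrc(\vH,S)$ because $\vB(D(S))\vchi(C)=\vzero$, $\vone^T\vchi(C)=|C|$, and $\vH\vchi(C)$ is entrywise nonnegative thanks to the positive lift. Projection onto the first $|S|$ coordinates then links $\Pgrc(\vH,S)$ to $\PP(S)$: if such a candidate were a convex combination of other points of $\Pgrc(\vH,S)$, the projection would express $\vchi(C)/|C|$ as a convex combination of other points of $\PP(S)$, contradicting Lemma \ref{lem:vertex-ps}. Conversely, if $(\vv,\vbeta')$ is any vertex of $\Pgrc(\vH,S)$, the bottom block of constraints forces $\vbeta'=\vH\vv/p$, so $(\vv,\vbeta')$ is determined by $\vv$ alone. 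Were $\vv$ not a vertex of $\PP(S)$, writing $\vv=\sum_i\alpha_i\vv_i$ with $\vv_i\in\PP(S)\setminus\{\vv\}$ would give $(\vv,\vbeta')=\sum_i\alpha_i(\vv_i,\vH\vv_i/p)$ by linearity, with each summand lying in $\Pgrc(\vH,S)$ (nonnegativity of $\vH\vv_i/p$ again uses the positive lift), contradicting vertexhood. Hence $\vv=\vchi(C)/|C|$ for some cycle $C$ in $D(S)$, and the vertex has the asserted form.

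The only real obstacle is verifying that the last $d$ coordinates cooperate: namely, that an interior point can be produced with strictly positive $\vbeta$-coordinates and that every candidate vertex has a nonnegative $\vbeta$-block. Both hurdles reduce to the single observation that $\vH$ admits a lift with positive integer entries, which is the technical ingredient needed to port the proofs of Section \ref{sec:enumerate} into the present setting.
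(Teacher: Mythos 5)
Your proposal is correct and follows essentially the same route as the paper's proof in Appendix C: the rank of $\vA(\vH,S)$ is read off from its block structure, an interior point is built from an element of $\C(\vH,\vzero)\cap \nullplus\vB(D(S))$, and the vertex characterization is reduced to Lemma \ref{lem:vertex-ps} via the first $\sizeof{S}$ coordinates together with the relation $\vv_2=\frac1p\vH\vv_1$. Your explicit remark that the entries of $\vH$ must be lifted to positive integers so that the $\vbeta$-block of the interior point is strictly positive and the $\vbeta$-blocks of the candidate vertices are nonnegative is a point the paper uses only implicitly, and is worth stating.
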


Let $\lambda_{\rm GRC}=\lcm\{|C|: C\mbox{ is a cycle in }D(S)\}\cup\{p\}$.
Then Lemma \ref{lem:pgrc}, Ehrhart's theorem and Ehrhart-Macdonald's reciprocity imply that 
$L_{\Pgrcint(\vH,S)}(t)$ is a quasipolynomial of degree $|S|-|V(S)|$ whose period divides $\lambda_{\rm GRC}$.
As in Section \ref{sec:enumerate}, we dilate the polytope $\Pgrc(\vH,S)$ by $\lambda_{\rm GRC}$ to 
obtain an integer polytope and assume that the polynomial 
$L_{\lambda_{\rm GRC}\Pgrc(\vH,S)}(t)$ has leading coefficient $c$.
Hence, whenever $n-\ell+1=\lambda_{\rm GRC} t$, that is, whenever $\lambda_{\rm GRC}|(n-\ell+1)$, 
\begin{align*}
|\C(\vH,\vzero)\cap \E(n;S)|= L_{\lambda_{\rm GRC}\Pgrcint(\vH,S)}(t)
&=ct^{|S|-|V(S)|}+O(t^{|S|-|V(S)|-1})\\
&=c(n/\lambda_{\rm GRC})^{|S|-|V(S)|}+O(n^{|S|-|V(S)|-1}).
\end{align*}
We denote $c/\lambda_{\rm GRC}^{|S|-|V(S)|}$ by $c(\vH,S)$ and
summarize the results in the following theorem.


\begin{thm}\label{thm:code}
Fix $S\subseteq \bbracket{q}^\ell$ and $d$. 
Choose $\vH$ and $p$ so that 
$\C(\vH,\vzero)$ is an $(|S|,d+1)$-AECC and 
$\C(\vH,\vzero)\cap \nullplus\vB(D(S))$ is nonempty.
Suppose that
$\lambda_{\rm GRC}=\lcm\{{\{|C|: C\mbox{ is a cycle in }D(S)\}\cup\{p\}\}}$.
Then there exists a constant $c(\vH,S)$ such that whenever $\lambda_{\rm GRC}|(n-\ell+1)$,
\[
|\C(\vH,\vzero)\cap\pQ(n;S)| \ge c(\vH,S)n^{|S|-|V(S)|}+O(n^{|S|-|V(S)|-1}).
\]
\end{thm}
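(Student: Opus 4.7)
The proof will proceed by combining Proposition~\ref{codepolytope}, Lemma~\ref{lem:pgrc}, and the Ehrhart machinery from Section~\ref{sec:enumerate}, in a manner essentially parallel to how Theorem~\ref{thm:closed} was derived from Lemmas~\ref{dim:ps}--\ref{lem:vertex-ps}. The overall strategy is to reduce the count $|\C(\vH,\vzero)\cap\pQ(n;S)|$ to a lattice-point count inside the rational polytope $\Pgrc(\vH,S)$, then apply Ehrhart's theorem after a suitable dilation.

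First, I would observe the chain $\E(n;S)\subseteq \pQb(n;S)\subseteq \pQ(n;S)$ established in \eqref{inequalities}, which immediately yields
\[
|\C(\vH,\vzero)\cap\pQ(n;S)|\ \ge\ |\C(\vH,\vzero)\cap\E(n;S)|.
\]
By Proposition~\ref{codepolytope}, the right-hand side equals $L_{\Pgrcint(\vH,S)}(n-\ell+1)$. By Lemma~\ref{lem:pgrc}, the polytope $\Pgrc(\vH,S)$ has dimension $|S|-|V(S)|$, and each vertex is of the form $(\vchi(C)/|C|,\vH\vchi(C)/(p|C|))$ for some cycle $C$ in $D(S)$; hence the denominators of all vertex coordinates divide $\lambda_{\rm GRC}=\lcm(\{|C|:C\text{ a cycle in }D(S)\}\cup\{p\})$. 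Thus, by Ehrhart's theorem (Theorem~\ref{thm:ehrhart}) applied to $\Pgrc(\vH,S)$, together with Ehrhart--Macdonald reciprocity (Theorem~\ref{thm:ehrhar-macdonald}), $L_{\Pgrcint(\vH,S)}(t)$ is a quasipolynomial of degree $|S|-|V(S)|$ whose period divides $\lambda_{\rm GRC}$.

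Next, I dilate by $\lambda_{\rm GRC}$ to pass to an integer polytope: since all vertex coordinates of $\lambda_{\rm GRC}\Pgrc(\vH,S)$ are integers, the function $L_{\lambda_{\rm GRC}\Pgrcint(\vH,S)}(t)$ is an honest polynomial in $t$ of degree $|S|-|V(S)|$. Writing its leading coefficient as $c$ and setting $c(\vH,S)\triangleq c/\lambda_{\rm GRC}^{|S|-|V(S)|}$, then whenever $\lambda_{\rm GRC}\mid(n-\ell+1)$ and we set $t=(n-\ell+1)/\lambda_{\rm GRC}$, we obtain
\[
|\C(\vH,\vzero)\cap\E(n;S)|\ =\ L_{\lambda_{\rm GRC}\Pgrcint(\vH,S)}(t)\ =\ c(\vH,S)\,n^{|S|-|V(S)|}+O(n^{|S|-|V(S)|-1}),
\]
which combined with the inequality above yields the theorem.

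The main obstacle is ensuring that the leading coefficient $c(\vH,S)$ is \emph{strictly positive}, so that the lower bound is nontrivial. Equivalently, one needs to verify that the relative interior $\Pgrcint(\vH,S)$ is nonempty, so that for all sufficiently large dilations there is at least one lattice point in the interior, forcing the leading term of the polynomial $L_{\lambda_{\rm GRC}\Pgrcint(\vH,S)}(t)$ to be nonzero. This is precisely where the two hypotheses enter: strong connectivity of $D(S)$ guarantees a strictly positive vector in $\nullplus\vB(D(S))$ (as in the proof of Lemma~\ref{dim:ps}), and the assumption that $\C(\vH,\vzero)\cap \nullplus\vB(D(S))$ is nonempty then guarantees a strictly positive point satisfying both the flow equations and the parity-check constraint $\vH\vu\equiv\vzero\bmod p$. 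Normalizing such a point produces an element of $\Pgrcint(\vH,S)$, so that $\Pgrcint(\vH,S)$ has the full dimension $|S|-|V(S)|$, and consequently $c(\vH,S)>0$. Once this positivity is in hand, the rest of the argument is a direct bookkeeping exercise in Ehrhart theory.
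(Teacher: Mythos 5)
Your proposal is correct and follows essentially the same route as the paper: the chain $\E(n;S)\subseteq\pQb(n;S)\subseteq\pQ(n;S)$, Proposition~\ref{codepolytope} to convert the count into $L_{\Pgrcint(\vH,S)}(n-\ell+1)$, Lemma~\ref{lem:pgrc} plus Ehrhart's theorem and Ehrhart--Macdonald reciprocity for the degree and period, and dilation by $\lambda_{\rm GRC}$ to extract the leading coefficient $c(\vH,S)=c/\lambda_{\rm GRC}^{|S|-|V(S)|}$. Your explicit remark that the hypotheses force $\Pgrcint(\vH,S)$ to be nonempty and full-dimensional, hence $c(\vH,S)>0$, is a point the paper leaves implicit but is consistent with its Appendix~\ref{app:pgrc}.
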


Hence, it follows from Theorem \ref{thm:code},  we have $C(n,d;S)=\Omega'(n^{|S|-|V(S)|})$ when $d$ is constant.
Since $C(n,d;S)\le |\Q(n;S)|=O(n^{|S|-|V(S)|})$, we have $C(n,d;S)=\Theta'(n^{|S|-|V(S)|})$.

\subsection{Systematic Encoding of Profile Vectors}\label{sec:sys}

In this subsection, we look at efficient one-to-one mappings from 
$\llbracket m\rrbracket^{N}$ to $\pQ(n;S)$.
As with usual constrained coding problems, 
we are interested in maximizing the number of messages, i.e. the size of $m^N$,
so that the number of messages is close to $|\pQ(n;S)|=\Theta'(n^{|S|-|V(S)|})$.
We achieve this goal by exhibiting a systematic encoder with $m=\Theta(n)$ and $N=|S|-|V(S)|-1$.
More formally, we prove the following theorem.

\begin{thm}[Systematic Encoder] \label{thm:sys}
Fix $n$ and $S\subseteq \bbracket{q}^\ell$. 
Pick any $m$ so that
\begin{equation}\label{eq:sys}
m\le \frac{n-\ell+1}{\binom{|V(S)|}{2}(q-1)+|S|-|V(S)|-1}.
\end{equation}
Suppose further that $D(S)$ is Hamiltonian and contains a loop.
Then, there exists a set $I\subseteq S$ of coordinates of size
$|S|-|V(S)|-1$ with the following property:
for any $\vv\in\llbracket m\rrbracket^{I}$, 
there exists an $\ell$-gram profile vector $\vu\in\pQ(n;S)$
such that $\vu|_I=\vv$.
Furthermore, $\vu$ can be found in time $O(|V(S)|)$.
\end{thm}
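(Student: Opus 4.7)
Our plan is to fix a Hamiltonian cycle $H$ and a loop $L$ of $D(S)$ (both exist by hypothesis) and use the $|V(S)|$ Hamiltonian arcs together with $L$ as ``parity'' coordinates determined by the input through flow conservation and the sum constraint. Writing $K:=|V(S)|$, label the Hamiltonian vertices $v_0,v_1,\ldots,v_{K-1}$ in cyclic order with Hamiltonian arcs $h_j=(v_j,v_{j+1\bmod K})$, and set $I:=S\setminus(\{h_0,\ldots,h_{K-1}\}\cup\{L\})$, which has the required size $|S|-K-1$.

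Given $\vv\in\bbracket{m}^I$, we set $u_e=v_e$ for $e\in I$ and close up the flow by ``forward routing''. For each non-loop $e=(v_a,v_b)\in I$ let $\ell(e):=(a-b)\bmod K$ be the length of the Hamiltonian return path from $v_b$ back to $v_a$, and set $\ell(e)=0$ for loops. Define
\[ u_{h_j} \;:=\; \sum_{\substack{e\in I\\ h_j\text{ on return path of }e}} v_e, \qquad u_L \;:=\; (n-\ell+1) - \sum_{e\in I} v_e - \sum_{j=0}^{K-1} u_{h_j}. \]
By construction each $I$-arc's outflow is cancelled by its own forward return flow, so flow conservation holds at every vertex, the sum is $n-\ell+1$, and $u_e, u_{h_j}\ge 0$ are immediate. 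It then remains to verify $u_L\ge 0$ and to produce an actual word from $\vu$ via Lemma~\ref{lem:euler}.

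The hard part will be showing $u_L \ge 0$. A short rearrangement turns this into $\sum_{e\in I} v_e(1+\ell(e))\le n-\ell+1$, which, using $v_e\le m-1$ together with the hypothesis $m[\binom{K}{2}(q-1)+|I|]\le n-\ell+1$, reduces to the combinatorial bound $\sum_{e\in I}\ell(e)\le \binom{K}{2}(q-1)$. Our approach is a per-vertex count: the non-loop $I$-arcs leaving any given $v_a$ point to pairwise distinct destinations in $V(S)\setminus\{v_a,v_{a+1}\}$, hence have pairwise distinct return lengths drawn from $\{1,2,\ldots,K-2\}$, and there are at most $q-1$ of them (the out-degree of $v_a$ in $D(S)$ is at most $q$ and one out-arc is Hamiltonian). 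We expect the main technical obstacle to be extracting the exact constant $\binom{K}{2}(q-1)$ from this counting: a naive per-vertex maximization is loose by a constant factor when $K\gg q$, so a more global argument — pairing each arc $(v_a,v_b)$ with its reverse $(v_b,v_a)$ when present (their return lengths sum to $K$), or aggregating across the full cycle — will be needed to tighten the constant.

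Finally, two small loose ends. Lemma~\ref{lem:euler} requires $D_\vu$ to be strongly connected, so if any $u_{h_j}$ happens to be zero we would add one full extra traversal of $H$ to $\vu$ (each $u_{h_j}$ goes up by $1$ and $u_L$ goes down by $K$) and check that the hypothesis leaves enough slack to keep $u_L\ge 0$; then the resulting word in $\Qb(n;S)$ has profile $\vu$ with $\vu|_I=\vv$. For the $O(|V(S)|)$ running time, the encoder only needs to output the $K+1$ non-free coordinates: after a cumulative-sum sweep around $H$ that produces all $u_{h_j}$, one subtraction yields $u_L$, using $O(K)$ arithmetic operations in total.
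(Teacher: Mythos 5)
Your setup is the same as the paper's: take the Hamiltonian arcs $H$ and the loop $L$ as dependent coordinates, let $I=S\setminus(H\cup\{L\})$ carry the message, and close up the flow on the Hamiltonian cycle. Your forward-routing circulation does satisfy flow conservation with nonnegative $u_{h_j}$ by construction, and your reduction of $u_L\ge 0$ to $\sum_{e\in I}v_e(1+\ell(e))\le n-\ell+1$ is correct. The gap is exactly where you flagged it: the inequality $\sum_{e\in I}\ell(e)\le\binom{|V(S)|}{2}(q-1)$ is the entire quantitative content of the theorem and is never proved. Your per-vertex count only yields roughly twice that constant, and the repair you sketch --- pairing $(v_a,v_b)$ with $(v_b,v_a)$ so the two return lengths sum to $K$ --- is not available, because restricted de Bruijn graphs are not symmetric (e.g.\ $D(2,3)$ contains the arc $00\to 01$ but not $01\to 00$). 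More seriously, I do not see why the bound should hold for every admissible $S$ and every Hamiltonian cycle: your circulation equals the \emph{minimal} nonnegative solution of the flow equations plus $c$ extra traversals of the whole cycle, where $c=\min_j u_{h_j}$, and nothing in your argument controls $c$; the excess $cK$ is precisely the slack you cannot close.

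The paper sidesteps the global bound entirely by working with the minimal solution from the start. It writes the flow equations as $x_i-x_{i+1}=r_i$, where $r_i$ is the net $I$-inflow at node $i$, applies a cyclic-relabeling lemma (every zero-sum sequence has a rotation all of whose partial sums are nonnegative; Appendix D) and sets $x_i=1+\sum_{j<i}r_j$. The only combinatorial input is then the per-node estimate $r_i<(q-1)m$, immediate from the fact that each node has at most $q-1$ incoming arcs in $I$, each carrying a value below $m$; summing $x_i<1+(i-1)(q-1)m$ produces the constant $\binom{|V(S)|}{2}(q-1)$ exactly. So the correct fix for your proof is not a sharper bound on $\sum_{e}\ell(e)$ but a renormalization: subtract $\min_j u_{h_j}$ copies of $\vchi(H)$ from your circulation and add the corresponding mass back to $u_L$, which turns your solution into the paper's and lets the per-node estimate apply. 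Your remaining points (padding with an extra traversal of $H$ to make the multidigraph Eulerian on its support, and the $O(|V(S)|)$ cumulative-sum computation) are fine and match the paper.
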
 

In other words, given any word $\vv$ of length $N=|I|=|S|-|V(S)|-1$, 
one can always extend it to obtain a profile vector $\vu\in \pQ(n;S)$ of length $|S|$.
As pointed out earlier, this theorem provides a simple way of constructing $\ell$-gram codes from AECCs and we sketch the construction in what follows.

Let $\phi_{\rm sys}(\vv)$ denote the profile vector resulting from Theorem \ref{thm:sys} given input $\vv$.
Consider an $m$-ary $(N,d)$-AECC $\C$ with $N=|S|-|V(S)|-1$ and $m$ satisfying \eqref{eq:sys}.
Let $\phi_{\rm sys}(\C)\triangleq \{\phi_{\rm sys}(\vv): \vv\in \C\}$.
Then $\phi_{\rm sys}(\C)\subseteq \pQ(n;S)$. Furthermore, $\phi_{\rm sys}(\C)$  has asymmetric distance at least $d$ 
since restricting the code $\phi_{\rm sys}(\C)$ on the coordinates in $I$ yields $\C$.
Hence, we have the following corollary.

\begin{cor}\label{cor:sys}
Fix $n$ and $S\subseteq \bbracket{q}^\ell$ and pick $m$ satisfying \eqref{eq:sys}.
Suppose $D(S)$ is Hamiltonian and contains a loop.
If  $\C$ is an $m$-ary $(|S|-|V(S)|-1,d)$-AECC, then
$\phi_{\rm sys}(\C)\triangleq \{\phi_{\rm sys}(\vv): \vv\in \C\}$
is a $(n,d;S)$-GRC.
%
\end{cor}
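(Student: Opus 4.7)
The plan is to leverage Theorem \ref{thm:sys} directly: that theorem hands us a map $\phi_{\rm sys}: \bbracket{m}^I \to \pQ(n;S)$ with the property that $\phi_{\rm sys}(\vv)|_I = \vv$ for every $\vv$. So the containment $\phi_{\rm sys}(\C) \subseteq \pQ(n;S)$ is immediate, and the entire task reduces to verifying that the asymmetric distance of the image set is at least $d$.

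First, I would observe that $\phi_{\rm sys}$ is injective. Indeed, if $\phi_{\rm sys}(\vv) = \phi_{\rm sys}(\vv')$, then restricting both sides to the coordinates in $I$ yields $\vv = \vv'$. Hence $|\phi_{\rm sys}(\C)| = |\C|$, and distinct codewords of $\C$ give rise to distinct elements of $\phi_{\rm sys}(\C)$.

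Next, I would establish the key monotonicity fact: for any index set $I \subseteq S$ and any $\vu,\vu' \in \ZZ^{|S|}_{\ge 0}$,
\[
\da(\vu,\vu') \ge \da(\vu|_I, \vu'|_I).
\]
This follows directly from the definition, since $\Delta(\vu,\vu') = \sum_{\vz \in S} \max(u_\vz - u'_\vz, 0) \ge \sum_{\vz \in I} \max(u_\vz - u'_\vz, 0) = \Delta(\vu|_I, \vu'|_I)$, and likewise with the roles reversed, so taking the maximum preserves the inequality.

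Combining these ingredients finishes the proof. Given distinct $\vv, \vv' \in \C$, Theorem \ref{thm:sys} gives $\phi_{\rm sys}(\vv)|_I = \vv$ and $\phi_{\rm sys}(\vv')|_I = \vv'$, and the monotonicity observation yields
\[
\dg(\phi_{\rm sys}(\vv), \phi_{\rm sys}(\vv'); S) = \da\bigl(\phi_{\rm sys}(\vv), \phi_{\rm sys}(\vv')\bigr) \ge \da(\vv, \vv') \ge d,
\]
since $\C$ is an $m$-ary $(|S|-|V(S)|-1, d)$-AECC. Therefore $\phi_{\rm sys}(\C)$ is an $(n,d;S)$-GRC, as claimed. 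I do not expect any genuine obstacle here: Theorem \ref{thm:sys} supplies the hard existence statement, and everything else is a one-line consequence of the definition of asymmetric distance. The only mild subtlety worth flagging is that $\phi_{\rm sys}(\C)$ must be viewed as a set of profile vectors (equivalently, equivalence classes in $\Q(n;S)$), which is consistent with the abuse of notation already introduced in the remark at the end of Section \ref{sec:ecc}.
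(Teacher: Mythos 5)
Your proposal is correct and follows exactly the paper's own (one-sentence) justification: the paper argues that $\phi_{\rm sys}(\C)$ has asymmetric distance at least $d$ "since restricting the code $\phi_{\rm sys}(\C)$ on the coordinates in $I$ yields $\C$," which is precisely your monotonicity observation $\da(\vu,\vu')\ge \da(\vu|_I,\vu'|_I)$ combined with the systematic property from Theorem \ref{thm:sys}. You simply spell out the details (injectivity and the coordinate-restriction inequality) that the paper leaves implicit.
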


For compactness, we write $V$, $\vA$ and $\vB$, instead of $V(S)$, $\vA(S)$ and $\vB(D(S))$.
To prove Theorem \ref{thm:sys}, consider the restricted de Bruijn digraph $D(S)$.
By the assumptions of the theorem, denote the set of $|V|$ arcs in a Hamiltonian cycle as $H$
and the arc corresponding to a loop by $\va_0$. 
We set $I$ to be $S\setminus (H\cup \{\va_0\})$.

We reorder the coordinates so that the arcs in $H$ are ordered first, 
followed by the arc $\va_0$ and then the arcs in $I$.
So, given $\vv=(v_1,v_2,\ldots, v_{|I|})\in\bbracket{m}^{|I|}$,
the proof of Theorem \ref{thm:sys} essentially reduces to finding integers 
 $x_1,x_2,\ldots, x_{|V|},y$ 
such that 
\begin{equation}
  \label{eq:sysvector}
  \vA\left(x_1,x_2,\ldots, x_{|V|},y,v_1,v_2,\ldots,v_{|I|}\right)^T=(n-\ell+1)\vb.  
\end{equation}
Considering the first row of $\vA$ separately from the remaining rows, we see
that \eqref{eq:sysvector} is equivalent to the following system of equations:
\begin{align}
  \sum_{i=1}^{|V|} x_i+y &=(n-\ell+1)-\sum_{i=1}^{|I|} v_i, \label{sumflow} \\
  \vzero=\vB\left(\begin{array}{c} x_1\\ \vdots \\ x_{|V|} \\ y \\ u_1 \\ \vdots \\ u_{|I|}\end{array}\right) &=
  \vB\left(\begin{array}{c} x_1\\ \vdots \\ x_{|V|} \\ 0 \\ 0 \\ \vdots \\ 0\end{array}\right) +
  \vB\left(\begin{array}{c} 0\\ \vdots \\ 0 \\ y \\ 0 \\ \vdots \\ 0\end{array}\right) +
  \vB\left(\begin{array}{c} 0\\ \vdots \\ 0 \\ 0 \\ u_1 \\ \vdots \\ u_{|I|}\end{array}\right).\label{flowconstraint0}
\end{align}
Since the first $|V|$ columns of $\vB$ correspond to the arcs in $H$, we have 
\[ 
\vB\left( x_1, \ldots , x_{|V|} , 0 , 0 , \ldots , 0\right)^T =
\left(\begin{array}{c} x_2-x_1\\ x_3-x_2 \\ \vdots \\ x_1-x_{|V|} \end{array}\right).\]
Since the $(|V|+1)$-th column of $\vB$ is a $\vzero$-column, we have 
$\vB\left( 0, \ldots , 0 , y , 0 , \ldots , 0\right)^T =\vzero$ for any $y$.

For the final summand, let $\vB\left( 0, \ldots , 0 , 0 , v_1 , \ldots , v_{|I|}\right)^T =(r_1,r_2,\ldots,r_{|V|})^T$.
We can then rewrite \eqref{flowconstraint0} as
\begin{equation}\label{flowconstraint}
x_i-x_{i+1}=r_i, \mbox{ for } 1\le i \le |V|-1.
\end{equation}
Since $\vone^T\vB=\vzero^T$, we have 
$\vone^T(r_1,r_2,\ldots,r_{|V|})^T=\sum_{i=1}^{|V|} r_i=0$. 
Furthermore, we assume without loss of generality 
that $\sum_{i=1}^j r_i\ge 0$, for all $1\le j\le |V|$.
This can be achieved by cyclically relabelling the nodes and we prove this in Appendix \ref{app:relabelling}.

It suffices to show that an integer solution for 
\eqref{flowconstraint} and \eqref{sumflow} exists, satisfying $y \geq 1$ and $x_i\ge 1$ for $i\in  [|V|]$.
Consider the following choices of $x_i$ and $y$:
\begin{align*}
  x_i &= 1 + \sum_{j=1}^{i-1}r_j, \\
  y &= (n-\ell+1)-\sum_{i=1}^{|I|} v_i - \sum_{i=1}^{|V|}x_i.
\end{align*}
Clearly, $x_i$ and $y$ satisfy \eqref{sumflow} and \eqref{flowconstraint}. 
Since each $v_i$ is an integer, all $r_i$ are
integers, so $x_i$ and $y$ are also integers. Furthermore, each $x_i
\geq 1$, since we chose the labeling so that $\sum_{j=1}^{i-1}r_j \geq
0$. We still must show that $y \geq 1$.

First, we observe that $r_i< (q-1)m$ for all $i$, since each node has at most $(q-1)$ incoming arcs in $I$ and 
by design, each $v_i$ is strictly less than $m$. Thus, each $x_i$ satisfies
\[ x_i < 1 + (i-1)(q-1)m. \]
Summing over all $i$, we have
\[ \sum_{i=1}^{|V|}x_i \leq \sum_{i=1}^{|I|}(i-1)(q-1)m = (q-1)m{\sizeof{V} \choose 2}. \]
Since also each $v_i \leq m$, we have
\[ y \geq (n-\ell+1) - m\left[\sizeof{I} + (q-1){\sizeof{V} \choose 2}\right]. \]
By the choice of $m$, it follows that $y\ge 0$. This completes the proof of Theorem~\ref{thm:sys}.

\begin{exa}\label{exa:systematic}
Let $S=\bbracket{2}^3$ and let $n=20$. Then Theorem \ref{thm:sys} states that there is a systematic encoder
that maps words from $\llbracket 2\rrbracket^{3}$ into $\pQ(20;2,3)$.
Following the convention in Fig. \ref{fig:debruijn} and Example \ref{exa:2}, we list all eight encoded profile vectors (as edge labellings on $D(\bbracket{2}^3)$) with their {\bf systematic components} highlighted in boldface.

\begin{center}
\begin{tabular}{cccc}
\xymatrix@=8mm{
00 \ar[d]_{1}\ar@(u,l)[]_{14} & 10\ar[l]_{1}\ar@/^/[dl]^{{\bf0}}\\
01 \ar[r]_{1}\ar@/^/[ur]^{{\bf0}} & 11 \ar[u]_{1}\ar@(d,r)[]_{{\bf0}}
} &
\xymatrix@=8mm{
00 \ar[d]_{1}\ar@(u,l)[]_{13} & 10\ar[l]_{1}\ar@/^/[dl]^{{\bf0}}\\
01 \ar[r]_{1}\ar@/^/[ur]^{{\bf0}} & 11 \ar[u]_{1}\ar@(d,r)[]_{{\bf1}}
} &
\xymatrix@=8mm{
00 \ar[d]_{1}\ar@(u,l)[]_{11} & 10\ar[l]_{1}\ar@/^/[dl]^{{\bf1}}\\
01 \ar[r]_{2}\ar@/^/[ur]^{{\bf0}} & 11 \ar[u]_{2}\ar@(d,r)[]_{{\bf0}}
} &
\xymatrix@=8mm{
00 \ar[d]_{1}\ar@(u,l)[]_{10} & 10\ar[l]_{1}\ar@/^/[dl]^{{\bf1}}\\
01 \ar[r]_{2}\ar@/^/[ur]^{{\bf0}} & 11 \ar[u]_{2}\ar@(d,r)[]_{{\bf1}}
} \\

\xymatrix@=8mm{
00 \ar[d]_{1}\ar@(u,l)[]_{11} & 10\ar[l]_{2}\ar@/^/[dl]^{{\bf0}}\\
01 \ar[r]_{2}\ar@/^/[ur]^{{\bf1}} & 11 \ar[u]_{1}\ar@(d,r)[]_{{\bf0}}
} &
\xymatrix@=8mm{
00 \ar[d]_{2}\ar@(u,l)[]_{10} & 10\ar[l]_{2}\ar@/^/[dl]^{{\bf0}}\\
01 \ar[r]_{1}\ar@/^/[ur]^{{\bf1}} & 11 \ar[u]_{1}\ar@(d,r)[]_{{\bf1}}
} &
\xymatrix@=8mm{
00 \ar[d]_{1}\ar@(u,l)[]_{12} & 10\ar[l]_{1}\ar@/^/[dl]^{{\bf1}}\\
01 \ar[r]_{1}\ar@/^/[ur]^{{\bf1}} & 11 \ar[u]_{1}\ar@(d,r)[]_{{\bf0}}
} &
\xymatrix@=8mm{
00 \ar[d]_{1}\ar@(u,l)[]_{11} & 10\ar[l]_{1}\ar@/^/[dl]^{{\bf1}}\\
01 \ar[r]_{1}\ar@/^/[ur]^{{\bf1}} & 11 \ar[u]_{1}\ar@(d,r)[]_{{\bf1}}
} 
\end{tabular}
\end{center}

For instance, the codeword $000\in\bbracket{2}^3$ is mapped to the profile vector
$(14,1,{\bf 0},1,1,{\bf 0}, 1,{\bf 0})$.
Via the {\sc Euler} map described in Section \ref{sec:decoding}, 
this profile vector is mapped to $00\cdots01100\in\Q(20,2,3)$.

Observe that we can systematically encode $\llbracket 2\rrbracket^3$ into $\pQ(n;2,3)$
even when $n$ is smaller than 20.
In fact, in this example, we can systematically encode $\llbracket 2\rrbracket^{3}$ into $\pQ(10;2,3)$.
In general, we can can systematically encode $\llbracket m\rrbracket^{3}$ into $\pQ(4m+2;2,3)$.
In this case, the size of the message set is approximately $n^3/8$ while
the number of all possible closed profile vectors is approximately $n^4/288$~\cite{Jacquet.etal:2012}.

%

In Section \ref{sec:numerical} and Example \ref{exa:intersect}, we observe that
the construction given in Section \ref{sec:intersect} yields a larger code size.
Nevertheless, the systematic encoder is conceptually simple and 
furthermore, the systematic property of the construction in Section \ref{sec:sys} 
can be exploited to integrate rank modulation codes into our coding schemes for DNA storage, useful for
automatic decoding via \emph{hybridization}. We describe this procedure in detail in Section \ref{sec:rankmod}.
\end{exa}

\section{Numerical Computations for $S=S(q,\ell;q^*,[w_1,w_2])$} \label{sec:numerical}

In what follows, we summarize numerical results for code sizes pertaining to the special case when $S=S(q,\ell;q^*,[w_1,w_2])$.

By Proposition \ref{debruijn}, $D(q,\ell;q^*,[w_1,w_2])$ is Eulerian and therefore strongly connected.
In other words, Theorem \ref{thm:closed} applies and we have $|\Q(n;S)|=\Theta'(n^{|S|-|V(S)|})$,
where $|S|$ is given by $|S(q,\ell;q^*,[w_1,w_2])|=\sum_{w=w_1}^{w_2}\binom{\ell}{w}(q^*)^w(q-q^*)^{\ell-w}$,
while $|V(S)|$ is given by $|S(q,\ell-1;q^*,[w_1-1,w_2])|=\sum_{w=w_1-1}^{w_2}\binom{\ell-1}{w}(q^*)^w(q-q^*)^{\ell-1-w}$.

Let $D=|S|-|V(S)|$. We determine next the coefficient of $n^D$ in $|\Q(n;S)|$.
When $w_2=\ell$, the digraph $D(q,\ell;q^*,[w_1,\ell])$ 
contains the loop that corresponds to the $\ell$-gram $\vone^T$.
Hence, by Corollary \ref{cor:aperiodic}, the desired coefficient is constant and 
we denote it by $c(q,\ell;q^*,[w_1,\ell])$. 
When $S=\bbracket{q}^\ell$, we denote this coefficient by $c(q,\ell)$ and 
remark that this value corresponds to the constant defined in Theorem~\ref{jacquet}.

When $w_2<\ell$, the digraph $D(q,\ell;q^*,[w_1,w_2])$ does not contain any loops.
Recall from Section \ref{sec:enumerate} the definitions of $\PP(S)$, $\lambda_S$ and $L_{\PP(S)}(n-\ell+1)$.
In particular, recall that the lattice point enumerator $L_{\PP(S)}(n-\ell+1)$ is a quasipolynomial of degree $D$
whose period divides $\lambda_S$ and that consequently, the coefficient of $n^D$ in $|\Q(n;S)|$ is periodic.
For ease of presentation, we only determine the coefficient of $n^D$
for those values for which $\lambda_S$ divides $(n-\ell+1)$ or $n-\ell+1=\lambda_St$ for some integer $t$.
In this instance, the desired coefficient is given by $c(q,\ell;q^*,[w_1,w_2])\triangleq c/\lambda_S^D$,
where $c$ is the leading coefficient of the polynomial $L_{\lambda_S\PP(S)}(t)$.

In summary, we have the following corollary.

\begin{cor}\label{cor:cql}
Consider $S=S(q,\ell;q^*,[w_1,w_2])$ and define 

$$D=\sum_{w=w_1}^{w_2}\binom{\ell}{w}(q^*)^w(q-q^*)^{\ell-w}-
\sum_{w=w_1-1}^{w_2}\binom{\ell-1}{w}(q^*)^w(q-q^*)^{\ell-1-w}.$$ 

Suppose that $\lambda_S=\lcm\{|C|: C \mbox{ is a cycle in }D(S)\}$.
Then for some constant $c(q,\ell;q^*,[w_1,w_2])$,
\begin{enumerate}
\item If $w_2=\ell$, $|\Q(n;S)|=c(q,\ell;q^*,[w_1,\ell])n^D+O(n^{D-1})$ for all $n$;
\item Otherwise, if $w_2<\ell$,  $|\Q(n;S)|=c(q,\ell;q^*,[w_1,w_2])n^D+O(n^{D-1})$ for all $n$ such that 
$\lambda_S|(n-\ell+1)$.
\end{enumerate}
When $S=\bbracket{q}^\ell$, we write $c(q,\ell)$ instead of $c(q,\ell;1,[0,\ell])$.
\end{cor}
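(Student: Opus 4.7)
The plan is to assemble the corollary directly from the general machinery established earlier in the paper. The two quantities to identify are $|S|$ and $|V(S)|$, after which Theorem~\ref{thm:closed}, Corollary~\ref{cor:notclosed}, and Corollary~\ref{cor:aperiodic} do all the heavy lifting.

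First I would compute $|S|$ by straightforward combinatorial counting: an $\ell$-gram of $q^*$-weight $w$ is obtained by choosing $w$ positions for symbols in $[q-q^*,q-1]$ and $\ell-w$ positions for symbols in $[0,q-q^*-1]$, giving $\binom{\ell}{w}(q^*)^w(q-q^*)^{\ell-w}$, and summing over $w\in[w_1,w_2]$ yields the stated formula.

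Next I would identify $|V(S)|$. By definition $V(S)$ consists of all $(\ell-1)$-grams that appear as a prefix or suffix of some $\vz\in S$. For an $(\ell-1)$-gram $\vv$ of $q^*$-weight $w$, any one-symbol extension $\vv c$ or $c\vv$ has weight in $\{w,w+1\}$, since appending a symbol contributes either $0$ or $1$ to the weight. Thus $\vv\in V(S)$ iff $\{w,w+1\}\cap[w_1,w_2]\neq\emptyset$, which (using $w_1\ge 1$ and $w\le \ell-1$) is equivalent to $w\in[w_1-1,w_2]$. Crucially, because $1\le q^*\le q-1$, both a weight-$0$ and a weight-$1$ extension symbol are always available, so every such $\vv$ really does extend to an $\ell$-gram in $S$. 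Counting gives the stated formula for $|V(S)|$, and hence $D=|S|-|V(S)|$.

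With $|S|$ and $|V(S)|$ identified, Proposition~\ref{debruijn} tells us that $D(S)$ is Eulerian, hence strongly connected, so Theorem~\ref{thm:closed} gives $|\pQb(n;S)|=\Theta'(n^{D})$, and Corollary~\ref{cor:notclosed} promotes this to $|\pQ(n;S)|=\Theta'(n^{D})$. It remains to split the two cases. If $w_2=\ell$, then $(q-1)^\ell$ lies in $S$ (its weight is $\ell\in[w_1,\ell]$), so $D(S)$ contains a loop at the vertex $(q-1)^{\ell-1}$; Corollary~\ref{cor:aperiodic} then guarantees that the leading coefficient of the degree-$D$ quasipolynomial is genuinely constant in $n$, which we call $c(q,\ell;q^*,[w_1,\ell])$, and the asymptotic formula holds for every $n$. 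If instead $w_2<\ell$, no $\ell$-gram $c^\ell$ lies in $S$ (the only candidate weights are $0$ and $\ell$, both excluded since $w_1\ge 1$ and $w_2<\ell$), so $D(S)$ has no loops and Corollary~\ref{cor:aperiodic} no longer applies. Here the enumeration $|\pQ(n;S)|$ is the quasipolynomial $L_{\PP(S)}(n-\ell+1)$ from the proof of Theorem~\ref{thm:closed}, whose period divides $\lambda_S$; restricting attention to $n$ with $\lambda_S\mid(n-\ell+1)$ reduces it to the honest polynomial $L_{\lambda_S\PP(S)}(t)$ of degree $D$, whose leading coefficient $c$ yields $c(q,\ell;q^*,[w_1,w_2])=c/\lambda_S^D$ as the coefficient of $n^D$.

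There is no real obstacle; the one subtlety is verifying the vertex count, which requires the observation above that the hypothesis $1\le q^*\le q-1$ guarantees both extension directions are always realizable, so the ``candidate'' weight range $[w_1-1,w_2]$ for $(\ell-1)$-grams is tight.
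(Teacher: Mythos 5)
Your proposal matches the paper's own derivation essentially step for step: compute $|S|$ and $|V(S)|$ as the cardinalities of the weight-restricted sets, invoke Proposition~\ref{debruijn} so that Theorem~\ref{thm:closed} and Corollary~\ref{cor:notclosed} apply, and then split on whether $D(S)$ contains a loop, using Corollary~\ref{cor:aperiodic} when $w_2=\ell$ and the period-$\lambda_S$ quasipolynomial argument (dilating $\PP(S)$ by $\lambda_S$) when $w_2<\ell$. The only difference is that you spell out the verification that $|V(S)|=|S(q,\ell-1;q^*,[w_1-1,w_2])|$ and identify the loop as the arc $(q-1)^{\ell}$, details the paper asserts without proof.
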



We determine $c(q,\ell;q^*,[w_1,w_2])$ via numerical computations.
Computing the lattice point enumerator is a fundamental problem in discrete optimization and 
many algorithms and software implementations have been developed for such purposes.
We make use of the software {\tt LattE}, developed by Baldoni \etal{} \cite{Baldoni.etal:2014},
which is based on an algorithm of Barvinok~\cite{Barvinok:1994}. 
Barvinok's algorithm essentially triangulates the supporting cones of the vertices
of a polytope to obtain simplicial cones and then decompose the simplicial cones recursively
into unimodular cones. As the rational generating functions of the
resulting unimodular cones can be written down easily, 
adding and subtracting them according to the inclusion-exclusion principle and 
Brion's theorem gives the desired rational generating function of the polytope.
The algorithm is shown to enumerate the number of lattice points in polynomial time
when the dimension of the polytope is fixed.

Using {\tt LattE}, we computed the desired coefficients for various values of $(q,\ell;q^*,[w_1,w_2])$.
As an illustrative example, {\tt LattE} determined $c(2,4)= 283/9754214400$ with computational time less than a minute. This shows that although the exact evaluation of $c(q,\ell)$ is prohibitively complex (as pointed by Jacquet \etal{} \cite{Jacquet.etal:2012}), numerical computations of $c(q,\ell)$ and $c(q,\ell;q^*,[w_1,w_2])$
are feasible for certain moderate values of parameters. We tabulate these values in Table \ref{tab:cql} and \ref{tab:cqlw}.

\begin{table}[!t]
\caption{Computation of $c(q,\ell)$}
\label{tab:cql}
\begin{center}
\begin{tabular}{cccc}\hline
$q$ & $\ell$ & $D$ & $c(q,\ell)$  \\\hline
2 & 2 & 2 & 1/4* \\
3 & 2 & 6 & 1/8640* \\
4 & 2 & 12 & 1/45984153600* \\
5 & 2 & 20 & 37/84081093402584678400000* \\
2 & 3 & 4 & 1/288* \\
3 & 3 & 18 & 887/358450977137334681600000 \\
2 & 4 & 8 & 283/9754214400 \\
2 & 5 & 16 & 722299813/94556837526637331349504000000 \\
\hline
\end{tabular}
\vspace{2mm}

Entries marked by an asterisk refer to values that were also derived by Jacquet \etal{} \cite{Jacquet.etal:2012}.
\end{center}
\end{table}

\begin{table}[!t]
\caption{Computation of $c(q,\ell;q^*,[w_1,w_2])$. We fixed $q=2$ and $q^*=1$.}
\label{tab:cqlw}
\begin{center}
\begin{tabular}{cccccc}\hline
$\ell$ & $w_1$ & $w_2$ & $D$ & $\lambda_S$ & $c(2,\ell;1,[w_1,w_2])$  \\\hline
4 & 2 & 3 & 3 & 60 & 1/360 \\
4 & 2 & 4 & 4 & -- & 1/1440 \\
5 & 2 & 3 & 6 & 120 & 1/5184000 \\
5 & 2 & 4 & 10 & 27720 & 40337/34566497280000000 \\
5 & 2 & 5 & 11 & -- & 3667/34566497280000000 \\
5 & 3 & 4 & 4 & 420 & 23/302400 \\
5 & 3 & 5 & 5 & -- & 23/1512000 \\
6 & 3 & 4 & 10 & 65520 & 43919/754932300595200000 \\
6 & 3 & 5 & 15 & 5354228880 & 1106713336565579/739506679855711968646397952000000000 \\
6 & 4 & 5 & 5 & 840 & 1/518400 \\
\hline
\end{tabular}
\end{center}
\end{table}

Next, we provide numerical results for lower bounds on the code sizes derived in Section \ref{sec:intersect}.

When $S=S(q,\ell;q^*,[w_1,w_2])$, the digraph $D(S)$ is Eulerian by Proposition \ref{debruijn}
and hence, $\vone$ belongs to ${\rm Null}_{>0}\vB(D(S))$.
Therefore, if $\C(\vH,\vzero)$ contains the vector $\vone$ as well, 
$\C(\vH,\vzero)\cap {\rm Null}_{>0}\vB(D(S))$ is nonempty and 
the condition of Theorem \ref{thm:code} is satisfied. 
Hence, we have the following corollary.

\begin{cor}\label{cor:cHS}
Let $S=S(q,\ell;q^*,[w_1,w_2])$. 
Fix $d$ and choose $\vH$ and $p$ such that $\C(\vH,\vzero)$
is an $(|S|,d+1)$-AECC containing $\vone$.
Suppose that
$\lambda_{\rm GRC}=\lcm\{{\{|C|: C\mbox{ is a cycle in }D(S)\}\cup\{p\}\}}$.
Then there exists a constant $c(\vH,S)$ such that whenever $\lambda_{\rm GRC}|(n-\ell+1)$,
\[
|\C(\vH,\vzero)\cap\pQ(n;S)|\ge c(\vH,S){n}^D+O(n^{D-1}),
\]
where $D=\sizeof{S} - \sizeof{V(S)}=\sum_{w=w_1}^{w_2}\binom{\ell}{w}(q^*)^w(q-q^*)^{\ell-w}-
\sum_{w=w_1-1}^{w_2}\binom{\ell-1}{w}(q^*)^w(q-q^*)^{\ell-1-w}$.  
\end{cor}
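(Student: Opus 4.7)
The plan is to derive the corollary as a direct specialization of Theorem \ref{thm:code}. The only work is to verify the two hypotheses of that theorem for the restricted family $S = S(q,\ell;q^*,[w_1,w_2])$, namely that (a) $D(S)$ is strongly connected and (b) $\C(\vH,\vzero) \cap \nullplus \vB(D(S))$ is nonempty. Once these are in hand, the stated asymptotic lower bound on $|\C(\vH,\vzero)\cap\pQ(n;S)|$ is exactly the conclusion of Theorem \ref{thm:code}, and only the explicit evaluation of $D = |S|-|V(S)|$ remains.

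For (a), I would simply invoke Proposition \ref{debruijn}, which asserts that $D(q,\ell;q^*,[w_1,w_2])$ is Eulerian; every Eulerian digraph is strongly connected. For (b), the key observation is that because $D(S)$ is Eulerian, every node satisfies in-degree equals out-degree. Writing out the $v$-th coordinate of $\vB(D(S))\vone$ yields $\mathrm{indeg}(v)-\mathrm{outdeg}(v)=0$, so $\vB(D(S))\vone = \vzero$. Since $\vone > \vzero$, this places $\vone$ in $\nullplus \vB(D(S))$; combined with the standing assumption $\vone \in \C(\vH,\vzero)$, the intersection contains $\vone$ and is therefore nonempty. This is precisely the content of the paragraph preceding the corollary statement.

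Having verified both hypotheses, Theorem \ref{thm:code} yields the bound
\[
|\C(\vH,\vzero)\cap \pQ(n;S)| \ge c(\vH,S)\, n^{|S|-|V(S)|} + O(n^{|S|-|V(S)|-1})
\]
whenever $\lambda_{\mathrm{GRC}}\mid (n-\ell+1)$. To finish, I would record the combinatorial identities $|S| = \sum_{w=w_1}^{w_2}\binom{\ell}{w}(q^*)^{w}(q-q^*)^{\ell-w}$, by counting $\ell$-grams in which the prescribed number of symbols lie in $[q-q^*,q-1]$, and $|V(S)| = \sum_{w=w_1-1}^{w_2}\binom{\ell-1}{w}(q^*)^{w}(q-q^*)^{\ell-1-w}$, since the nodes of $D(S)$ are precisely the $(\ell-1)$-grams whose weight lies in $[w_1-1,w_2]$.

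There is essentially no main obstacle in this argument: the heavy lifting has already been done in Theorem \ref{thm:code} (via Ehrhart theory) and Proposition \ref{debruijn}. The only mild bookkeeping step is to confirm the formula for $|V(S)|$, which requires checking that every $(\ell-1)$-gram of $q^*$-weight $w\in[w_1-1,w_2]$ really does appear as an endpoint of some element of $S$. This holds because such a string can be prepended (or appended) by a suitable symbol to produce an $\ell$-gram whose weight is either $w$ or $w+1$, and for $w\in[w_1-1,w_2]$ at least one of these choices lands in $[w_1,w_2]$.
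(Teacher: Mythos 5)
Your proposal is correct and follows essentially the same route as the paper: the paper likewise derives the corollary from Theorem \ref{thm:code} by invoking Proposition \ref{debruijn} to get that $D(S)$ is Eulerian (hence strongly connected), observing that $\vone\in\nullplus\vB(D(S))$ so that the assumption $\vone\in\C(\vH,\vzero)$ makes the intersection nonempty, and then substituting the earlier-stated formulas for $|S|$ and $|V(S)|$.
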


\begin{exa}\label{exa:intersect}
Let $S=\bbracket{2}^3$ and $d=2$.
Choose $p=13$ and 
\[\vH=\left(\begin{array}{cccc cccc}
1 & 2 & 3 & 5 & 8 & 10 & 11 & 12 \\
1 & 4 & 9 & 12 & 12 & 9 & 4 & 1
\end{array}\right).\]
Then $\C(\vH,\vzero)$ is an $(8,3)$-AECC containing $\vone$.
We have $\lambda_{\rm GRC}=\lcm\{{\{1,2,\ldots,8\}\cup\{13\}\}}=156$.
Using {\tt LattE}, we compute the lattice point enumerator of $\lambda_{\rm GRC}\Pgrcint(\vH,S)$ to be
$12168  t^{4} - 1248  t^{3} + 131 t^{2} - 16 t + 1$.
Hence, for $n=156t+2$, 
the number of codewords in 
$\C(\vH,\vzero)\cap\E(n;2,3)$ is given by $12168 t^{4} - 1248 t^{3} + 131 t^{2} - 16 t + 1$.
When $t=1$ or $n=158$, there exist a $(158,3;2,3)$-GRC of size at least $11036$.

We compare this result with the one provided by the construction using the systematic encoder described in Section \ref{sec:sys}
and in particular, Example \ref{exa:systematic}.
When $n=158$, we can systematically encode words in $\bbracket{39}^3$ into $\pQ(158;2,3)$.
Hence, we consider a $39$-ary $(3,3)$-AECC. Using Varshamov's construction with $p_1=5$ and 
$\vH_1=\left(\begin{array}{ccc}
1 & 2 & 3 \\
1 & 4 & 4
\end{array}\right)$, we obtain a $39$-ary $(3,3)$-AECC of size $2368$.
Applying the systematic encoder in Theorem \ref{thm:sys}, we construct a $(158,3;2,3)$-GRC of size $2368$.
\end{exa}

\begin{table*}[!t]
\caption{Computations of $c(\vH,S)$}
\label{tab:cqld}
When $S=\bbracket{2}^3$, we have $c(2,3)=1/288$. 
\begin{center}
\begin{tabular}{ccccccc}\hline
$d$ & $p$ & $D$ & $\lambda_{\rm GRC}$ & $c(\vH,S)$ & $c(2,3)/p^d$ \\ \hline
1 & 11 & 4 & 132 & 1/3168 & 1/3168 \\
2 & 13 & 4 & 156 & 1/48672 & 1/48672 \\
3 & 13 & 4 & 156 & 1/632736 & 1/632736 \\
4 & 17 & 4 & 204 & 1/24054048 & 1/24054048 \\
5 & 17 & 4 & 204 & 1/24054048 & 1/408918816 \\
6 & 17 & 4 & 204 & 1/24054048 & 1/6951619872 \\
\hline
\end{tabular}
\end{center}

When $S=\bbracket{2}^4$, we have $c(2,4)=283/9754214400$. 
\begin{center}
\begin{tabular}{cccccc}\hline
$d$ & $p$ & $D$ & $\lambda_{\rm GRC}$ & $c(\vH,S)$ & $c(2,4)/p^d$ \\ \hline
1 & 17 & 8 & 14280 & 283/165821644800 & 283/165821644800 \\
2 & 17 & 8 & 14280 & 283/2818967961600 & 283/2818967961600 \\
3 & 17 & 8 & 14280 & 283/47922455347200 & 283/47922455347200 \\\hline
\end{tabular}
\end{center}

When $S=S(2,5;1,[2,3])$, we have $c(2,5;1,[2,3])=1/5184000$.

\begin{center}
\begin{tabular}{cccccc}\hline
$d$ & $p$ & $D$ & $\lambda_{\rm GRC}$ & $c(\vH,S)$ & $c(2,5;1,[2,3])/p^d$ \\ \hline
1 & 23 & 6 & 2760 & 1/119232000 & 1/119232000 \\
2 & 29 & 6 & 3480 & 1/4359744000 & 1/4359744000 \\
3 & 29 & 6 & 3480 & 1/126432576000 & 1/126432576000 \\ \hline
\end{tabular}
\end{center}
 \end{table*}
 
Using {\tt LattE}, we determined $c(\vH,S)$ for moderate parameter values and summarize the results in Table \ref{tab:cqld}.

We conclude this section with a conjecture on the relation between $c(q,\ell)$ and $c(\vH,S)$.
\begin{conj}
Fix $q,\ell,d$. Choose $\vH$ and $p$ such that $\C(\vH,\vzero)$ is an $(N,d+1)$-AECC containing $\vone$.
Let $c(q,\ell)$ and $c(\vH,S)$ be the constants defined in Corollaries \ref{cor:cql} and \ref{cor:cHS}, respectively.
Then $c(\vH,S)\ge c(q,\ell)/p^d$.
\end{conj}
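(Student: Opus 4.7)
The plan is to reduce the conjecture to a lattice-equidistribution statement via the coset partition of $\ZZ^{|S|}$ induced by the map $\vu\mapsto\vH\vu\bmod p$. Since every $\vu\in\ZZ^{|S|}$ belongs to exactly one such coset,
\[
\E(n;S) \;=\; \bigsqcup_{\vbeta\in(\ZZ/p\ZZ)^d}\bigl(\C(\vH,\vbeta)\cap\E(n;S)\bigr),
\]
and summing yields
\[
\sum_{\vbeta}\bigl|\C(\vH,\vbeta)\cap\E(n;S)\bigr| \;=\; |\E(n;S)| \;=\; c(q,\ell)\,n^D + O\bigl(n^{D-1}\bigr)
\]
along the subsequence $\lambda_S\mid(n-\ell+1)$, by Corollary~\ref{cor:cql}, where $D=|S|-|V(S)|$.

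The main step is to show that each summand above has the same leading asymptotic coefficient. Fixing any particular integer solution $\vu_\vbeta$ to $\vA(S)\vu=(n-\ell+1)\vb$ with $\vH\vu_\vbeta\equiv\vbeta\pmod p$, one rewrites
\[
\C(\vH,\vbeta)\cap\E(n;S) \;=\; \vu_\vbeta + \bigl(\C(\vH,\vzero)\cap(\E(n;S)-\vu_\vbeta)\bigr),
\]
so each summand counts points of the fixed sublattice $\C(\vH,\vzero)\cap\Lambda_0$, where $\Lambda_0\triangleq\{\vw\in\ZZ^{|S|}:\vA(S)\vw=\vzero\}$, inside translates of a common dilation of $\Pint(S)$. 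Standard lattice equidistribution for rational polytopes---most naturally obtained by Fourier inversion on $(\ZZ/p\ZZ)^d$, in which the trivial character yields the main term and the nontrivial characters contribute exponential sums of strictly smaller order---then produces a uniform leading coefficient $c(q,\ell)/J$ over every \emph{attainable} $\vbeta$ (those in $\vH\vu_\vbeta+\vH(\Lambda_0)\bmod p$), where $J=[\Lambda_0:\Lambda_0\cap\C(\vH,\vzero)]\le p^d$ is the index of the sublattice cut out inside $\Lambda_0$.

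The hypothesis $\vone\in\C(\vH,\vzero)$ forces $\vbeta=\vzero$ to be attainable: for $n$ with $|S|\mid(n-\ell+1)$, the vector $\tfrac{n-\ell+1}{|S|}\vone$ is a strictly positive integer solution of $\vA(S)\vu=(n-\ell+1)\vb$ (since $D(S)$ is Eulerian in the families under consideration, so $\vB(D(S))\vone=\vzero$ and $\vA(S)\vone=|S|\vb$), and it is mapped to $\vzero$ by $\vH\bmod p$. Consequently $c(\vH,S)=c(q,\ell)/J\ge c(q,\ell)/p^d$, with equality precisely when $\vH\bmod p$ surjects onto $(\ZZ/p\ZZ)^d$ on $\Lambda_0$. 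The main obstacle is the equidistribution step itself: because $\PP(S)$ is lower-dimensional in $\RR^{|S|}$, one must first identify the correct full-rank lattice in its affine hull and compute the sublattice index there, and then bound each nontrivial character sum $\sum_{\vu\in\E(n;S)}e^{2\pi i\,\vy^T\vH\vu/p}$ by $o(n^D)$ for $\vy\ne\vzero$; the latter is intuitively clear from the smoothness of the polytope boundary but requires a careful appeal to, or an adaptation of, general equidistribution theorems for lattice points in dilated rational polytopes.
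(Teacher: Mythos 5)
The statement you are addressing appears in the paper only as a conjecture --- the authors give no proof, just the heuristic that the pigeonhole bound $|\pQ(n;S)|/p^d$ should be achieved asymptotically for the choice $\vbeta=\vzero$ --- so there is no argument of theirs to compare yours against. On its own terms, your coset-decomposition strategy is the natural one and would prove a sharper statement, and it is corroborated by the paper's numerics: in Table \ref{tab:cqld} for $S=\bbracket{2}^3$ (where $D=4$) one reads off $c(\vH,S)=c(2,3)/p^{\min(d,4)}$, which is exactly the value $c(q,\ell)/J$ with $J=[\ker_{\ZZ}\vA(S):\ker_{\ZZ}\vA(S)\cap\C(\vH,\vzero)]\le p^{\min(d,D)}\le p^d$ that your argument predicts.

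The gap is that the step carrying the entire content of the conjecture --- that every attainable residue class $\vbeta$ contributes the same degree-$D$ coefficient --- is asserted rather than proved, and you flag this yourself. Concretely, after fixing $\vw_{\vbeta}\in\ker_{\ZZ}\vA(S)$ with $\vH\vw_{\vbeta}\equiv\vbeta\pmod p$, translation by $\pm\vw_{\vbeta}$ gives injections between $\C(\vH,\vzero)\cap\E(n;S)$ and $\C(\vH,\vbeta)\cap\E(n;S)$ up to the set of points of $\F(n;S)$ having some coordinate at most $\|\vw_{\vbeta}\|_\infty$, and one must show this exceptional set has size $O(n^{D-1})$. This is believable --- those points lie near dilations of the proper faces $\PP(S)\cap\{u_\vz=0\}$, each of dimension at most $D-1$ --- but it does not follow from a generic ``smoothness of the boundary'' appeal: $\PP(S)$ is not full-dimensional, the leading Ehrhart coefficient of a non-full-dimensional rational polytope can be periodic, and the paper itself needs the loop argument of Proposition \ref{prop:leading} merely to make the $\vbeta=\vzero$ coefficient constant. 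The nontrivial character sums you invoke are this same boundary estimate in disguise and are left unbounded. Two smaller repairs are also needed: the constants $c(q,\ell)$ and $c(\vH,S)$ are defined along different arithmetic progressions, so the computation must be run along $\lambda_{\rm GRC}\mid(n-\ell+1)$, where Corollary \ref{cor:aperiodic} still gives $|\E(n;S)|=c(q,\ell)n^D+O(n^{D-1})$; and your witness $\frac{n-\ell+1}{|S|}\vone$ for attainability of $\vbeta=\vzero$ requires $|S|\mid(n-\ell+1)$, which $\lambda_{\rm GRC}\mid(n-\ell+1)$ does not imply (for $q=2$, $\ell=3$ one has $|S|=8$ while $\lambda_{\rm GRC}=156$) --- nonemptiness of $\C(\vH,\vzero)\cap\E(n;S)$ should instead be taken from Theorem \ref{thm:code}.
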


Roughly speaking, the conjecture states that asymptotically,
$|\C(\vH,\vzero)\cap \E(n;q,\ell)|$ is at least $|\Qb(n;q,\ell)|/p^d$.
In other words, for our particular choice of $\vH$ and $\vbeta$, 
we asymptotically achieve the code size guaranteed by the pigeonhole principle.

\section{Decoding of Profile Vectors}\label{sec:decoding}

Recall the DNA storage channel illustrated in Fig. \ref{fig:DNAstorage}. 
{\color{blue}The channel takes as its input a word $\vx\in\Q(n;S)$ 
and outputs a profile vector $\vhp(\vx)\in\ZZ^{|S|}$.
Assuming no errors, the vector $\vhp(\vx)$ corresponds to the correct profile vector $\vp(\vx;S)\in\pQ(n;S)$.
In this channel model and the code constructions in Section \ref{sec:lower}, 
we have implicitly assumed the existence of an efficient algorithm that 
decodes $\vhp(\vx)\in\ZZ^{|S|}$ back to the message $\vx$.
We now describe this two-step algorithm in more detail.}

\textcolor{blue}{The first step of decoding is to correct errors in $\vhp(\vx)\in\ZZ^{|S|}$ to arrive at a profile vector of the valid codeword $\vp(\vx;S)\in\pQ(n;S)$. For this purpose, one can use the conceptually simple Varshamov's decoding algorithm described in~\cite{Klove:1981}. The algorithm reduces to recursive computations of residues of the channel output profile vectors with respect to the rows of the matrix $\vH$ defining the code in~(\ref{eq:varshamov}) and solving a system of equations over a finite field.}

\textcolor{blue}{The second step of decoding consists of converting the corrected profile vector into the corresponding codeword. For the purpose of describing this process, let $\vu$ be a profile vector in $\pQ(n;S)$ so that $\vu=\vp(\vx;S)$ for some  $\vx\in\Q(n;S)$.
As it was done in the proof of Lemma \ref{lem:euler}, we construct a multigraph on 
the node set $V(S)$ by adding $u_\vz$ arcs for each $\vz\in V(S)$.
We remove any isolated nodes to arrive at a connected Eulerian multidigraph.
We subsequently apply any linear-time algorithm like Hierholzer's algorithm~\cite{Hierholzer:1873} 
to this multidigraph to obtain an Eulerian walk. Hierholzer's algorithm uses two straightforward search steps:
\begin{itemize}
\item One starts by choosing a starting node in the multidigraph $v$ and then proceeds by following a connected sequence of edges until returning to $v$. Note that the multidigraph is Eulerian so such a closed path will always exist.
Note that one closed path may not cover all edges (or nodes) in the graph.
\item If the path does not cover all edges, as long as there exists a node $u$ on the last identified closed path that has emanating edges terminating in nodes not on the closed path, initiate another closed walk from the node $u$ that does not share any edges with the current closed path. Merge the current path with the path initiated from $u$.
\end{itemize}
 Most implementations of the Hierholzer's algorithm involve an arbitrary choice for the starting node and 
the subsequent nodes to visit. Hence, it is possible for the algorithm to produce different walks based on the same multigraph.
Nevertheless, we may fix an order for the nodes and have the algorithm always choose the `smallest' available node. Under these assumptions, $\textsc{Euler}(\vu)$ is always well defined. Let $\textsc{Euler}(\vu)$ denote the word of $\bbracket{Q}^n$ obtained from this restricted Eulerian walk.
It remains to verify that $\textsc{Euler}(\vu)=\vx$.}

As mentioned in Section \ref{sec:profile}, an element in $\Q(n;S)$ is an equivalence class $X\subset \bbracket{q}^n$, where $\vx,\vx'\in X$ implies that $\vp(\vx;S)=\vp(\vx';S)$.
Here, we fix the choice of {representative} for $X$. 
As hinted by the previous discussion, we let this representative be $\textsc{Euler}\left(\vp(\vy;S)\right)$ for some $\vy\in Y$ and observe that this definition is independent of the choice of $\vy$.
Then with this choice of representatives, the function $\textsc{Euler}$ indeed decodes a profile vector back to its representative codeword.

In summary, we identify the elements in $\Q(n;S)$ with the set of representatives
$\{\textsc{Euler}(\vu):\vu\in\pQ(n;S\}$.
Then for any $\vx\in\Q(n;S)$, the function $\textsc{Euler}$ decodes $\vp(\vx;S)$ to $\vx$ in linear-time.


\begin{figure*}[t]
\begin{enumerate}[(a)]
\item \hfill
\begin{center}
\small
\begin{picture}(480,70)(0,-10)

\put(75,0){\framebox(80,30){}}
\put(85,10){\mbox{Fragmentation}}

\put(260,0){\framebox(120,30){}}
\put(270,10){\parbox[c][4em][c]{100px}{Detecting presence or absence of $3$-grams}}

\put(190,70){\txt{Sequencing}}
\put(65,-5){\dashbox(320,85){}}

\put(-11,10){\vector(1,0){85}}
\put(-10,12){\mbox{$0000 0110 1111 00$}}
\put(155,10){\vector(1,0){105}}
\put(160,32){\mbox{$\left\{\begin{array}{cc}
000:3 & 100:1\\
001:1 & 101:1\\
010:0 & 110:2\\
011:2 & 111:2\end{array}\right\}$}}

\put(380,10){\vector(1,0){100}}
\put(390,32){\mbox{$\left\{\begin{array}{cc}
001, & 100,\\
001, & 101,\\
 & 110,\\
011, & 111\end{array}\right\}$}}
\end{picture}
\end{center}

\item \hfill
\begin{center}
\small
\begin{picture}(480,70)(0,-5)

\put(75,0){\framebox(80,30){}}
\put(85,10){\mbox{Fragmentation}}

\put(260,0){\framebox(120,30){}}
\put(270,10){\parbox[c][4em][c]{100px}{Detecting relative order of $010$, $101$ and $111$.}}

\put(190,70){\txt{Sequencing}}
\put(65,-5){\dashbox(320,85){}}

\put(-11,10){\vector(1,0){85}}
\put(-10,12){\mbox{$0000 0110 1111 00$}}
\put(155,10){\vector(1,0){105}}
\put(160,32){\mbox{$\left\{\begin{array}{cc}
000:3 & 100:1\\
001:1 & 101:1\\
010:0 & 110:2\\
011:2 & 111:2\end{array}\right\}$}}

\put(380,10){\vector(1,0){90}}
\put(390,12){\mbox{$010\prec 101\prec 111$}}

\end{picture}
\end{center}
\end{enumerate}

\caption{Sequencing by hybridization. Instead of obtaining the exact count of the $\ell$-grams, we obtain auxiliary information on the count: (a) we obtain the set of $3$-grams present in $001 110 110 000 00$; 
(b) we obtain the relative order of the counts of $010$, $101$ and $111$.  }
\label{fig:SBH}
\end{figure*}
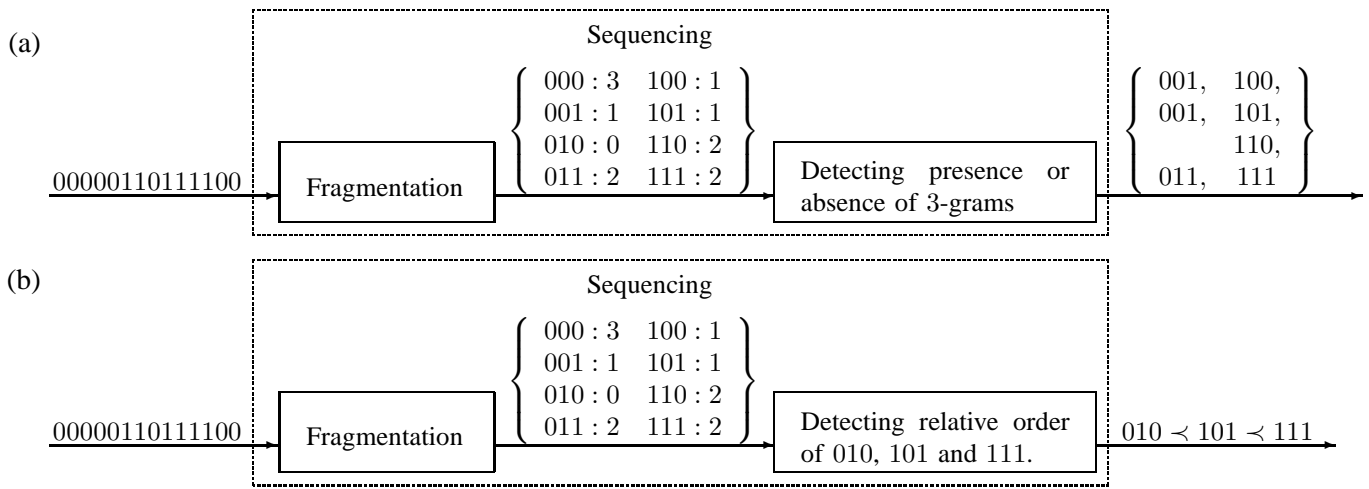

{\subsection{Practical Methods for Counting $\ell$-grams}}

{An interesting feature of the described coding scheme is that one can avoid common problems with DNA sequence assembly by designing codewords that have distinct profile vectors and profiles at sufficiently large distance.
However, there are computational challenges associated with counting the number of $\ell$-grams and determining the profile vector of an arbitrary word, given that modern high-throughput sequences may produce hundreds of millions of reads.
We examine next a number of practical methods for profile counting which represents a crucial step in decoding and address emerging issues via known coding solutions.}

{In particular, we look at an older technology -- sequencing by hybridization (SBH), proposed in~\cite{Pevzner.Lipshutz:1994} -- as a means of automated decoding. 
The idea behind SBH is to build an array of $\ell$-grams or {\em probes}; this array of probes is commonly referred to as a {\em sequencing chip}. 
A sample of single stranded DNA to be sequenced 
is fragmented, labelled with a radioactive or fluorescent material, and then presented to the chip. 
Each probe in the array hybridizes with its reverse complement,
provided the corresponding $\ell$-gram is present in the sample. 
Then an optical detector measures the intensity of hybridization of the labelled DNA and 
hence infers the number of $\ell$-grams present in the sample. The advantage of using SBH for counting $\ell$-grams is massive parallelism, and hence increased speed of decoding. Furthermore, SBH allows one to bypass the reading step in sequencing as this is automatically accomplished via hybridization to a proper target.}

{\color{blue}We first present an analysis of the simplest form of SBH, in which hybridization results may only indicated the presence or absence of certain $\ell$-grams. This simple and inexpensive sequencing method may be used to significantly reduce the space of possible profile vectors, and this information may be used to design a more cost efficient and accurate SBH sequencer having fewer probes and more precise probe binding intensity -- and hence $\ell$-gram counts.

In our discussion, we assume that $S=\bbracket{q}^\ell$. Furthermore, in our terminology, if $\vx$ is the codeword, 
the channel outputs a subset of $\bbracket{q}^\ell$ given by $\supp(\vp(\vx;q,\ell))$,
where $\supp(\vu)$ denotes the set of coordinates $\vz$ with $u_\vz\ge 1$ (see Fig. \ref{fig:SBH}(a)).
Then, we can define $\dg^*(\vx,\vy;q,\ell)\triangleq|\supp(\vp(\vx;q,\ell))\Delta\supp(\vp(\vy;q,\ell))|$ 
for any pair of $\vx,\vy\in\bbracket{q}^n$. Intuitively, $\dg^*$ measures how dissimilar the sets of $\ell$-grams contained in two sequences are.

As before, $(\bbracket{q}^n,\dg^*)$ forms a pseudometric space and we convert this space into a metric space via an equivalence relation --
we say $\vx\stackrel{\ell^*}{\sim}\vy$ if and only if $\dg^*(\vx,\vy;q,\ell)=0$. 
Then, by defining $\Q^*(n;q,\ell)\triangleq\bbracket{q}^n/\stackrel{\ell^*}{\sim}$, we obtain a metric space.

Let $\C\subseteq \Q^*(n;q,\ell)$. If $d=\min\{\dg^*(\vx,\vy;\ell): \vx,\vy\in \C, \vx\ne\vy\}$, then
$\C$ is said to be $(n,d;q,\ell)$-$\ell^*$-gram reconstruction code ($*$-GRC). 
Intuitively, a $*$-GRC with high distance allows for the reconstruction of any codeword sequence 
via the measurement of a sufficiently large subset of the $\ell$-grams. 
We have the following proposition that is an analogue of Proposition~\ref{prop:code}.
}

\begin{prop}
Given an $(n,d;q,\ell)$-$*$-GRC, a set of $n-\ell+1-\floor{(d-1)/2}$ $\ell$-grams suffices to identify a codeword.
\end{prop}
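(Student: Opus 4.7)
My plan is to treat this as a standard ``ball packing'' style uniqueness argument for the $\dg^*$-pseudometric, exploiting the fact that a length-$n$ sequence has at most $n-\ell+1$ distinct $\ell$-grams. Let $t = \floor{(d-1)/2}$. I will assume the decoder observes a set $T$ of $n-\ell+1-t$ $\ell$-grams, all of which occur in the transmitted codeword $\vx$, so $T \subseteq U_\vx$ where $U_\vx \triangleq \supp(\vp(\vx;q,\ell))$. The decoding rule I will analyze is the obvious one: return any codeword $\vy\in\C$ for which $T\subseteq U_\vy$. It then suffices to show that this rule produces a unique answer.

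For the uniqueness step, I will suppose toward contradiction that a second codeword $\vy\in\C$, distinct from $\vx$, also satisfies $T\subseteq U_\vy$. The key observation is the trivial but crucial bound $|U_\vx|, |U_\vy|\le n-\ell+1$, since each support consists of $\ell$-grams drawn from a length-$n$ sequence. Combined with $T\subseteq U_\vx\cap U_\vy$, this gives $|U_\vx\cap U_\vy| \ge |T| = n-\ell+1-t$, and the elementary identity $|U_\vx \Delta U_\vy| = |U_\vx| + |U_\vy| - 2|U_\vx\cap U_\vy|$ yields
\[ \dg^*(\vx,\vy;q,\ell) = |U_\vx \Delta U_\vy| \le 2(n-\ell+1) - 2(n-\ell+1-t) = 2t \le d-1 < d. \]
This contradicts the assumption that $\C$ is an $(n,d;q,\ell)$-$*$-GRC, completing the proof.

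There is no real obstacle in this argument; the only subtlety worth flagging is the implicit assumption that $|U_\vx|\ge n-\ell+1-t$, so that a $T$ of the stipulated size inside $U_\vx$ actually exists. If $|U_\vx|$ is smaller (i.e., $\vx$ has heavily repeated $\ell$-grams), then any observation $T\subseteq U_\vx$ with $|T|=n-\ell+1-t$ forces $T=U_\vx$, and the argument goes through unchanged (indeed more easily) since then $U_\vx\subseteq U_\vy$ and the same symmetric-difference bound applies. So the proof is essentially a two-line inclusion-exclusion estimate, with the novelty being the identification of the right distance-like quantity ($\dg^*$) on which to run the standard minimum-distance decoding argument.
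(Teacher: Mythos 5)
Your argument is correct and is essentially identical to the paper's proof: both suppose two distinct codewords share a common set of $n-\ell+1-\floor{(d-1)/2}$ $\ell$-grams, bound each side of the symmetric difference by $|U_\vx|-|T|\le\floor{(d-1)/2}$ using $|U_\vx|\le n-\ell+1$, and conclude $\dg^*(\vx,\vy;q,\ell)\le 2\floor{(d-1)/2}\le d-1<d$, a contradiction. The extra remark about the case $|U_\vx|<n-\ell+1-\floor{(d-1)/2}$ is a harmless refinement the paper does not bother with.
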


\begin{proof}
Let $t=n-\ell+1-\floor{(d-1)/2}$.
Suppose otherwise that there exists a pair of distinct codewords $\vx$ and $\vy$ that 
contain a common set of $t$ $\ell$-grams.
Then 
\begin{align*}
\dg^*(\vx,\vy;\ell)&=|\supp(\vp(\vx;q,\ell))\Delta\supp(\vp(\vy;q,\ell))|\\
&\le (n-\ell+1-t)+(n-\ell+1-t)=2\floor{(d-1)/2}) \le d-1<d,
\end{align*}
resulting in a contradiction.
\end{proof}

Determining the maximum size of an $(n,d;q,\ell)$-$*$-GRC turns out to be related to certain well studied combinatorial problems.

{\bf Case $d=1$}. The maximum size of an $(n,1;q,\ell)$-$*$-GRC is given by $|\Q^*(n;q,\ell)|$. 
Equivalently, this count corresponds to the number of possible sets of $\ell$-grams that can be obtained from words of length $n$.
Observe that $|\Q^*(n;q,\ell)|\le 2^{q^\ell}$ and hence $|\Q^*(n;q,\ell)|$ cannot be a quasipolynomial in $n$ 
with degree at least one. Therefore, it appears that Ehrhart theory is not applicable in this context.
Nevertheless, preliminary investigations of this quantity for $q=2$ have been performed by Tan and Shallit \cite{tan2013sets}. In particular, Tan and Shallit proved the following proposition for $n<2\ell$. 

\begin{prop}[{\cite[Corollary 19]{tan2013sets}}]
For $\ell\le n <2\ell$, we have 
\[\Q(n,\ell)=2^n-\sum_{k=1}^{n-\ell+1}\frac{k-1}{k}\sum_{d|k}\mu\left(\frac kd\right)2^d,\]
where $\mu(\cdot)$ is the M\"obius function defined as
\[
\mu(n)=\begin{cases}
1, & \mbox{if $n$ is a square-free positive integer with an even number of prime factors};\\
1, & \mbox{if $n$ is a square-free positive integer with an odd number of prime factors};\\
0, & \mbox{otherwise}.
\end{cases}
\]
\end{prop}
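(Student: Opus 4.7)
The goal is to count $|\Q^*(n;2,\ell)| = |\{0,1\}^n / \stackrel{\ell^*}{\sim}|$, where $\vx \stackrel{\ell^*}{\sim} \vy$ iff $\supp(\vp(\vx;2,\ell)) = \supp(\vp(\vy;2,\ell))$. My plan is to show that every non-singleton equivalence class arises from a periodic structure, and then count such classes by M\"obius inversion.

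The structural claim I aim to prove is as follows: a class $[\vx]$ has more than one element if and only if $\vx$ has some period $k$ with $2 \leq k \leq n-\ell+1$, and in that case $[\vx]$ consists of exactly the $k$ length-$n$ strings obtained by taking the cyclic rotations of the length-$k$ period base and extending periodically. The easy direction---that cyclic shifts give equivalent strings---follows because $k \leq n-\ell+1$ ensures the $\ell$-gram sequence has length at least $k$ and thus wraps through a full period, so the set of distinct $\ell$-grams depends only on the period base viewed as a cyclic word. The harder converse---that $\vx \stackrel{\ell^*}{\sim} \vy$ with $\vx \neq \vy$ forces this periodic structure---uses $n < 2\ell$ crucially: both strings correspond to walks of length $n-\ell+1$ in the subgraph of the de Bruijn graph on $(\ell-1)$-grams spanned by the common edge set, and the overlap constraint (any two $\ell$-grams of a string of length $n<2\ell$ share at least $2\ell-n > 0$ positions) makes this walk structure rigid enough that two distinct walks can occur only if the subgraph contains a cycle, whose length then serves as a common period.

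With the characterization established, I next count the length-$n$ binary strings with minimal period exactly $k$, for each $k \leq n-\ell+1$. The bound $n < 2\ell$ gives $k \leq \ell$ and hence $k - 1 \leq \ell - 1 \leq n - k$, so the wraparound portion of the period-$k$ extension of a length-$k$ word $u$ covers all $k$ positions of $u$. A routine verification then shows that any period $p<k$ of the extension forces $u$ to have cyclic period $p$ (i.e., to be a proper power), and conversely a primitive $u$ yields an extension with minimal period $k$. Therefore the number of length-$n$ strings with minimal period $k$ equals the number of primitive length-$k$ binary words, which by the standard M\"obius inversion over divisors of $k$ is
\[
A(k) \;\triangleq\; \sum_{d \mid k} \mu\!\left(\frac{k}{d}\right) 2^d.
\]
Each equivalence class at period $k$ has exactly $k$ elements (the $k$ cyclic rotations of the primitive base give distinct extensions), so the collapse at period $k$ is $A(k) - A(k)/k = \tfrac{k-1}{k}A(k)$, with the $k=1$ term contributing zero. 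Summing over $k$ from $1$ to $n-\ell+1$ and subtracting from $2^n$ yields the stated identity.

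The main obstacle is the converse direction in the structural claim: ruling out ``exotic'' equivalences between two distinct strings that do not arise from cyclic rotations of a common periodic base. I expect the cleanest approach is to locate the first and last positions at which $\vx$ and $\vy$ disagree, use the overlap constraint $2\ell-n > 0$ to force a matching pair of $\ell$-grams on each side of the disagreement, and iterate this to trace out a cycle in the induced subgraph whose length then serves as a period of both $\vx$ and $\vy$. All other steps---the easy direction of the characterization, the enumeration via primitivity and M\"obius inversion, and the final summation---should be straightforward.
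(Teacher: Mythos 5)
The paper itself offers no proof of this proposition---it is quoted directly from Tan and Shallit---so there is no internal argument to compare against, and I can only judge your proposal on its own terms. Your overall plan is the right one: characterize the non-singleton classes of $\stackrel{\ell^*}{\sim}$ as rotation orbits of periodic words with period at most $n-\ell+1$, then count the collapse by M\"obius inversion. The easy direction (rotations of the period base give the same $\ell$-gram set, because the $n-\ell+1\ge k$ starting positions sweep out every residue modulo $k$) is fine, and the enumeration is correct: your inequality $k-1\le\ell-1\le n-k$ gives $n\ge 2k-1$, so Fine--Wilf shows the period-$k$ extension of a primitive length-$k$ word has minimal period exactly $k$, the count of such strings is $\sum_{d\mid k}\mu(k/d)2^d$ with the standard M\"obius function (the displayed definition of $\mu$ in the statement has a typo---the odd case should be $-1$---and your computation implicitly uses the correct version), and the collapse $\frac{k-1}{k}\sum_{d\mid k}\mu(k/d)2^d$ and final summation check out against small cases such as $(n,\ell)=(3,2)$ and $(4,3)$.

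The genuine gap is exactly where you flag it: the converse of the structural claim, namely that $\vx\ne\vy$ with $\supp(\vp(\vx;2,\ell))=\supp(\vp(\vy;2,\ell))$ forces both words to be periodic extensions of conjugate primitive words of length at most $n-\ell+1$. The entire count rests on this step---it is what guarantees that every class is either a singleton or a full rotation orbit and that classes of different minimal periods never merge---and what you give is a strategy ("locate the first and last disagreement, force matching $\ell$-grams, trace out a cycle") rather than an argument. As sketched it is not yet airtight: producing a cycle in the subgraph of the de Bruijn graph spanned by the common $\ell$-grams does not by itself make either word periodic; you must also show that both walks are \emph{confined} to that cycle, winding around it for their full length of $n-\ell+1$ arcs, and this is precisely where the hypothesis $n<2\ell$ (equivalently, that any two $\ell$-grams of the same word overlap in at least $2\ell-n\ge 1$ positions) must do real work. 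Until that argument is written out---via the walk-rigidity route you describe or a direct Fine--Wilf-style analysis at the first position of disagreement---what you have is an outline with a correct skeleton but a missing keystone.
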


{\bf Case $d=2(n-\ell+1)$}. For the other extreme, we see that the problem is related 
to edge-disjoint path packings and decompositions of graphs (see \cite{Heinrich:1993,Bryant.El-Zanati:2007}).
Formally, consider a graph $G$. A set $\C$ of paths in $G$ is said to be an {\em edge-disjoint path packing} of $G$
if each edge in $G$ appears in at most one path in $\C$.
An edge-disjoint path packing $\C$ of $G$ is an {\em edge-disjoint path decomposition} of $G$
if each edge in $G$ appears in exactly one path in $\C$.
Edge-disjoint cycle packings and decompositions are defined similarly.

Now, an $(n,2(n-\ell+1);q,\ell)$-$*$-GRC is equivalent to an edge-disjoint path packing of $D(q,\ell)$, where each path is of length $(n-\ell+1)$.
Furthermore, an edge-disjoint path decomposition of $D(q,\ell)$ into paths of length $n-\ell+1$ 
yields an optimal $(n,2(n-\ell+1);q,\ell)$-$*$-GRC of size $q^\ell/(n-\ell+1)$.

Since an edge-disjoint cycle decomposition is also an edge-disjoint path decomposition,
we examine next edge-disjoint cycle decomposition of de Bruijn graphs.
These combinatorial objects were studied by Cooper and Graham, who proved the following theorem.

\begin{thm}[{\cite[Proposition 2.3, Corollary 2.5]{cooper2004generalized}}]\label{cooper}\hfill
\begin{enumerate}
\item There exists an edge-disjoint cycle decomposition of  $D(q,\ell)$ into $q$ cycles of length $q^{\ell-1}$, for any $q$ and $\ell$. 
\item There exists an edge-disjoint cycle decomposition of  $D(r2^{k+1},3)$ into $8^k$ cycles of length $8r^3$, for any $k\ge 0$ and $r\ge 1$.
\end{enumerate}
\end{thm}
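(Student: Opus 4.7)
The plan is to handle the two parts separately, since each relies on a different structural feature of the de Bruijn graph. For Part~1, the target is a decomposition of the $q^{\ell}$ arcs of $D(q,\ell)$ into $q$ closed trails of length $q^{\ell-1}$. I would assign to each arc $(z_1,\ldots,z_\ell)$ the color $\chi(z_1,\ldots,z_\ell)\triangleq (z_\ell-z_1)\bmod q$. Every node $(v_1,\ldots,v_{\ell-1})$ has a unique outgoing arc of each color (determined by $z_\ell = v_1 + a$) and a unique incoming arc of each color (determined by $z_1 = v_{\ell-1} - a$). Hence each monochromatic sub-digraph is 2-regular and contains exactly $q^{\ell-1}$ arcs, so once it is shown to be (weakly) connected, it supports the desired Eulerian closed trail. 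Connectivity of the monochromatic sub-digraph of color $a$ reduces to analyzing the orbit structure of the affine node-permutation $f_a:(v_1,\ldots,v_{\ell-1})\mapsto (v_2,\ldots,v_{\ell-1},v_1+a)$ on $\mathbb{Z}_q^{\ell-1}$, which is governed by the companion matrix of $x^{\ell-1}-1$ over $\mathbb{Z}_q$ together with the affine shift by $a$.

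For Part~2, the goal is to exploit the factorization $q=r\cdot 2^{k+1}$ to build a decomposition of $D(q,3)$ out of $8^k$ copies of $D(2r,3)$. I would proceed by induction on $k$. In the base case $k=0$, $D(2r,3)$ is itself a regular digraph with balanced in- and out-degrees, hence Eulerian, so its single Eulerian circuit is the required trail of length $(2r)^3=8r^3$. For the inductive step, I would split the alphabet $\llbracket r\cdot 2^{k+1}\rrbracket$ into $2$ halves of size $r\cdot 2^{k}$ and partition the arcs of $D(r\cdot 2^{k+1},3)$ according to which half each of the three symbols $z_1,z_2,z_3$ falls in. Each of the $8$ resulting arc classes should either form (after a coordinate identification) a copy of $D(r\cdot 2^k, 3)$ on the appropriate sub-lattice of nodes, or be combinable with a neighbouring class to do so. Applying the inductive hypothesis then yields $8\cdot 8^{k-1}=8^k$ trails of the required length $8r^3$.

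The main obstacle in both parts is the inherent coupling introduced by the fact that the middle symbol of every arc plays a double role: it is the last symbol of the arc's initial node and the first symbol of its terminal node. In Part~1 this coupling is what makes the verification that $f_a$ has a single orbit of length $q^{\ell-1}$ delicate, particularly when $\gcd(\ell-1,q)>1$ and $a$ lies in a nontrivial ideal of $\mathbb{Z}_q$; small exceptional cases such as $\ell=2$ may demand an ad hoc argument to guarantee connectivity and may force one to allow the "cycle" of color $a$ to be an Eulerian closed trail rather than a strict cycle. In Part~2 the obstacle is that the naive "fix the high bits of $z_1,z_2,z_3$" partition does not yield sub-digraphs with balanced degrees at each node, because constraining the high bit of $z_2$ simultaneously constrains the nodes $(z_1,z_2)$ and $(z_2,z_3)$; finding the right semidirect coordinate system that decouples these constraints (and matches the factor $8^k=2^{3k}$ of arc classes) is, I expect, the technical heart of the argument.
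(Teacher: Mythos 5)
First, note that the paper does not prove this theorem at all: it is imported verbatim from Cooper and Graham \cite{cooper2004generalized}, so there is no internal proof to compare against. Your sketch therefore has to stand on its own, and as written it does not.

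The Part~1 construction fails, and not merely in ``small exceptional cases.'' You correctly observe that in the colour class of $a$ every node has in-degree $1$ and out-degree $1$; but a digraph with this property is the functional digraph of the permutation $f_a$, i.e.\ a union of \emph{vertex-disjoint} cycles, one per orbit of $f_a$. Since $f_a^{\ell-1}$ is translation by $(a,\dots,a)$, the order of $f_a$ divides $(\ell-1)\,q/\gcd(a,q)\le(\ell-1)q$, so every orbit has length at most $(\ell-1)q$, which is exponentially smaller than the required $q^{\ell-1}$ once $q^{\ell-2}>\ell-1$. Concretely, colour $0$ always contains all $q$ loops $jj\cdots j$ as isolated fixed points, so it is disconnected for every $q\ge 2$; for $q=2,\ell=3$ the colour-$0$ class splits into the two loops plus the $2$-cycle on $\{01,10\}$. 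No choice of this ``difference'' colouring can work, because connectivity of a colour class would force $f_a$ to be a single $q^{\ell-1}$-cycle, contradicting the order bound. The actual argument (essentially Cooper--Graham's) goes through a different idea: take a de Bruijn sequence $t$ of order $\ell-1$ whose symbol sum is $\equiv 0\pmod q$ and form its $q$ discrete integrals $u^{(c)}$ with $u^{(c)}_0=c$, $u^{(c)}_{i+1}=u^{(c)}_i+t_i$; each integral is a closed trail of length $q^{\ell-1}$ in $D(q,\ell)$, and the $\ell$-windows of the $q$ integrals partition $\llbracket q\rrbracket^\ell$ because an $\ell$-word is determined by its first symbol together with its difference word of length $\ell-1$. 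Also be aware that ``cycle'' here must mean closed trail, exactly because of the loop obstruction you half-noticed.

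For Part~2 you have a plan but no proof: the inductive step is exactly the missing content. As you yourself point out, partitioning arcs by the ``high bits'' of $z_1,z_2,z_3$ produces classes such as $\{z_1\in H_1,\ z_2\in H_2\}$ in which every node of the form $(z_1,z_2)$, $z_1\in H_1$, $z_2\in H_2$, has positive out-degree but in-degree zero, so these classes contain no closed walks whatsoever and cannot individually be decomposed into cycles; they also cannot be repaired by pairing with a ``neighbouring class'' without a concrete regrouping argument, which you do not supply. Since the decoupling of the shared middle symbol is precisely the technical heart of Cooper--Graham's Corollary~2.5 (which they obtain from a separate product-type construction, not from this naive alphabet split), deferring it means Part~2 is unproven. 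In summary, both parts have genuine gaps: Part~1 rests on a step that is provably false, and Part~2 omits the only nontrivial step.
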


Therefore, Theorem \ref{cooper} demonstrates the existence of 
an optimal $(q^{\ell-1} +\ell- 1, 2q^{\ell-1};q,\ell)$-$*$-GRC of size $q$
and an optimal  $(8r^3+2,16r^3;r2^{k+1},3)$-$*$-GRC of size $8^k$ for any $k\ge 0$ and $r\ge 1$.

{\subsection{Decoding Rank Modulation Encoded Profiles}}\label{sec:rankmod}

\textcolor{blue}{As mentioned earlier, it is difficult to infer accurately the number of $\ell$-grams present from the hybridization results.  However, we may significantly more accurately determine whether the count of a certain $\ell$-gram is greater than the count of another.
In other words, we may view the sequencing channel outputs as {\em rankings} or {\em orderings} on the $q^\ell$ $\ell$-grams counts or a {\em permutation} of length $q^\ell$ reflecting the $\ell$-gram counts.}

This suggests that we consider codewords whose profile vectors \emph{carry information about order}.
More precisely, let ${\rm Perm}(N)$ denote the set of permutations over the set $\bbracket{N}$. 
We consider codewords whose profile vectors belong to ${\rm Perm}(N)$
and consider a metric on ${\rm Perm}(N)$ that relates to errors resulting from changes in order.
The Kendall metric was first proposed by Jiang \etal{} \cite{Jiang.etal:2009} 
in {rank modulation schemes} for nonvolatile flash memories and 
codes in this metric have been studied extensively since 
(see \cite{Barg.Mazumdar:2010} and the references therein). 
The Ulam metric was later proposed by Farnoud \etal{} for permutations \cite{Farnoud.etal:2013}
and multipermutations \cite{Farnoud.Milenkovic:2014}.

Unfortunately, due to the flow conservation equations \eqref{eq:flow}, 
the profile vector of a $q$-ary word is unlikely to have distinct entries and hence be a permutation.
Nevertheless, we appeal to the systematic encoder provided by Theorem \ref{thm:sys}.
We set $m=q^\ell-q^{\ell-1}-1$. Then, provided $n$ is sufficiently large,
there exists a set $I$ of $m$ coordinates that allow us to
extend any word $\vv$ in $\bbracket{m}^m$ to a profile vector in $\phi_{\rm sys}(\vv)\in\pQ(n;q,\ell)$.
In particular, since ${\rm Perm}(m)\subseteq\bbracket{m}^m$, \emph{any permutation} $\vv$ of length $m$ 
may be extended to a profile vector in $\phi_{\rm sys}(\vv)\in\pQ(n;q,\ell)$.

This implies that for the design of the sequencing chip,
we do not need to have $q^\ell$ probes for all possible $\ell$-grams.
Instead, we require only $m=q^\ell-q^{\ell-1}-1$ probes that correspond to the $\ell$-grams in $I$.
Hence, the sequencing channel outputs an ordering on this set of $m$ $\ell$-grams 
(see Fig. \ref{fig:SBH}(b)).

This setup allows us to integrate known rank modulation codes (in any metric) 
into our coding schemes for DNA storage. 
In particular, to encode information we perform the following procedure.
First, we encode a message is into a permutation using a rank modulation encoder.
Then the permutation is extended into a profile vector and then mapped by
{\sc Euler} to the profile vector of a $q$-ary codeword
(see Fig. \ref{fig:ranking} for an illustration).

\begin{exa}
Suppose that $S=\bbracket{2}^3$.
Hence, we set $m=3$ and
recall the systematic encoder $\phi_{\rm sys}$ described in Example \ref{exa:systematic}
that maps $\bbracket{3}^3$ into $\pQ(14;2,3)$.
Suppose that $\vv=(0,1,2)\in {\rm Perm}(3)$ belongs to some rank modulation code.
Then $\vu=\phi_{\rm sys}(\vv)=(3,1,{\bf 0},2,{\bf1},1,2,{\bf 2})$ belongs to $\pQ(14;2,3)$.
Finally, {\sc Euler} maps $\vu$ to a codeword $0000 0110 1111 00\in\bbracket{2}^{14}$.

Now, if we were to detect the relative order of the $3$-grams $010$, $101$ and $111$,
we obtain the permutation $(0,1,2)$ as desired (see also Fig. \ref{fig:SBH}(b)).
\end{exa}

\begin{figure*}[t]

\begin{center}
\small
\begin{picture}(480,120)(-20,-50)

\put(50,30){\framebox(90,30){}}
\put(55,42){\parbox[c][4em][c]{80px}{Rank Modulation Encoder}}

\put(200,30){\framebox(90,30){}}
\put(205,42){\parbox[c][4em][c]{80px}{Systematic Profile Encoder}}

\put(355,30){\framebox(40,30){}}
\put(360,42){\parbox[c][4em][c]{30px}{\sc Euler}}

\put(200,70){\txt{Encoder}}
\put(42,20){\dashbox(360,60){}}

\put(425,-20){\framebox(45,45){}}
\put(430,0){\parbox[c][4em][c]{40px}{DNA Storage Channel}}

\put(0,40){\vector(1,0){50}}
\put(5,44){\mbox{message}}

\put(140,40){\vector(1,0){60}}
\put(145,44){\mbox{permutation}}

\put(290,40){\vector(1,0){65}}
\put(295,44){\mbox{profile vector}}

\put(395,40){\line(1,0){50}}
\put(405,44){\mbox{codeword}}
\put(445,40){\vector(0,-1){15}}

\put(445,-20){\line(0,-1){15}}
\put(445,-35){\vector(-1,0){305}}
\put(250,-33){\mbox{output permutation}}

\put(50,-50){\framebox(90,30){}}
\put(55,-38){\parbox[c][4em][c]{80px}{Rank Modulation Decoder}}

\put(50,-35){\vector(-1,0){80}}
\put(-25,-33){\mbox{decoded message}}

\end{picture}
\end{center}

\caption{Encoding messages for a DNA storage channel that outputs the relative order on the counts of particular $\ell$-grams.  }
\label{fig:ranking}
\end{figure*}
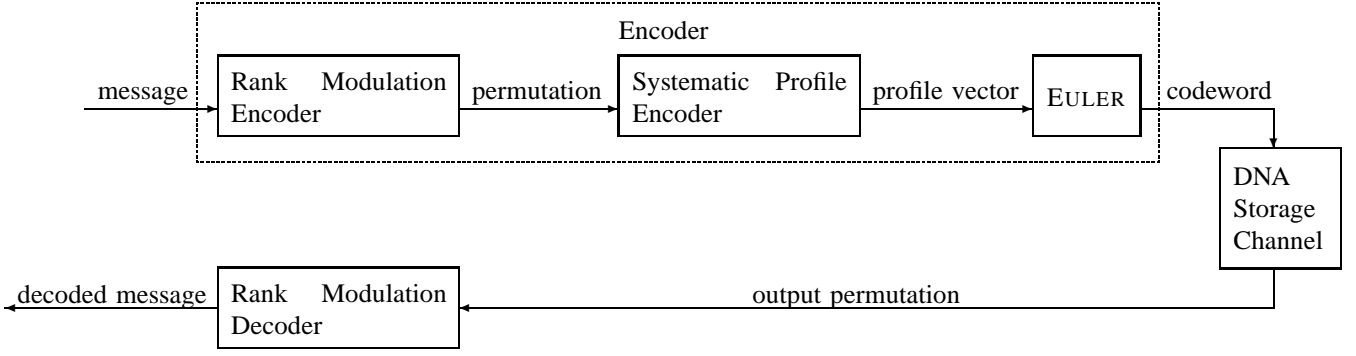

\appendices

\section{Eulerian Property of Certain Restricted De Bruijn Digraphs}
\label{app:restricteddb}

In this section, we provide a detailed proof of Proposition \ref{debruijn}. 
Specifically, for $q$, $\ell$, $1\le q^*\le q-1$ and $1\le w_1<w_2\le \ell$, we demonstrate that
the digraph $D(q,\ell;q^*,[w_1,w_2])$ is Eulerian. Our analysis follows that of Ruskey \etal{} \cite{Ruskey.etal:2012}.

Recall that the arc set of $D(q,\ell;q^*,[w_1,w_2])$ is given by $S=S(q,\ell;q^*,[w_1,w_2])$, 
while the node set is given by $V(S)=S(q,\ell-1;q^*,[w_1-1,w_2])$, which we denote by $V$ for short.
In addition, we introduce the following subsets of $\bbracket{q}$.
For a node $\vz$ in $V$, 
let ${\rm Pref}(\vz)$ be the set of symbols in $\bbracket{q}$ that 
when prepended to $\vz$ results in an arc in $S$. 
Similarly, let ${\rm Suff}(\vz)$ be the set of symbols in $\bbracket{q}$ that 
when appended to $\vz$ result in an arc in $S$.
Hence, $\{\sigma\vz:\sigma\in {\rm Pref}(\vz)\}$ and $\{\vz\sigma:\sigma\in {\rm Suff}(\vz)\}$
are the respective sets of incoming and outgoing arcs for the node $\vz$.

\begin{lem}\label{lem:balanced}
Every node of $D(q,\ell;q^*,[w_1,w_2])$ has the same number of incoming and outgoing arcs.
\end{lem}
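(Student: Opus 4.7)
The plan is to prove the stronger statement that for any node $\vz \in V$, the sets $\mathrm{Pref}(\vz)$ and $\mathrm{Suff}(\vz)$ are in fact \emph{equal}, from which the equality of their cardinalities (and hence the lemma) follows at once.

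The key observation is that the constraint defining $S(q,\ell;q^*,[w_1,w_2])$ depends only on the $q^*$-weight of a word, not on the positions of its symbols. Fix a node $\vz \in V$ and let $w = \wt(\vz;q^*)$; by membership in $V$ we have $w \in [w_1-1,w_2]$. For any symbol $\sigma \in \bbracket{q}$, both the prepended word $\sigma\vz$ and the appended word $\vz\sigma$ have $q^*$-weight equal to $w + \mathbf{1}[\sigma \in [q-q^*,q-1]]$. Consequently, $\sigma\vz \in S$ if and only if $\vz\sigma \in S$, which gives $\mathrm{Pref}(\vz) = \mathrm{Suff}(\vz)$.

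To make the argument fully explicit and to record the in-degree/out-degree for use in later sections, I would then split into three cases based on $w$: (i) if $w = w_1-1$, only symbols $\sigma \in [q-q^*,q-1]$ lie in $\mathrm{Pref}(\vz) = \mathrm{Suff}(\vz)$, yielding degree $q^*$; (ii) if $w = w_2$, only symbols $\sigma \in [0,q-q^*-1]$ qualify, yielding degree $q-q^*$; (iii) if $w_1 \le w \le w_2-1$, every $\sigma$ qualifies, yielding degree $q$. No step is really an obstacle here, since the whole proof hinges on the single symmetric observation above; the only minor point to check is that these cases are exhaustive and cover every $\vz \in V$, which is immediate from $V = S(q,\ell-1;q^*,[w_1-1,w_2])$.
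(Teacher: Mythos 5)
Your proof is correct and follows essentially the same route as the paper: the paper's argument is precisely the observation that $s\,\vz\in S$ if and only if $\vz\,s\in S$ (since membership in $S$ depends only on the $q^*$-weight), whence $\mathrm{Pref}(\vz)=\mathrm{Suff}(\vz)$. Your additional case analysis of the common degree is a correct elaboration that the paper defers to the proof of the next lemma.
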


\begin{proof}
Let $\vz$ belong to $V$. Observe that for all $s \in\bbracket{q}$, $s\,\vz\in S$ 
if and only if $\vz\,s \in S$. 
Hence, ${\rm Pref}(\vz)={\rm Suff}(\vz)$ and the lemma follows.
\end{proof}

It remains to show that $D(q,\ell;q^*,[w_1,w_2])$ is strongly connected.
We do it via the following sequence of lemmas.

\begin{lem}
Let $\vz,\vz'$ belong to $V$ and have the property that
they differ in exactly one coordinate. 
Then there exists a path from $\vz$ to $\vz'$.
\end{lem}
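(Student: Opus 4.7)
Let $j$ denote the unique coordinate at which $\vz$ and $\vz'$ differ. My plan is to exhibit a directed walk from $\vz$ to $\vz'$ via an explicit symbol-by-symbol appending construction. Define symbols $y_1,y_2,\ldots,y_{\ell-1}$ by $y_i=z_i$ for $i\ne j$ and $y_j=z'_j$, and consider appending these one at a time starting from $\vz$. Routine bookkeeping with the de Bruijn shift shows that after $k$ steps we sit at the node $z_{k+1}\cdots z_{\ell-1}\,y_1\cdots y_k$, so the walk terminates at $\vz'=y_1\cdots y_{\ell-1}$ after $\ell-1$ steps; the $k$-th arc traversed is labeled by the $\ell$-gram $g_k = z_k z_{k+1}\cdots z_{\ell-1}\,y_1 y_2\cdots y_k$.

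The technical task is to verify that each $g_k$ lies in $S$, i.e., that $\wt_{q^*}(g_k)\in[w_1,w_2]$. Writing $m=\wt_{q^*}(\vz)$, calling a symbol \emph{heavy} when it lies in $[q-q^*,q-1]$, and setting $\Delta := \mathbb{1}\{z'_j\text{ is heavy}\}-\mathbb{1}\{z_j\text{ is heavy}\}\in\{-1,0,1\}$, I would count heavy symbols in $g_k$ by noting that $z_k$ appears both at the start of the prefix $z_k\cdots z_{\ell-1}$ and in the suffix $y_1\cdots y_k$, while $z_j$ is swapped for $z'_j$ exactly when $k\ge j$. This yields the identity
\[
\wt_{q^*}(g_k) \;=\; m + \mathbb{1}\{z_k\text{ is heavy}\} + \Delta\cdot\mathbb{1}\{k\ge j\},
\]
which always lies in $[m-1,m+2]$. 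Whenever the specific heavy/light pattern of $\vz$ keeps each of these $\ell-1$ values inside $[w_1,w_2]$---which is automatic whenever $m$ is comfortably in the interior of $[w_1-1,w_2]$---the natural walk is already a valid directed walk in $D(q,\ell;q^*,[w_1,w_2])$ and we are done.

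The main obstacle is the boundary situation where $m$ sits close to $w_1-1$ or $w_2$ and some $g_k$ escapes $[w_1,w_2]$. In those cases I would replace the offending appending by a short detour: when $g_k$ would have weight $w_1-1$, swap the offending light-symbol step for a heavy-symbol step (which restores $\wt_{q^*}(g_k)\in[w_1,w_2]$) and compensate by inserting a light-symbol step elsewhere along the walk so that the walk still terminates at $\vz'$; the case $w_2+1$ is handled symmetrically. The required alternative arcs always exist because Lemma~\ref{lem:balanced} guarantees equal in- and out-degrees at every node, so whenever a light continuation is blocked by the upper weight boundary the heavy continuation is available, and vice versa. Verifying that the modified walk remains in $S$ throughout reduces to a brief case analysis on the heavy/light status of $z_j$ and $z'_j$ together with the value of $m$, after which concatenating the prefix, the shifted natural walk, and the compensating suffix produces the desired directed path from $\vz$ to $\vz'$.
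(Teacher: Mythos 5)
Your ``natural walk'' and the weight identity for $g_k$ are both correct, but the proof is incomplete exactly where the lemma is nontrivial: at the weight boundaries. The unmodified walk really does fail there --- e.g.\ for $q=2$, $q^*=1$, $\ell=4$, $[w_1,w_2]=[2,3]$, $\vz=010$, $\vz'=011$, the first window is $g_1=0100$ of weight $1<w_1$. Your proposed repair (``swap the offending light-symbol step for a heavy-symbol step and compensate by inserting a light-symbol step elsewhere'') is not a well-defined operation on de Bruijn walks: swapping or inserting an appended symbol shifts every subsequent $\ell$-window and changes the terminal node, so one must re-verify membership in $S$ for \emph{all} windows of the modified word and re-establish that it still ends at $\vz'$. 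The appeal to Lemma~\ref{lem:balanced} only gives that every node has equal in- and out-degree (hence some outgoing arc), not that a detour terminating at $\vz'$ with all windows in $[w_1,w_2]$ exists. Since the entire content of the lemma lives in the boundary cases, deferring them to an unexecuted ``brief case analysis'' leaves a genuine gap.

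For comparison, the paper sidesteps the case analysis entirely. Since $\wt(\vz;q^*)$ determines the admissible continuations --- ${\rm Suff}(\vz)=[q-q^*,q-1]$ if $\wt(\vz;q^*)=w_1-1$, $\bbracket{q^*}$ if $\wt(\vz;q^*)=w_2$, and $\bbracket{q}$ otherwise, with ${\rm Pref}={\rm Suff}$ --- the intersection ${\rm Suff}(\vz)\cap{\rm Pref}(\vz')$ could be empty only if one of $\vz,\vz'$ has weight $w_1-1$ and the other $w_2$; but then their weights differ by at least $2$ (as $w_1<w_2$), impossible for words differing in one coordinate. Picking $s$ in the intersection, the word $\vz\,s\,\vz'$ works outright: every $\ell$-window of it has weight equal to either $\wt(\vz s;q^*)$ or $\wt(s\vz';q^*)$, because the single coordinate where $\vz$ and $\vz'$ differ lies in exactly one half of each window. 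This one-connector-symbol trick is the idea your argument is missing; if you want to salvage the symbol-by-symbol route, you would need to actually carry out and verify the boundary case analysis, which is considerably more work.
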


\begin{proof}
Observe the following characterization of ${\rm Pref}(\vz)={\rm Suff}(\vz)$:

\begin{equation*}
{\rm Pref}(\vz)={\rm Suff}(\vz)=
\begin{cases}
[q-q^*,q-1], & \mbox{if $\wt(\vz;q^*)=w_1-1$};\\
\bbracket{q^*}, & \mbox{if $\wt(\vz;q^*)=w_2$};\\
\bbracket{q}, & \mbox{otherwise}.
\end{cases}
\end{equation*}
Then ${\rm Suff}(\vz)\cap{\rm Pref}(\vz')$ is empty only if $\wt(\vz;q^*)=w_1-1$ and $\wt(\vz';q^*)=w_2$
or vice versa. Either way, $\vz$ and $\vz'$ differ in at least two coordinates, which contradicts the starting assumption.

Hence, ${\rm Suff}(\vz)\cap{\rm Pref}(\vz')$ is always nonempty.
To complete the proof, let $s \in {\rm Suff}(\vz)\cap{\rm Pref}(\vz')$. Then, the path corresponding to $\vz\,s\,\vz'$ is the desired path.
(Note that each $\ell$-gram appearing in $\vz\, s\, \vz'$ has weight equal to either $\wt(\vz\,s)$ or $\wt(s\,\vz')$; in particular,
each such $\ell$-gram lies in $S$.)
\end{proof}

Therefore, to construct a path between any two given nodes $\vz$ and $\vz'$, 
it suffices to demonstrate a sequence of nodes such that consecutive nodes differ in only one position.

\begin{lem}
  For any $\vz, \vz' \in V$, there is a sequence of nodes
  $\vz=\vz_0,\vz_1,\ldots,\vz_t=\vz'$ such that $\vz_{j}$ and
  $\vz_{j+1}$ differ in exactly one position for $j\in\bbracket{t}$.
\end{lem}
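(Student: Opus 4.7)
The plan is to view the task as scheduling single-coordinate updates from $\vz$ to $\vz'$ so that every intermediate word remains in $V$. Let $D \subseteq \{1,\ldots,\ell-1\}$ be the set of coordinates where $\vz$ and $\vz'$ differ, and partition $D$ according to the effect of the update $\vz_i \mapsto \vz_i'$ on the $q^*$-weight: type A if both $\vz_i$ and $\vz_i'$ lie in $[0,q-q^*-1]$ or both lie in $[q-q^*,q-1]$ (so that the weight is unchanged), type B if $\vz_i$ is in $[0,q-q^*-1]$ and $\vz_i'$ is in $[q-q^*,q-1]$ (weight $+1$), and type C for the reverse (weight $-1$). With $b=|B|$ and $c=|C|$, we have $b-c = \wt(\vz';q^*)-\wt(\vz;q^*)$.

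Since type A updates leave the weight alone, I would defer them and insert them anywhere in the final sequence. The remaining task is to interleave the $b$ ``$+1$'' updates and $c$ ``$-1$'' updates, starting from $\wt(\vz;q^*)$ and ending at $\wt(\vz';q^*)$, with every running weight lying in $[w_1-1,w_2]$. I would apply the greedy rule: at each step, perform a B-update if one remains and the current weight is strictly less than $w_2$; otherwise, perform a C-update (if any remain).

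The main point to verify is that the greedy never stalls before exhausting all B and C updates. A stall requires that some update remains but no legal choice exists. If both types remain, a stall would force the current weight to be simultaneously $w_2$ (blocking a B) and $w_1-1$ (blocking a C), which is impossible since $w_1<w_2$ guarantees that $[w_1-1,w_2]$ contains at least three integers. If only B-updates remain, a stall requires the current weight to equal $w_2$; but then the conservation identity (current weight)~$+$~(remaining B count) $= \wt(\vz';q^*)$ would give $\wt(\vz';q^*) > w_2$, contradicting $\vz'\in V$. The case where only C-updates remain at weight $w_1-1$ is symmetric. Hence the greedy produces a valid schedule, yielding the desired sequence $\vz=\vz_0,\vz_1,\ldots,\vz_t=\vz'$ of single-coordinate changes staying in $V$.
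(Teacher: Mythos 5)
Your proof is correct, but it takes a genuinely different route from the paper's. The paper fixes coordinates left to right: it changes position $i+1$ from $\tau_{i+1}$ to $\sigma_{i+1}$ directly when the weight permits, and otherwise first performs a compensating change on some later coordinate $\tau_k$ (set to $q-1$ or to $0$), arguing by contradiction with $\wt(\vz';q^*)\in[w_1-1,w_2]$ that such a $\tau_k$ must exist; the induction is on the length of the matched prefix, and the suffix may be perturbed along the way. You instead classify the single-coordinate updates globally by their effect on the $q^*$-weight ($0$, $+1$, or $-1$) and schedule them greedily so that the running weight never leaves $[w_1-1,w_2]$, with the stall analysis resting on the same two facts the paper uses (that $w_1-1\ne w_2$, and that the terminal weight $\wt(\vz';q^*)$ lies in the admissible interval). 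Your argument has the advantage of never touching coordinates outside the difference set, so it produces a sequence of length exactly the Hamming distance between $\vz$ and $\vz'$, whereas the paper's construction may insert extra intermediate steps; the paper's version is more local and requires no global bookkeeping of remaining update counts. Both are complete; the one point worth tightening in your write-up is to state explicitly that a C-update is only "legal" when the current weight exceeds $w_1-1$ (your stall analysis implicitly handles this, since the only way the greedy reaches weight $w_1-1$ with only C-updates left would force $\wt(\vz';q^*)<w_1-1$).
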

\begin{proof}
Let $\vz'=\sigma_1\sigma_2\cdots\sigma_{\ell-1}$. We construct the sequence of nodes inductively.
Suppose that for some $j$, $\vz_j=\sigma_1\sigma_2\cdots \sigma_i\tau_{i+1}\cdots\tau_{\ell-1}$, with 
$\tau_{i+1}\ne \sigma_{i+1}$.
Our objective is to construct a sequence of nodes with consecutive nodes differing in one position,
terminating at some node $\vz_{j'}$ with
$\vz_{j'}=\sigma_1\sigma_2\cdots \sigma_i\sigma_{i+1}\tau'_{i+2}\cdots\tau'_{\ell-1}$ for some $\tau'_{i+1},\tau'_{i+2},\ldots,\tau'_{\ell-1}$. Hence, by repeating this procedure, we obtain the desired sequence of nodes that terminates at $\vz'$.

Since $\vz_j \in V$, we have $\wt(\vz_j; q^*) \in [w_1-1, w_2]$. As such, we consider
three possibilities to extend the sequence:
\begin{enumerate}
\item When $w_1-1 < \wt(\vz_j;q^*) < w_2$, we may simply change $\tau_{i+1}$ to $\sigma_{i+1}$
  and make no other changes, since the word $\vz_{j+1}$ produced this way still satisfies
  $\wt(\vz_{j+1}) \in [w_1-1, w_2]$ and is therefore a node.
\item When $\wt(\vz_j;q^*)=w_1-1$, $\tau_{i+1}\in[q-q^*,q-1]$ and $\sigma_{i+1}\notin[q-q^*,q-1]$, 
there exists some $\tau_k$ in $\vz_j$ that does not belong to $[q-q^*,q-1]$.
Otherwise, $\wt(\sigma_1\cdots\sigma_i;q^*)=w_1-\ell+i$ and 
so $\wt(\sigma_1\cdots\sigma_{i+1};q^*)=w_1-\ell+i$. 
Then, $\wt(\vz';q^*)\le w_1-2$, contradicting the fact that $\vz'\in V$.
Therefore, we have the sequence of nodes
\begin{align*}
 \vz_{j}&=\sigma_1\cdots \sigma_i\tau_{i+1}\tau_{i+2}\cdots\tau_k\cdots\tau_{\ell-1}, \\
 \vz_{j+1}&=\sigma_1\cdots \sigma_i\tau_{i+1}\tau_{i+2}\cdots (q-1) \cdots\tau_{\ell-1}, \\
 \vz_{j+2}&=\sigma_1\cdots \sigma_i\sigma_{i+1}\tau_{i+2}\cdots (q-1)\cdots\tau_{\ell-1}.
\end{align*}
\item When $\wt(\vz_j;q^*)=w_2$, $\tau_{i+1}\notin[q-q^*,q-1]$ and $\sigma_{i+1}\in[q-q^*,q-1]$, 
then there exists some $\tau_k$ in $\vz_j$ that belongs to $[q-q^*,q-1]$.
Otherwise, $\wt(\sigma_1\cdots\sigma_i;q^*)=w_2$ and 
so $\wt(\vz';q^*)\ge \wt(\sigma_1\cdots\sigma_{i+1};q^*)=w_2+1$, 
contradicting the fact that $\vz'\in V$.
Therefore, we have the sequence of nodes
\begin{align*}
 \vz_{j}&=\sigma_1\cdots \sigma_i\tau_{i+1}\tau_{i+2}\cdots\tau_k\cdots\tau_{\ell-1}, \\
 \vz_{j+1}&=\sigma_1\cdots \sigma_i\tau_{i+1}\tau_{i+2}\cdots 0 \cdots\tau_{\ell-1}, \\
 \vz_{j+2}&=\sigma_1\cdots \sigma_i\sigma_{i+1}\tau_{i+2}\cdots 0\cdots\tau_{\ell-1}.
\end{align*}
\end{enumerate}
\end{proof}
Consequently, $D(q,\ell;q^*,[w_1,w_2])$ is strongly connected. Together with Lemma \ref{lem:balanced},
this result establishes that $D(q,\ell;q^*,[w_1,w_2])$ is Eulerian.
\section{Proof of Corollary \ref{cor:aperiodic}}
\label{app:aperiodic}

We provide next a detailed proof of Corollary \ref{cor:aperiodic}. 
Specifically, we demonstrate Proposition \ref{prop:leading} from which the corollary follows directly.
For the case that $S=\bbracket{q}^\ell$, Jacquet \etal{} established a similar result 
by analyzing a sum of multinomial coefficients. This type of analysis appears to be to complex for a 
general choice of $S$.
 
\begin{prop}\label{prop:leading}
Suppose that $D(S)$ is strongly connected and that it contains loops.
Let $t=n-\ell+1$, $D=|S|-|V(S)|$ and let the lattice point enumerator of $\PP(S)$ 
be $L_{\PP(S)}(t)=c_{D}(t)t^{D}+O(t^{D-1})$. Then, $c_D(t)$ is constant.
\end{prop}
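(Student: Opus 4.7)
The plan is to exploit the presence of a loop in $D(S)$ to establish monotonicity of the enumerator and then combine this with the fact that $c_D(t)$ is already known to be periodic (by Ehrhart's theorem, Theorem \ref{thm:ehrhart}).

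First I would show that $L_{\PP(S)}(t)$ is monotonically non-decreasing in $t$. Let $\vz_0 \in S$ be a loop arc, and let $\vchi(\vz_0) \in \ZZ^{|S|}$ be its incidence vector. Since $\vz_0$ is a loop, its column in $\vB(D(S))$ is the zero vector, so $\vB(D(S))\vchi(\vz_0) = \vzero$, while $\vone^T \vchi(\vz_0) = 1$. Hence $\vA(S)\vchi(\vz_0) = \vb$. Consequently, if $\vu \in \F(n;S)$, then $\vu + \vchi(\vz_0) \in \F(n+1;S)$, and the map $\vu \mapsto \vu + \vchi(\vz_0)$ is clearly injective. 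This yields $|\F(n;S)| \leq |\F(n+1;S)|$, or equivalently $L_{\PP(S)}(t) \leq L_{\PP(S)}(t+1)$ for all positive integers $t$.

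Next I would combine monotonicity with periodicity. By Theorem \ref{thm:ehrhart} and Lemma \ref{dim:ps}, we can write $L_{\PP(S)}(t) = c_D(t) t^D + O(t^{D-1})$ where $c_D$ is periodic with some period $\lambda$ dividing $\lambda_S$. Suppose for contradiction that $c_D$ is not constant, so there exist residues $r_1, r_2 \in \{0,1,\ldots,\lambda-1\}$ with $c_D(r_1) < c_D(r_2)$. Consider sufficiently large $t_1 \equiv r_2 \pmod{\lambda}$ and $t_2 \equiv r_1 \pmod{\lambda}$ with $t_1 < t_2$ and $t_2 - t_1 \leq \lambda$. Then monotonicity forces
\[
c_D(r_2) t_1^D + O(t_1^{D-1}) = L_{\PP(S)}(t_1) \leq L_{\PP(S)}(t_2) = c_D(r_1) t_2^D + O(t_2^{D-1}).
\]
Dividing by $t_1^D$ and letting $t_1 \to \infty$ (with $t_2 - t_1$ bounded, so $t_2/t_1 \to 1$), we obtain $c_D(r_2) \leq c_D(r_1)$, contradicting $c_D(r_1) < c_D(r_2)$. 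Therefore $c_D$ is constant.

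The only subtlety is making sure that the arithmetic progressions of the two residue classes can be interleaved arbitrarily with the prescribed inequalities on $t_1, t_2$; this is immediate since both classes are infinite and the gap between consecutive elements in each class is exactly $\lambda$. The existence of a loop is essential here, as it provides the ``unit step'' certifying monotonicity; without it, $L_{\PP(S)}$ would only be monotonic along arithmetic progressions, which would be insufficient to rule out nonconstant periodic behavior of $c_D$.
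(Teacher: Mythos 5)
Your proposal is correct and follows essentially the same route as the paper: the identical monotonicity lemma via the loop arc $\vz_0$ (using $\vA(S)\vchi(\vz_0)=\vb$ to inject $\F(n;S)$ into $\F(n+1;S)$), followed by a contradiction between monotonicity and a nonconstant periodic leading coefficient. The paper phrases the final step by exhibiting an explicit degree-$D$ polynomial with positive leading coefficient rather than your divide-by-$t_1^D$-and-take-limits argument, but these are the same idea.
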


To prove this proposition, we use the following straightforward lemma. 

\begin{lem}\label{lem:monotone}
Suppose that $D(S)$ is strongly connected and that it contains loops.
For all $t$, we have $L_{\PP(S)}(t+1)\ge L_{\PP(S)}(t)$.
\end{lem}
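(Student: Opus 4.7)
The plan is to exhibit an injection from $\ZZ^{|S|}\cap t\PP(S)$ into $\ZZ^{|S|}\cap(t+1)\PP(S)$, which is precisely what is needed since $L_{\PP(S)}(t)=|\ZZ^{|S|}\cap t\PP(S)|$. The key observation I would exploit is that if $\va_0\in S$ indexes a loop in $D(S)$, then by the definition of the incidence matrix the column of $\vB(D(S))$ indexed by $\va_0$ is the zero vector. Consequently, the corresponding column of the augmented matrix $\vA(S)$ equals $\vb=(1,0,\ldots,0)^T$, since only the first (all-ones) row contributes a nonzero entry.

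With this observation in hand, I would define the map $\phi(\vu)\triangleq \vu+\ve_{\va_0}$, where $\ve_{\va_0}$ is the standard basis vector on coordinate $\va_0$. For any lattice point $\vu\in t\PP(S)$, the verification $\vA(S)\phi(\vu)=t\vb+\vb=(t+1)\vb$ and $\phi(\vu)\ge\vzero$ is immediate, so $\phi$ lands in $\ZZ^{|S|}\cap (t+1)\PP(S)$. Injectivity of $\phi$ is evident since it is a fixed translation by $\ve_{\va_0}$, and the inequality $L_{\PP(S)}(t+1)\ge L_{\PP(S)}(t)$ follows at once.

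There is no substantive obstacle here: the hypothesis that $D(S)$ contains a loop is used precisely to produce a single arc whose insertion preserves the flow conservation equations $\vB(D(S))\vu=\vzero$ while raising the total arc count $\vone^T\vu$ by exactly one. Strong connectedness of $D(S)$ plays no role in the argument. I would note in passing that without a loop this strategy would fail, since adding any non-loop arc disturbs the flow balance at two distinct nodes and compensating would require adding arcs along an entire closed walk---thereby increasing the total weight by more than one and breaking the step from $t$ to $t+1$.
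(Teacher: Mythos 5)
Your proof is correct and is essentially identical to the paper's: both fix a loop arc $\va_0$ and use the translation $\vu\mapsto\vu+\vchi(\va_0)$ as an injection from $\ZZ^{|S|}\cap t\PP(S)$ into $\ZZ^{|S|}\cap(t+1)\PP(S)$, relying on the fact that the loop's column of $\vB(D(S))$ is zero so that $\vA(S)\vchi(\va_0)=\vb$. Your additional remarks about why a loop is needed and strong connectedness being unused are accurate but not required.
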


\begin{proof}
  It suffices to show that there is an injection from $\F(n;S)$ to
  $\F(n+1;S)$.  Suppose that $\vu\in\F(n;S)$, so that
  $\vA(S)\vu=t\vb$.  Fix a loop in $D(S)$ and consider the vector
  $\vchi(\vz)$, where $\vz$ is the arc corresponding to the loop.
  Then, $\vA(S)\vchi(\vz)=\vb$ and $\vA(S)(\vu+\vchi(\vz))=(t+1)\vb$.
  So, the map $\vu\mapsto \vu+\vchi(\vz)$ is an injection from
  $\F(n;S)$ to $\F(n+1;S)$.
\end{proof}

\begin{proof}[Proof of Proposition \ref{prop:leading}]
Lemma \ref{lem:monotone} demonstrates that $L_{\PP(S)}$ is a monotonically increasing function.
Intuitively, this implies that the coefficient of its dominating term $c_{D}(t)$ cannot be periodic with period greater than $1$. We prove this claim formally in what follows.

Suppose that $c_D$ is not constant and that it has period $\tau$.
Hence, there exists $t_a\not\equiv t_b \bmod \tau$ such that $c_D(t_a)=a_D$, $c_D(t_b)=b_D$  
and $a_D<b_D$.
Furthermore, define $a_i=c_i(t_a)$ and $b_i=c_i(t_b)$ for $0\le i\le D-1$, and consider the polynomial $\sum_{i=0}^D b_it^i-a_i(t+\tau)^i$.
By construction, this polynomial has degree $D$ and a positive leading coefficient.
Hence, we can choose $t_1\equiv t_a\bmod \tau$ and $t_2\equiv t_b\bmod \tau$ so that
$t_1\le t_2\le t_1+\tau$ and $\sum_{i=0}^D b_it_2^i-a_i(t_1+\tau)^i> 0$.
Consequently,
\[L_{\PP(S)}(t_1+\tau)=\sum_{i=0}^D c_i(t_1+\tau)(t_1+\tau)^i=\sum_{i=0}^D a_i(t_1+\tau)^i
<\sum_{i=0}^D b_it_2^i=L_{\PP(S)}(t_2),\]
\noindent contradicting the monotonicity of $L_{\PP(S)}$.
\end{proof}

\section{Properties of the Polytope $\Pgrc(\vH,S)$}
\label{app:pgrc}

We derive properties of the polytope $\Pgrc(\vH,S)$ described in Section \ref{sec:intersect}. 
In particular, under the assumption that $D(S)$ is strongly connected and $\C(\vH,\vzero)\cap \nullplus\vB(D(S))$ is nonempty, we demonstrate the following:
\begin{enumerate}[(C1)]
\item The dimension of the polytope $\Pgrc(\vH,S)$ is $|S|-|V(S)|$;
\item The interior of the polytope is given by $\{\vu\in\RR^{|S|+d}: \vA(\vH,S)\vu=\vb, \vu>\vzero\}$;
\item The vertex set of the polytope is given by
\[\left\{\left(\frac{\vchi(C)}{|C|},\frac{\vH\vchi(C)}{p|C|}\right): C \mbox{ is a cycle in }D(S)\right\}.\]
\end{enumerate}

 Since $\C(\vH,\vzero)\cap \nullplus\vB(D(S))$ is nonempty, let $\vu_0$ belong to this intersection.
 Then $\vH\vu_0\equiv \vzero \bmod p$, that is, $\vH\vu_0=p\vbeta$ for some $\vbeta>0$.
 Let $\mu = \vone\vu_0$. 
 If we set $\vu=\frac{1}{\mu}(\vu_0,\vbeta)$, then $\vA(\vH,S)\vu=\vb$, with $\vu>0$.
 
Observe that the block structure of $\vA(\vH,S)$ implies that it has rank $|V(S)|+d$.
Hence, the nullity of $\vA(\vH,S)$ is $|S|-|V(S)|$. 
As before, let $\vu_1,\vu_2,\ldots$, $\vu_{|S|-|V(S)|}$ be linearly independent vectors
that span the null space of $\vA(\vH,S)$.
Since $\vu$ has strictly positive entries, we can find $\epsilon$ small enough so that 
$\vu+\epsilon\vu_i$ belongs to $\Pgrc(\vH,S)$ for all $ i\in [|S|-|V(S)|]$. 
Therefore, $\{\vu,\vu+\epsilon\vu_1,\vu+\epsilon\vu_2,\ldots, \vu+\epsilon\vu_{|S|-|V(S)|}\}$ 
is a set of $|S|-|V(S)|+1$ affinely independent points in $\Pgrc(\vH,S)$. This proves claim (C1).

For the interior of  $\Pgrc(\vH,S)$, first consider $\vu'>\vzero$ such that $\vA(\vH,S)\vu'=\vb$.
For any $\vu''\in\Pgrc(\vH,S)$, we have $\vA(\vH,S)\vu''=\vb$ 
and hence, $\vA(\vH,S)(\vu'-\vu'')=\vzero$.
Since $\vu'$ has strictly positive entries, we choose $\epsilon$ small enough so that 
$\vu'+\epsilon(\vu'-\vu'')\ge\vzero$.
Therefore, $\vu'+\epsilon(\vu'-\vu'')$ belongs to $\Pgrc(\vH,S)$ and 
$\vu'$ belongs to the interior of $\Pgrc(\vH,S)$. 

Conversely, let $\vu'\in\Pgrc(\vH,S)$ with $u'_j=0$ for some coordinate $j$.
Let $\vu$ be as defined earlier, where $\vu\in\Pgrc(\vH,S)$ with $\vu>\vzero$. 
Hence, for all $\epsilon>0$, the $j$th coordinate of $\vu'+\epsilon(\vu'-\vu)$ 
is given by $-\epsilon u_j$, which is always negative. 
In other words, $\vu'$ does not belong to interior of $\Pgrc(\vH,S)$.
This characterizes the interior as described in claim (C2).

For the vertex set, observe that $\left\{\left(\frac{\vchi(C)}{|C|},\frac{\vH\vchi(C)}{p|C|}\right): C \mbox{ is a cycle in }D(S)\right\}\subseteq \Pgrc(\vH,S)$. 

Let $\vv\in\Pgrc(\vH,S)$ and suppose that $\vv=(\vv_1,\vv_2)$ is a vertex.
Since $\vv\in\Pgrc(\vH,S)$, we have $\vv_2=\frac1p \vH\vv_1$
and $\vB(D(S))\vv_1=\vzero$.
Proceeding as in the proof of Lemma \ref{lem:vertex-ps}, 
we conclude that $\vv_1=\vchi(C)/|C|$, for some cycle in $D(S)$ and 
hence, $\vv=\left(\frac{\vchi(C)}{|C|},\frac{\vH\vchi(C)}{p|C|}\right)$.

Conversely, we show that  for any cycle $C$ in $D(S)$,
$\left(\frac{\vchi(C)}{|C|},\frac{\vH\vchi(C)}{p|C|}\right)$
cannot be expressed as a convex combination of other points in $\Pgrc(\vH,S)$.
 Suppose otherwise. Then we consider the first $|S|$ coordinates and 
 we proceed as in the proof of Lemma \ref{lem:vertex-ps} to yield a contradiction.
This completes the proof of claim (C3).

%

{
\section{Relabelling of Nodes in Proof of Theorem \ref{thm:sys}}\label{app:relabelling}

In this section, we demonstrate the existence of a cyclic relabelling of nodes 
that is necessary for the proof of Theorem \ref{thm:sys}.
In particular, we prove the following lemma.

\begin{lem}
Let $v$ be a positive integer, and $r_1,r_2,\ldots,r_v$ be $v$ real values 
such that $\sum_{i=1}^v r_i=0$. For convenience, we let $r_{v+i}=r_i$ for $1\le i\le v-1$.
Then there exists $1\le J\le v$ such that $\sum_{i=0}^{j} r_{J+i} \ge 0$ for all $0\le j\le v-1$.
\end{lem}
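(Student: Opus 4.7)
The plan is to choose $J$ based on where the partial sums of the sequence $r_1,\ldots,r_v$ attain their minimum, and then verify the two cases (no wrap-around, and wrap-around) separately.

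First, I would define the partial sums $S_0=0$ and $S_k=\sum_{i=1}^{k}r_i$ for $1\le k\le v$, noting that the hypothesis $\sum_{i=1}^v r_i = 0$ gives $S_v=0$. Let $k^\ast$ be the least index in $\{0,1,\ldots,v-1\}$ at which $S_k$ attains its minimum value over that range, and set $J = k^\ast+1$. The intuition is that starting the cyclic traversal right after the ``lowest point'' guarantees that every subsequent prefix sum only goes up relative to that low point.

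Next I would verify the required inequality $\sum_{i=0}^{j} r_{J+i}\ge 0$ for each $0\le j\le v-1$ by splitting into two cases. In the non-wrap-around case $k^\ast+1+j\le v$, the sum telescopes as $S_{k^\ast+1+j}-S_{k^\ast}$, which is nonnegative by the minimality of $S_{k^\ast}$. In the wrap-around case $k^\ast+1+j>v$, set $m=k^\ast+1+j-v$ so that $1\le m\le k^\ast$; then
\[
\sum_{i=0}^{j} r_{J+i} \;=\; \sum_{i=J}^{v} r_i + \sum_{i=1}^{m} r_i \;=\; (S_v-S_{k^\ast}) + S_m \;=\; S_m - S_{k^\ast},
\]
which is again nonnegative because $m\in\{1,\ldots,k^\ast\}\subseteq\{0,\ldots,v-1\}$ and $S_{k^\ast}$ is the minimum over that range.

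There is no real obstacle here beyond bookkeeping: the only subtlety is making sure the index ranges line up in the wrap-around case, in particular that $m$ falls in the range over which $k^\ast$ was chosen to be the minimizer. Choosing $k^\ast$ as the \emph{least} minimizing index (rather than an arbitrary one) is convenient but not strictly necessary; any minimizer in $\{0,\ldots,v-1\}$ works because $S_v=S_0=0$ forces the boundary cases to agree.
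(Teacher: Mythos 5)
Your proof is correct and uses essentially the same idea as the paper: pick $J$ to start immediately after the index where the prefix sums attain their minimum, so every cyclic partial sum is a difference $S_m - S_{k^\ast}\ge 0$. The only cosmetic difference is that the paper extends the partial sums over a doubled index range $1\le j\le 2v-1$ to absorb the wrap-around, whereas you minimize over a single period and handle the wrap-around by an explicit case split; both are sound.
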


\begin{proof}
For $1\le j\le 2v-1$, let $R_j=\sum_{i=0}^j r_{i}$ and observe that $R_v=0$.
Let $J$ be such that $R_J=\min\{R_j : 1\le j\le 2v-1\}$. 
Since $R_v=0$, we have $R_{i+v}=R_i$ for all $1\le i\le v-1$ and 
hence, we may assume $1\le J\le v$.

Next, we claim that $J$ is the desired index. 
Indeed, for all $0\le j\le v-1$, observe that
\[\sum_{i=0}^{j} r_{J+i} =R_v+\sum_{i=0}^{j} r_{J+i}=R_{J+j}-R_J\ge 0,\]
where the final inequality follows from the minimality of $R_J$. 
\end{proof}
}
\bibliographystyle{IEEEtran}
\bibliography{mybibliography}

\begin{thebibliography}{10}
\providecommand{\url}[1]{#1}
\csname url@samestyle\endcsname
\providecommand{\newblock}{\relax}
\providecommand{\bibinfo}[2]{#2}
\providecommand{\BIBentrySTDinterwordspacing}{\spaceskip=0pt\relax}
\providecommand{\BIBentryALTinterwordstretchfactor}{4}
\providecommand{\BIBentryALTinterwordspacing}{\spaceskip=\fontdimen2\font plus
\BIBentryALTinterwordstretchfactor\fontdimen3\font minus
  \fontdimen4\font\relax}
\providecommand{\BIBforeignlanguage}[2]{{%
\expandafter\ifx\csname l@#1\endcsname\relax
\typeout{** WARNING: IEEEtran.bst: No hyphenation pattern has been}%
\typeout{** loaded for the language `#1'. Using the pattern for}%
\typeout{** the default language instead.}%
\else
\language=\csname l@#1\endcsname
\fi
#2}}
\providecommand{\BIBdecl}{\relax}
\BIBdecl

\bibitem{Kiah.etal:2015ITW}
H.~M. Kiah, G.~J. Puleo, and O.~Milenkovic, ``Coding for {DNA} storage
  channels,'' in \emph{IEEE Inform. Theory Workshop}, Jerusalem, 2015, pp.
  1--5.

\bibitem{Kiah.etal:2015ISIT}
------, ``Codes for {DNA} sequence profiles,'' in \emph{Proc. IEEE Intl. Symp.
  Inform. Theory}, Hong Kong, 2015, pp. 841--818.

\bibitem{kannan2005more}
S.~Kannan and A.~McGregor, ``More on reconstructing strings from random traces:
  insertions and deletions,'' in \emph{Proc. IEEE Intl. Inform. Theory}.\hskip
  1em plus 0.5em minus 0.4em\relax IEEE, 2005, pp. 297--301.

\bibitem{acharya2010reconstructing}
J.~Acharya, H.~Das, O.~Milenkovic, A.~Orlitsky, and S.~Pan, ``On reconstructing
  a string from its substring compositions,'' in \emph{Proc. IEEE Intl. Symp.
  Inform. Theory}.\hskip 1em plus 0.5em minus 0.4em\relax IEEE, 2010, pp.
  1238--1242.

\bibitem{acharya2014quadratic}
------, ``Quadratic-backtracking algorithm for string reconstruction from
  substring compositions,'' in \emph{Proc. IEEE Intl. Symp. Inform.
  Theory}.\hskip 1em plus 0.5em minus 0.4em\relax IEEE, 2014, pp. 1296--1300.

\bibitem{Church.etal:2012}
G.~M. Church, Y.~Gao, and S.~Kosuri, ``Next-generation digital information
  storage in {DNA},'' \emph{Science}, vol. 337, no. 6102, pp. 1628--1628, 2012.

\bibitem{Goldman.etal:2013}
N.~Goldman, P.~Bertone, S.~Chen, C.~Dessimoz, E.~M. LeProust, B.~Sipos, and
  E.~Birney, ``Towards practical, high-capacity, low-maintenance information
  storage in synthesized {DNA},'' \emph{Nature}, 2013.

\bibitem{Ma.etal:pending}
J.~Ma, O.~Milenkovic, and H.~Zhao, ``{S}trategic {R}esearch {I}nitiative
  ({SRI}) grant, {R}ewritable {DNA} storage,'' Patent, pending.

\bibitem{yazdi2015dna}
S.~Yazdi, H.~M. Kiah, E.~R. Garcia, J.~Ma, H.~Zhao, and O.~Milenkovic,
  ``Dna-based storage: Trends and methods,'' \emph{arXiv preprint
  arXiv:1507.01611}, 2015.

\bibitem{Medvedev.etal:2007}
P.~Medvedev, K.~Georgiou, G.~Myers, and M.~Brudno, ``Computability of models
  for sequence assembly,'' in \emph{Algorithms in Bioinformatics}.\hskip 1em
  plus 0.5em minus 0.4em\relax Springer, 2007, pp. 289--301.

\bibitem{Compeau.etal:2011}
P.~E. Compeau, P.~A. Pevzner, and G.~Tesler, ``How to apply de {B}ruijn graphs
  to genome assembly,'' \emph{Nature biotechnology}, vol.~29, no.~11, pp.
  987--991, 2011.

\bibitem{wan2014error}
W.~Wan, L.~Lulu, Q.~Xu, Z.~Wang, Y.~Yao, R.~Wang, J.~Zhang, H.~Liu, X.~Gao, and
  J.~Hong, ``Error removal in microchip-synthesized dna using immobilized
  muts,'' \emph{Nucleic acids research}, p. gku405, 2014.

\bibitem{quail2012tale}
M.~A. Quail, M.~Smith, P.~Coupland, T.~D. Otto, S.~R. Harris, T.~R. Connor,
  A.~Bertoni, H.~P. Swerdlow, and Y.~Gu, ``A tale of three next generation
  sequencing platforms: comparison of ion torrent, pacific biosciences and
  illumina miseq sequencers,'' \emph{BMC genomics}, vol.~13, no.~1, p. 341,
  2012.

\bibitem{ross2013characterizing}
M.~G. Ross, C.~Russ, M.~Costello, A.~Hollinger, N.~J. Lennon, R.~Hegarty,
  C.~Nusbaum, and D.~B. Jaffe, ``Characterizing and measuring bias in sequence
  data,'' \emph{Genome Biol}, vol.~14, no.~5, p. R51, 2013.

\bibitem{Jacquet.etal:2012}
P.~Jacquet, C.~Knessl, and W.~Szpankowski, ``Counting {M}arkov types, balanced
  matrices, and {E}ulerian graphs,'' \emph{IEEE Trans. Inform. Theory},
  vol.~58, no.~7, pp. 4261--4272, 2012.

\bibitem{yakovchuk2006base}
P.~Yakovchuk, E.~Protozanova, and M.~D. Frank-Kamenetskii, ``Base-stacking and
  base-pairing contributions into thermal stability of the dna double helix,''
  \emph{Nucleic acids research}, vol.~34, no.~2, pp. 564--574, 2006.

\bibitem{Nakamura.etal:2011}
K.~Nakamura, T.~Oshima, T.~Morimoto, S.~Ikeda, H.~Yoshikawa, Y.~Shiwa,
  S.~Ishikawa, M.~C. Linak, A.~Hirai, H.~Takahashi, M.~Altaf-Ul-Amin,
  N.~Ogasawara, and S.~Kanaya, ``Sequence-specific error profile of {I}llumina
  sequencers,'' \emph{Nucleic acids research}, p. gkr344, 2011.

\bibitem{Klove:1981}
T.~Kl{\o}ve, \emph{Error correcting codes for the asymmetric channel}.\hskip
  1em plus 0.5em minus 0.4em\relax Department of Pure Mathematics, University
  of Bergen, 1981.

\bibitem{ukkonen1992approximate}
E.~Ukkonen, ``Approximate string-matching with $q$-grams and maximal matches,''
  \emph{Theoretical computer science}, vol.~92, no.~1, pp. 191--211, 1992.

\bibitem{Bollobas:1998}
B.~Bollob{\'a}s, \emph{Modern graph theory}.\hskip 1em plus 0.5em minus
  0.4em\relax Springer, 1998, vol. 184.

\bibitem{Bruijn:1946}
N.~G. de~Bruijn, ``A combinatorial problem,'' \emph{Koninklijke Nederlandse
  Akademie v. Wetenschappen}, vol.~49, no.~49, pp. 758--764, 1946.

\bibitem{Ruskey.etal:2012}
F.~Ruskey, J.~Sawada, and A.~Williams, ``De {B}ruijn sequences for fixed-weight
  binary strings,'' \emph{SIAM Journal on Discrete Mathematics}, vol.~26,
  no.~2, pp. 605--617, 2012.

\bibitem{Beck.Robins:2007}
M.~Beck and S.~Robins, \emph{Computing the continuous discretely: Integer-point
  enumeration in polyhedra}.\hskip 1em plus 0.5em minus 0.4em\relax Springer,
  2007.

\bibitem{Ahuja.etal:1993}
R.~K. Ahuja, T.~L. Magnanti, and J.~B. Orlin, \emph{Network flows: theory,
  algorithms, and applications}.\hskip 1em plus 0.5em minus 0.4em\relax
  Prentice Hall, 1993.

\bibitem{Ehrhart:1962}
E.~Ehrhart, ``Sur les poly\'edres rationnels homoth{\'e}tiques \'a $n$
  dimensions,'' \emph{CR Acad. Sci. Paris}, vol. 254, pp. 616--618, 1962.

\bibitem{Stanley:2011}
R.~P. Stanley, \emph{Enumerative combinatorics}.\hskip 1em plus 0.5em minus
  0.4em\relax Cambridge university press, 2011, vol.~1.

\bibitem{Macdonald:1971}
I.~G. Macdonald, ``Polynomials associated with finite cell-complexes,''
  \emph{J. London Math. Society}, vol.~2, no.~1, pp. 181--192, 1971.

\bibitem{Varshamov:1973}
R.~Varshamov, ``A class of codes for asymmetric channels and a problem from the
  additive theory of numbers,'' \emph{IEEE Trans. Inform. Theory}, vol.~19,
  no.~1, pp. 92--95, 1973.

\bibitem{Chebychev:1850}
P.~L. {\v{C}}eby{\v{s}}ev, \emph{M{\'e}moire sur les nombres premiers}.\hskip
  1em plus 0.5em minus 0.4em\relax Acad{\'e}mie Imp{\'e}riale des Sciences,
  1850.

\bibitem{Agrawal:2004}
M.~Agrawal, N.~Kayal, and N.~Saxena, ``Primes is in p,'' \emph{Annals of
  mathematics}, pp. 781--793, 2004.

\bibitem{Baldoni.etal:2014}
V.~Baldoni, N.~Berline, J.~A. De~Loera, B.~Dutra, M.~K{\"o}ppe, S.~Moreinis,
  G.~Pinto, M.~Vergne, and J.~Wu, ``A user's guide for latte integrale v1.
  7.1,'' \emph{Optimization}, vol.~22, p.~2, 2014.

\bibitem{Barvinok:1994}
A.~I. Barvinok, ``A polynomial time algorithm for counting integral points in
  polyhedra when the dimension is fixed,'' \emph{Mathematics of Operations
  Research}, vol.~19, no.~4, pp. 769--779, 1994.

\bibitem{Hierholzer:1873}
C.~Hierholzer, ``{\"U}ber die {M}{\"o}glichkeit, einen {L}inienzug ohne
  {W}iederholung und ohne {U}nterbrechung zu umfahren,'' \emph{Mathematische
  Annalen}, vol.~6, no.~1, pp. 30--32, 1873.

\bibitem{Pevzner.Lipshutz:1994}
P.~A. Pevzner and R.~J. Lipshutz, ``Towards {DNA} sequencing chips,'' in
  \emph{Mathematical Foundations of Computer Science 1994}.\hskip 1em plus
  0.5em minus 0.4em\relax Springer, 1994, pp. 143--158.

\bibitem{tan2013sets}
S.~Tan and J.~Shallit, ``Sets represented as the length-n factors of a word,''
  in \emph{Combinatorics on Words}.\hskip 1em plus 0.5em minus 0.4em\relax
  Springer, 2013, pp. 250--261.

\bibitem{Heinrich:1993}
K.~Heinrich, ``Path decomposition,'' \emph{Le Matematiche}, vol.~47, no.~2, pp.
  241--258, 1993.

\bibitem{Bryant.El-Zanati:2007}
D.~Bryant and S.~El-Zanati, \emph{Graph decompositions}, 2nd~ed.\hskip 1em plus
  0.5em minus 0.4em\relax Chapman \& Hall/CRC, 2007, ch. VI.24, pp. 477--486.

\bibitem{cooper2004generalized}
J.~N. Cooper and R.~L. Graham, ``Generalized de {B}ruijn cycles,'' \emph{Annals
  of Combinatorics}, vol.~8, no.~1, pp. 13--25, 2004.

\bibitem{Jiang.etal:2009}
A.~Jiang, R.~Mateescu, M.~Schwartz, and J.~Bruck, ``Rank modulation for flash
  memories,'' \emph{IEEE Trans. Inform. Theory}, vol.~55, no.~6, pp.
  2659--2673, 2009.

\bibitem{Barg.Mazumdar:2010}
A.~Barg and A.~Mazumdar, ``Codes in permutations and error correction for rank
  modulation,'' \emph{IEEE Trans. Inform. Theory}, vol.~56, no.~7, pp.
  3158--3165, 2010.

\bibitem{Farnoud.etal:2013}
F.~Farnoud, V.~Skachek, and O.~Milenkovic, ``Error-correction in flash memories
  via codes in the {Ulam} metric,'' \emph{IEEE Trans. Inform. Theory}, vol.~59,
  no.~5, pp. 3003--3020, 2013.

\bibitem{Farnoud.Milenkovic:2014}
F.~Farnoud and O.~Milenkovic, ``Multipermutation codes in the {Ulam} metric for
  nonvolatile memories,'' \emph{Selected Areas in Communications, IEEE Journal
  on}, vol.~32, no.~5, pp. 919--932, 2014.

\end{thebibliography}
\end{document}